\title{Right-Adjoints for Datalog Programs, and Homomorphism Dualities over Restricted Classes}
\titlerunning{Right-Adjoints for Datalog Programs, and Homomorphism Dualities}
\authorrunning{Balder ten Cate, V\'{i}ctor Dalmau, and Jakub Opr\v{s}al}
\author{Balder ten Cate}{Institute for Logic, Language, and Computation, University of Amsterdam, The Netherlands}{b.d.tencate@uva.nl}{https://orcid.org/0000-0002-2538-5846}{Supported by the European Union’s Horizon 2020 research and innovation programme (MSCA-101031081).}
\author{V\'{i}ctor Dalmau}{Department of Information and Communication Technologies, Universitat Pompeu Fabra, Spain}{victor.dalmau@upf.edu}{https://orcid.org/0000-0002-9365-7372}{}
\author{Jakub Opr\v{s}al}
       {Institute of Science and Technology Austria, Klosterneuburg, Austria}
       {jakub.oprsal@ist.ac.at}
       {https://orcid.org/0000-0003-1245-3456}
       {Supported by the European Union’s Horizon 2020 research and innovation programme under the Marie Skłodowska-Curie Grant Agreement No 101034413.}
\keywords{Datalog, Adjoints, Homomorphism Dualities, Database Constraints, Conjunctive Queries}
\theoremstyle{definition}
\newtheorem{contribution}{Contribution}
\theoremstyle{claimstyle}
\newtheorem{myclaim}{Claim}
\newtheorem{subclaim}{Subclaim}
\newcommand{\chase}{\textrm{chase}}
\newcommand{\edatalog}{\text{$\exists$Datalog}\xspace}
\newcommand{\bfS}{\textbf{S}}
\newcommand{\bfX}{\textbf{X}}
\newcommand{\colondash}{\mathrel{{:}{-}}} %
\newcommand{\bx}{\mathbf{x}}
\newcommand{\by}{\mathbf{y}}
\newcommand{\Inst}{\text{\rm Inst}}
\newcommand{\Instinf}{\text{\rm Inst$^\infty$}}
\newcommand{\Unfoldings}{\operatorname{Unfoldings}}
\newcommand{\upclosure}{\mathop{\!\uparrow}}
\newcommand{\downclosure}{\mathop{\!\downarrow}}
\let\paragr\subparagraph
\begin{document}

\maketitle

\begin{abstract}
A Datalog program can be viewed as a syntactic specification of a functor from database instances over some schema to database instances over another schema.
The same holds more generally for 
\edatalog. We establish large classes
of Datalog and $\edatalog$ programs for which the corresponding functor admits a generalized right-adjoint. We employ these results to obtain new insights into the existence of and methods for constructing homomorphism dualities within restricted classes of instances. We also derive new results regarding the existence of uniquely characterizing data examples for database queries.
\end{abstract}

\section{Introduction}

Datalog is a rule-based language for specifying mappings from database
instances over an input schema $\bfS_{in}$, to database instances over an output schema
$\bfS_{out}$.

\begin{example}\label{ex:tc}
Consider the Datalog program defined by the 
following rules:
\begin{align*}
Path(x,y) &\colondash Edge(x,y). \\
Path(x,y) &\colondash Edge(x,z), Path(z,y). \\
Ans(x,y) &\colondash Path(x,y).
\end{align*}
This Datalog program takes as input an  instance over an input schema $\{Edge\}$, and produces as output an  instance 
over the schema $\{Ans\}$, where $Ans$ is the transitive
closure of $Edge$. 
\end{example}

Using terminology from category theory, a Datalog program defines a functor from $\Inst[\bfS_{in}]$ to $\Inst[\bfS_{out}]$, where $\Inst[\bfS]$ denotes the category of all database instances over schema $\bfS$, with homomorphisms as the arrows.

We study the existence of right-adjoints and generalized right-adjoints for such a functor. We define a \emph{right-adjoint} for an arbitrary functor $F:X\to Y$ as a functor $G:Y\to X$ such that, for all $A\in X$ and $B\in Y$,  $F(A)\to B$ iff $A\to G(B)$. 
Loosely speaking, \emph{generalized right-adjoints} are defined similarly, except that we allow $G$ to map an object $B\in Y$ to a \emph{finite set} of objects in $X$, 
 such that, for all $A\in X$ and $B\in Y$,  $F(A)\to B$ iff $A\to B'$ for some $B'\in G(B)$.
As it turns out, the Datalog program $P$ from Example~\ref{ex:tc} has a right-adjoint. There are also Datalog programs that do not have a right-adjoint but that have a generalized right-adjoint, and Datalog programs that do not have a generalized right-adjoint.

\begin{contribution}[Section~\ref{sec:tam}]
We introduce a new fragment of 
Datalog called \emph{TAM Datalog} (which stands for \emph{Tree-Shaped Almost-Monadic Datalog}). We  characterize TAM Datalog semantically
as a fragment of Monadic Second-Order Logic, and we prove that TAM Datalog
is closed under composition.
\end{contribution}

\begin{contribution}[Section~\ref{sec:adjoints}]
We show that every connected TAM Datalog program has a right-adjoint, 
and that every TAM Datalog program has a generalized right-adjoint.
We show by means of counterexamples that each of the syntactic conditions
imposed by TAM Datalog is necessary for the existence of generalized right-adjoints.
We also
identify a larger fragment of \edatalog (the extension of Datalog with existential quantifiers) that admits generalized right-adjoints
\end{contribution}

Our motivation for studying (generalized) right-adjoints comes from the
fact that they provide us with a means of constructing homomorphism dualities.
A homomorphism duality is a pair $(F,D)$ where $F$ and $D$ are sets of
instances, such that an arbitrary instance $A$ admits a homomorphism
from a instance in $F$ if and only if $A$ does not admit a homomorphism
to any instance in $D$. In other words, homomorphism dualities 
equate the existence of a homomorphism of one kind to the non-existence
of a homomorphism of another kind. Homomorphism dualities have been 
studied extensively in the literature on constraint satisfaction problems,
and have also found applications elsewhere, e.g., in database theory
and knowledge representation.
\looseness=-1

\begin{contribution}[Section~\ref{sec:dualities}]
We show that generalized right-adjoints can be used to construct homomorphism dualities, and we obtain new results regarding the existence of finite homomorphism dualities within restricted classes of instances, e.g., transitive digraphs.
\end{contribution}

\begin{contribution}[Section~\ref{sec:applications}]
In~\cite{AlexeCKT2011,tCD2022:conjunctive}, 
homomorphism dualities are used as a tool
for studying the unique characterizability,
and exact learnability, of conjunctive queries 
and unions of conjunctive queries. Following
this approach, we derive new results on
the unique characterizability of unions of 
conjunctive queries in the presence of a 
background theory, addressing an open question
from~\cite{tCD2022:conjunctive}.
\end{contribution}

\paragr{Related Work}
Foniok and Tardif \cite{Foniok2015:functors} studied existence of right adjoints to Pultr functors which are themselves right adjoints  \cite{Pultr70} in the special case of digraphs. Translating into our terms a Pultr functor is a interpretation (of digraphs in digraphs) $(\phi_V,\phi_E)$ where $\phi_V$ and $\phi_E$ are conjunctive queries (with $k$ and $2k$ free variables, respectively, for some $k\geq 1$) defining the output node-set and edge-set respectively. 
For the special case where $\phi_V$ just returns the input node-set, it was shown in \cite{Foniok2015:functors} that the functor defined by $(\phi_V, \phi_E)$ has a right adjoint if and only if $\phi_E$ is connected and acyclic.
The setup and characterization were generalized in \cite{DalmauKO} to  arbitrary relational structures. We extensively build on the framework in and concepts in~\cite{DalmauKO}, but we permit the interpretation to be specified by an arbitrary $\edatalog$ program, so that our setup is able to encompass common types of database dependencies such as inclusion dependencies.
Our set-up based on \edatalog can be viewed as a generalization of the one in  \cite{DalmauKO} (cf.~Appendix~\ref{app:pultr}).

To the best of our knowledge, this is the first time that functors defined by Datalog programs have been studied. Also, it is the first application of functors with right adjoints in the context of unique characterization of UCQs. In a different setting, namely approximate graph coloring, the so-called arc graph functor was used in \cite{KrokhinOWZ20} where it is additionally argued that functors with right adjoint can, more generally, play a role in the design and analysis of reductions between promise constraint satisfaction problems. The use of Datalog programs for reductions between such problems is discussed in \cite{DalmauO23}.%

\section{Preliminaries}
\label{sec:prel}

\paragr{Schemas, Instances, Homomorphisms}
A \emph{schema} $\bfS$ is a finite collection of relation symbols $R$ with
specified arity $arity(R)\geq 0$. An $\bfS$-\emph{instance} $I$ is a 
 set of facts, where a fact is an expression of the form
$R(a_1, \ldots, a_n)$ with $R\in\bfS$ and $n=arity(R)$. 
Unless specified otherwise, instances are always assumed to be finite.
The  \emph{active domain} $adom(I)$ of $I$ is the set of all values $a_i$ occurring in the facts of $I$.
A \emph{homomorphism} $h:I\to J$, where $I$ and $J$ are instances over the 
same schema $\bfS$, is a function from $adom(I)$ to $adom(J)$ such that
the $h$-image of every fact of $I$ is a fact of $J$.

We will denote by 
$\Inst[\bfS]$  the set of all finite $\bfS$-instances, and define $\Instinf[\bfS]$ similarly, except allowing also infinite
$\bfS$-instances.
Category theoretically, we can view $\Inst[\bfS]$ and $\Instinf[\bfS]$ also as categories. In 
this case, the objects are the instances and the arrows are homomorphisms. 
This will allow us to speak, for example, of
\emph{functors} from $\Inst[\bfS]$ to $\Inst[\bfS']$.

A $k$-ary \emph{pointed $\bfS$-instance} (for $k\geq 0$) is a 
pair $(I,\textbf{a})$ where $I$ is an $\bfS$-instance and 
$\textbf{a}$ a $k$-tuple of elements of $adom(I)$,
called \emph{distinguished elements}. A
homomorphism $h:(I,\textbf{a})\to (J,\textbf{b})$ is
a homomorphism $h:I\to J$ such that $h(\textbf{a})=\textbf{b}$.

\paragr{Incidence Graph, Connectedness, C-Acyclicity}
The \emph{incidence graph} of an instance $I$
is the bipartite multi-graph whose nodes are the elements and 
the facts of $I$, and where there is a distinct (undirected) edge
$(a,f)$ for every occurrence of the element $a$ in the fact.
We say that an instance is \emph{connected} if its incidence graph
is connected, and an instance is \emph{acyclic} if its
its incidence graph is acyclic.
A pointed instance $(I,\textbf{a})$ is \emph{c-acyclic} if 
every cycle in the incidence graph of $I$ contains
at least one element from the tuple $\textbf{a}$.

\paragr{Datalog}
A Datalog program is specified by a collection of rules,
and it defines a mapping from instances over a schema
$\bfS_{in}$ (traditionally known as the EDB schema)
to instances over a schema $\bfS_{out}$ (traditionally
known as the IDB schema). The presentation we will give here also
allows for auxiliary IDB relations that are not exposed in the
output schema.

\begin{definition}[Datalog program]
A \emph{Datalog program} is a tuple 
$P=(\bfS_{in},\bfS_{out},\bfS_{aux},\Sigma)$ where
$\bfS_{in},\bfS_{out},\bfS_{aux}$ are mutually disjoint schemas, 
and $\Sigma$ is a set of rules of the form
\[ R_0(\textbf{x}_0) \colondash R_1(\textbf{x}_1), \ldots, R_n(\textbf{x}_n) \]
where 
$R_0\in \bfS_{out}\cup\bfS_{aux}$, 
$R_1, \ldots, R_n\in \bfS_{in}\cup\bfS_{aux}$, 
and $\{\textbf{x}_0\}\subseteq \{\textbf{x}_1, \ldots, \textbf{x}_n\}$.
 \end{definition}

If $P$ is a Datalog program, then we will use often
use the notation $\bfS_{in}^P$, $\bfS_{out}^P$, $\bfS_{aux}^P$, and $\Sigma^P$ to refer to the 
constituents of the tuple $P$. 

The \emph{head} of a rule is the part 
to the left of the $\colondash$ sign, and
the \emph{body} is the part to the right.
The \emph{canonical instance} of a Datalog rule
$R_0(\textbf{x}_0) \colondash R_1(\textbf{x}_1), \ldots, R_n(\textbf{x}_n)$
is the pointed instance whose active domain is $\{\textbf{x}_1, \ldots, \textbf{x}_n\}$,
whose facts are the conjuncts of the rule body, and whose sequence of distinguished
elements is the tuple $\textbf{x}_0$.
 We say that a Datalog program $P$ is  \emph{connected} if the 
 canonical instance of each rule is connected.

If $P$ is a Datalog program and $I$ a $\bfS_{in}^P$-instance, then
a \emph{solution} for $I$ with respect to $P$ is an
instance $J$ over the schema $\bfS_{in}\cup\bfS_{out}\cup\bfS_{aux}$ 
such that $I\subseteq J$, and such that all the rules of $P$ are 
satisfied in $J$ (i.e., whenever the body of a rule is satisfied, then so is the head). The well-known 
\emph{chase} procedure provides a method for constructing
a solution: given a Datalog program $P$
and an $\bfS_{in}^P$-instance $I$, we denote by $\chase_P(I)$ the 
$\bfS_{in}^P\cup\bfS_{out}^P\cup\bfS_{aux}^P$-instance
obtained from $I$ by applying all rules until convergence.
More precisely, $\chase_P(I)$ can be defined as the infinite
union $\bigcup_{i\geq 0} \chase^i_P(I)$, where $chase^0_P(I)=I$, and 
where $\chase^{i+1}_P(I)$ extends $\chase^i_P(I)$ with all facts
that can be derived from facts in $\chase^i_P(I)$ using a
rule in $\Sigma^P$. We refer to \cite{Alice} for more details.

\begin{lemma}
For all Datalog programs $P$ and $\bfS_{in}^P$-instances $I$, $\chase_P(I)$ is a solution for $I$ with respect to $P$. Moreover, it is the intersection of all solutions
for $I$ with respect to $P$.
\end{lemma}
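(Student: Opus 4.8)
The plan is to split the statement into its two assertions and prove each in turn, relying throughout on the fact that the stages form an increasing chain --- the inclusion $\chase^i_P(I) \subseteq \chase^{i+1}_P(I)$ holds by construction, since each stage \emph{extends} the previous one with derived facts --- together with the monotonicity of rule application. Conceptually this is just the standard least-fixpoint semantics of Datalog, with $\chase_P(I)$ realized as the least fixpoint of the monotone immediate-consequence operator, and the two assertions correspond to ``$\chase_P(I)$ is a (pre)fixpoint'' and ``it is the least one''.

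For the first assertion, that $\chase_P(I)$ is a solution, the inclusion $I \subseteq \chase_P(I)$ is immediate because $\chase^0_P(I) = I$ is one of the sets in the defining union. To verify that an arbitrary rule $R_0(\bx_0) \colondash R_1(\bx_1), \ldots, R_n(\bx_n)$ is satisfied, I would suppose its body holds in $\chase_P(I)$ under some assignment $\theta$, so that each fact $R_j(\theta(\bx_j))$ lies in $\chase_P(I) = \bigcup_{i \geq 0} \chase^i_P(I)$. Since a rule body has only finitely many conjuncts and the chain of stages is increasing, all $n$ body facts already occur together in a single stage $\chase^i_P(I)$ for large enough $i$; by the definition of $\chase^{i+1}_P(I)$ the head fact $R_0(\theta(\bx_0))$ is then added at the next stage and hence belongs to $\chase_P(I)$, so the rule is satisfied.

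For the second assertion, I would first show $\chase_P(I) \subseteq J$ for every solution $J$, by induction on the stage index $i$. The base case $\chase^0_P(I) = I \subseteq J$ holds because $J$ is a solution. In the inductive step, any fact of $\chase^{i+1}_P(I)$ is either already in $\chase^i_P(I)$, hence in $J$ by the induction hypothesis, or is the head $R_0(\theta(\bx_0))$ of some rule whose body facts all lie in $\chase^i_P(I) \subseteq J$; as $J$ satisfies that rule, its head fact lies in $J$ too. Taking the union over all $i$ gives $\chase_P(I) \subseteq J$, so $\chase_P(I)$ is contained in the intersection of all solutions. The reverse inclusion is then free: by the first assertion $\chase_P(I)$ is itself a solution, so it is one of the sets being intersected and therefore contains their intersection. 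Combining the two inclusions yields the claimed equality.

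The only point that requires genuine (if mild) care is the step in the first assertion where satisfaction of a rule body in the infinite union is pulled back to satisfaction in a single finite stage; this is exactly where both the monotonicity of the chain and the finiteness of rule bodies are used, and everything else is routine induction. I note that no separate termination argument is needed here: because Datalog rules introduce no new domain elements (the condition $\{\bx_0\}\subseteq\{\bx_1,\ldots,\bx_n\}$ guarantees $adom(\chase_P(I)) = adom(I)$), the union in fact stabilizes after finitely many stages, but the argument above does not depend on this.
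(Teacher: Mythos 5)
Your proof is correct and is precisely the standard least-fixpoint argument that the paper relies on implicitly (the lemma is stated without proof as folklore, with a pointer to the chase literature). Both halves --- pulling a finitely-conjuncted body back into a single stage of the increasing chain, and the stage-index induction showing containment in every solution --- are exactly the right steps, and your closing remark that the reverse inclusion is free because $\chase_P(I)$ is itself a solution completes the equality cleanly.
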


We denote the
 $\bfS_{out}^P$-reduct of $\chase_P(I)$ by $P(I)$.

By a \emph{Boolean} Datalog program, we mean a Datalog program $P$
where $\bfS_{out}^P$ consists of a single zero-ary relation symbol,
which is customarily denoted as \emph{Ans}. 
In such cases, write $P(I)=true$ if
$P(I)=\{Ans()\}$ and $P(I)=false$ otherwise (i.e., if $P(I)=\emptyset$).

We can think of the above definition of $P(I)$, in terms of the chase, as a bottom-up account of the 
semantics of a Datalog program. \emph{Unfoldings} (a.k.a.~\emph{expansions}) provide a complementary, top-down account. 
Given a Datalog program $P$, the set of \emph{derivable rules}
of $P$ is the smallest set of rules that (i) contains all
rules of $P$, and (ii) is closed under the operation of 
substituting occurrences of rule heads by the corresponding rule bodies (renaming variables as necessary). 
Given a Datalog program $P$ and a relation $R\in\bfS_{out}^P$, 
$\Unfoldings(P,R)$ is the set of canonical instances of derivable rules that have $R$ in the rule head and that only have
 $\bfS_{in}$-relations in the body.
Note that this set is in general infinite.

\begin{example}
Let $P$ be the Datalog program consisting of the three rules
\[R(x,y) \colondash S(x,y) \qquad
R(x,x) \colondash T(x,y) \qquad
T(x,y) \colondash U(x,y), U(y,z)
\]
where $\bfS_{in}=\{U, S\}$, $\bfS_{out}=\{R\}$, and $\bfS_{aux} = \{T\}$.
Then $\Unfoldings(P,R)$ consists (up to isomorphism) of the pointed instances
$(\{U(a,b),U(b,c)\}, \langle a,a\rangle)$ and $(\{S(a,b)\}, \langle a,b\rangle)$.
\end{example}

\begin{lemma}[Cf.~\cite{CV97:equivalence}]\label{lem:unfoldings}
For all Datalog programs $P$, instances $I\in \Inst[\bfS_{in}^P]$, and
 $\bfS_{out}^P$-facts $R(\textbf{a})$ over $adom(I)$,
 $R(\textbf{a})\in P(I)$ iff,
for some $(J,\textbf{b})\in \Unfoldings(P,R)$, $(J,\textbf{b})\to (I,\textbf{a})$.
\end{lemma}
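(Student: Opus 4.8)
The statement is the classical equivalence between the bottom-up (chase) and top-down (unfolding) semantics of Datalog, and the plan is to prove the two directions by inductions that mirror the two definitions. I would treat the ``if'' direction as a soundness argument about unfoldings and the ``only if'' direction as a completeness argument about the chase.

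For the direction from right to left, the plan is first to establish an auxiliary claim: every derivable rule of $P$ is sound with respect to every solution. Concretely, if $\rho : R_0(\textbf{x}_0) \colondash \beta$ is a derivable rule, $J$ is any solution for $I$ with respect to $P$, and $g$ is an assignment of the variables of $\rho$ to $adom(J)$ with $g(\beta)\subseteq J$, then $R_0(g(\textbf{x}_0))\in J$. This is proven by induction on the length of the derivation of $\rho$; without loss of generality the substituting rules may be taken from $P$. In the base case $\rho$ is a rule of $P$ and the claim holds because $J$ satisfies the rules of $P$; in the inductive step $\rho$ arises from a derivable rule $\rho'$ by replacing a body atom equal to the head of a rule $\sigma\in P$ with the body of $\sigma$, and one first uses that $J$ is a solution to recover the replaced atom, then applies the induction hypothesis to $\rho'$. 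Given this claim, the right-to-left direction is immediate: an unfolding $(J,\textbf{b})\in\Unfoldings(P,R)$ is the canonical instance of a derivable rule whose body consists only of $\bfS_{in}^P$-facts, a homomorphism $h:(J,\textbf{b})\to(I,\textbf{a})$ maps that body into $I\subseteq\chase_P(I)$, and the claim applied with $J:=\chase_P(I)$ yields $R(\textbf{a})\in\chase_P(I)$, that is, $R(\textbf{a})\in P(I)$.

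For the direction from left to right, the plan is to induct on the stage $i$ at which a fact enters the chase, using $\chase_P(I)=\bigcup_i \chase^i_P(I)$. It is necessary to strengthen the statement so that the induction goes through: I would prove that for every $i$ and every fact $R(\textbf{a})\in\chase^i_P(I)$ with $R\in\bfS_{out}^P\cup\bfS_{aux}^P$, there is a derivable rule with head relation $R$ and body over $\bfS_{in}^P$ only, whose canonical instance $(J,\textbf{b})$ admits a homomorphism $h:(J,\textbf{b})\to(I,\textbf{a})$. Including $\bfS_{aux}^P$ heads is essential, since auxiliary facts are the intermediate facts produced during the chase. The base case $i=0$ is vacuous, as $\chase^0_P(I)=I$ contains only $\bfS_{in}^P$-facts. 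For the inductive step, a newly derived fact $R(\textbf{a})$ comes from firing a rule $\rho_0\in P$ under an assignment $g$ into $\chase^i_P(I)$; each body atom of $\rho_0$ is either an $\bfS_{in}^P$-fact, hence already in $I$, or an $\bfS_{aux}^P$-fact to which the induction hypothesis supplies a fully unfolded derivable rule and a witnessing homomorphism. Substituting each such auxiliary body atom by the body of its derivable rule produces a derivable rule for $R$ over $\bfS_{in}^P$, and combining $g$ on the frontier variables with the witnessing homomorphisms on the internal variables yields the required homomorphism.

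The main obstacle, and the place where care is needed, is the bookkeeping in this gluing step: one must rename the internal variables of each derivable rule supplied by the induction hypothesis to be mutually disjoint and disjoint from the variables of $\rho_0$, while identifying the distinguished tuple of each with the corresponding frontier tuple $\textbf{x}_j$ of $\rho_0$. The witnessing homomorphisms then agree on these frontier variables precisely because each maps its distinguished tuple to $g(\textbf{x}_j)$, which is also the value assigned by $g$; consequently the union of $g$ (on the frontier) with the renamed witness maps (on the internal variables) is well defined and is a homomorphism from the canonical instance of the combined rule to $(I,\textbf{a})$. Once this is verified, specializing the strengthened statement to $R\in\bfS_{out}^P$ gives exactly the left-to-right direction of the lemma.
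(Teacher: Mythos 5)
The paper does not prove this lemma at all: it is stated as a known result with a pointer to the literature (the ``Cf.'' citation), so there is no in-paper argument to compare against. Your proof is correct and is the standard one --- soundness of derivable rules with respect to arbitrary solutions (by induction on the number of substitution steps) for the right-to-left direction, and a completeness induction on chase stages, strengthened to cover $\bfS_{aux}^P$-heads, for the left-to-right direction. You correctly identify the only delicate point, namely the variable renaming and the consistency of $g$ with the witnessing homomorphisms on the frontier variables (including the case where the head of a supplied derivable rule forces identifications among the variables of $\textbf{x}_j$, which is harmless because $g$ already assigns those positions equal values). Nothing is missing.
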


\paragr{\edatalog}
The language of \edatalog extends Datalog with 
existential quantifiers.

\begin{definition}[\edatalog]
An \emph{\edatalog rule} is an expression of the 
form 
\[ \exists \textbf{z} \big(R_1(\textbf{x}_1), \ldots, R_n(\textbf{x}_n)\big) \colondash
   S_1(\textbf{y}_1), \ldots,
   S_m(\textbf{y}_m) 
\]
where $\{\textbf{x}_i\}\subseteq \{\textbf{y}_1, \ldots, \textbf{y}_n, \textbf{z}\}$.
In the context of such a rule, 
the variables in $\textbf{z}$ are called
\emph{existential variables}. 
If the tuple $\textbf{z}$ is non-empty, 
we will also 
call the rule an \emph{existential rule}.
An \emph{exported variable} is a variable occurring 
both in the body and in the head of the rule.

An \emph{\edatalog program} is a tuple
$P=(\bfS_{in}, \bfS_{out}, \bfS_{aux}, \Sigma)$, where
$\bfS_{in}, \bfS_{out}, \bfS_{aux}$ are disjoint 
schemas and $\Sigma$ is a set of
\edatalog rules, where each
relation occurring in the body of a rule is from $\bfS_{in}\cup\bfS_{aux}$, and each relation occurring in the 
head of a rule is from $\bfS_{out}\cup\bfS_{aux}$.
\end{definition}

Just as in the case of Datalog, a \emph{solution}
for a $\bfS_{in}$-instance $I$ with respect to an
\edatalog-program $P$ is an instance $J$ over the 
schema $\bfS_{in}\cup\bfS_{out}\cup\bfS_{aux}$
such that $I\subseteq J$ and such that all the rules
of $P$ are satisfied in $J$. 
However, unlike in the case of Datalog, we now allow
for solutions to be infinite, for reasons that will
become clear in a moment.
A \emph{universal solution}
for $I$ (w.r.t.~$P$) is a solution $J$ for $I$ such that
for every solution $J'$ for $I$, it holds that
$J\to_{adom(I)} J'$.
Here, as a convenient notation, we  write $h:I\to_X J$ if $h:I\to J$ 
and $h(x)=x$ for all $x\in X$. We will also write $I\leftrightarrow_X I'$
if $I\to_X I'$ and $I'\to_X I$.
It is well known that every
instance $I$ has a (possibly infinite) universal solution,
and that universal solutions are unique up to homomorphic
equivalence. More precisely, if $J$ and $J'$ are universal
solutions for the same $\bfS_{in}$-instance $I$, then
$J\leftrightarrow_{adom(I)} J'$~(cf.~\cite{Cali2012}).

We will use the notation
$P(I)$ to denote the $\bfS_{out}$-reduct of an arbitrary
universal solution of $I$. 
This uniquely defines $P(I)$ up to 
$\leftrightarrow_{adom(I)}$-equivalence.

\begin{example}
Let $P$ be the \edatalog program consisting of the three rules
\[ R(x,y)\colondash R_{in}(x,y) ~~~~~~~~
  \exists z R(y,z) \colondash R(x,y) ~~~~~~~~
  R_{out}(x,y) \colondash R(x,y).
\]
The instance $I=\{R_{in}(a_1,a_2)\}$ does not have a finite universal solution with respect to $P$, but has an infinite
universal solution, namely
$J=I\cup\{R(a_i, a_{i+1}), R_{out}(a_i,a_{i+1})\mid i=1,2,\ldots\}$.
\end{example}

Since $P(I)$ is, in general, infinite, it is common to  
impose additional restrictions $P$ to ensure that $P(I)$
is finite. One well-known such restriction is 
\emph{weak acyclicity} \cite{FKMPicdt03}.
We will omit the precise definition here
(cf.~Appendix~\ref{app:more-prels}).
Weak acyclicity ensures that finite universal solutions
exist and can be computed in polynomial time using a suitable version of the chase.

\begin{proposition}[\cite{FKMPicdt03}]
Fix an \edatalog program $P$.
If $P$ is weakly acyclic, then every finite $\bfS_{in}$-instance has a finite universal solution,
which can be computed in polynomial time.
\end{proposition}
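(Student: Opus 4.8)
The plan is to analyze the behaviour of the chase procedure for \edatalog, which coincides with the standard chase for tuple-generating dependencies, and to use weak acyclicity to bound both the number of fresh values (nulls) that the chase introduces and the total number of facts it produces. Universality of the chase result is the comparatively routine part; the real content is the termination-with-polynomial-bound that weak acyclicity buys us.

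First I would make the chase precise. Unlike the Datalog chase, firing an existential rule $\exists\textbf{z}\,(R_1(\textbf{x}_1),\ldots)\colondash S_1(\textbf{y}_1),\ldots$ on a satisfied body requires, for each variable in $\textbf{z}$, the introduction of a fresh null (unless a suitable witness already exists). The resulting $\chase_P(I)$ is by construction a solution for $I$. I would then record the standard fact that it is a \emph{universal} solution: any solution $J'$ contains $I$, so the identity on $adom(I)$ is a homomorphism $I\to J'$; one extends it step by step along the chase sequence to a homomorphism $\chase_P(I)\to J'$ fixing $adom(I)$, since each chase step is a minimal rule application that $J'$, satisfying the same rule, must already witness. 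This gives the ``$J\to_{adom(I)} J'$ for every solution $J'$'' requirement, and the $\bfS_{out}$-reduct then realizes $P(I)$.

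Next I would recall the weak-acyclicity machinery (the formal definition being deferred to the appendix): the dependency graph on positions $(R,i)$, with ordinary edges tracking where an exported variable is copied from a body position to a head position, and \emph{special} edges tracking where a body position of an exported variable feeds a null-generating (existential) head position. Weak acyclicity is the absence of any cycle through a special edge. From this I would define the \emph{rank} of a position as the maximum number of special edges on any path into it; acyclicity makes this finite, so there is a global maximum rank $r$. The core lemma is then proved by induction on rank: the number of distinct values that can ever occupy a position of rank $\le k$ over the course of the chase is bounded by a polynomial in $|adom(I)|$ (with $P$, hence the schema and the rule set, fixed). A rank-$0$ position never receives a null, so it holds only the $\le|adom(I)|$ input constants; a null placed into a rank-$(k{+}1)$ position is created by a rule firing whose \emph{frontier} (exported) values sit in positions of rank $\le k$ --- this is exactly what the special edges encode --- and hence range over a polynomially bounded set by the inductive hypothesis. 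Since each firing invents boundedly many nulls and distinct nulls correspond to distinct frontier tuples, the count of rank-$(k{+}1)$ nulls is again polynomially bounded.

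The step I would be most careful with is this inductive counting. One must check that the special edges really do encode every route by which a value can feed a null-generating head position, so that a null's rank strictly exceeds the rank of the frontier values responsible for it; and one must organize the chase so that firings whose frontier tuple has already produced the required head fact are not repeated, ensuring that distinct nulls correspond to distinct frontier tuples rather than being generated \emph{ad infinitum}. With that in place, summing the per-rank bounds over the finitely many ranks $0,\ldots,r$ bounds the total number of distinct values, and hence (arities being fixed) the total number of derivable facts, by a polynomial in $|adom(I)|$. Since each chase step adds at least one fact and the number of facts is polynomially bounded, the chase terminates after polynomially many steps, each computable in polynomial time, yielding the finite universal solution and the claimed polynomial-time computation. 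I would cite \cite{FKMPicdt03} for the detailed bookkeeping.
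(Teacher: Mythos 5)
Your reconstruction is correct and is essentially the argument of the cited source \cite{FKMPicdt03}: the paper itself offers no proof of this proposition, deferring entirely to that reference, and your rank-based induction on the number of special edges along paths in the dependency graph is exactly the standard termination argument there. The only place where your sketch is slightly loose is the claim that a null occupying a rank-$(k{+}1)$ position was \emph{created} there --- it may instead have been created elsewhere and copied in along normal edges, so the induction must count all values ever appearing in positions of rank at most $k{+}1$ and trace each null back to the position of its creation (whose feeding body positions then have rank at most $k$ via the special edge) --- but you flag this yourself as the step requiring care, and it is handled in the cited bookkeeping.
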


We say
that an \edatalog program $P$ is 
\emph{non-recursive} if $\bfS^P_{aux}=\emptyset$.
Every non-recursive \edatalog program is weakly
acyclic.

\begin{lemma}\label{lem:edatalog-monotonicity}
Let $P$ be any \edatalog
program, and let $I,I'$ be $\bfS_{in}^P$-instances. Every homomorphism
$h:I\to I'$ extends to a homomorphism $h':P(I)\to P(I')$. 
\end{lemma}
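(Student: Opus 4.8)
The plan is to prove the statement first at the level of \emph{universal solutions} over the full schema $\bfS_{in}^P\cup\bfS_{out}^P\cup\bfS_{aux}^P$, and then pass to the $\bfS_{out}$-reducts. Fix universal solutions $J$ for $I$ and $J'$ for $I'$; since universal solutions are unique up to $\leftrightarrow_{adom}$-equivalence, I may take $J$ and $J'$ to be the instances obtained by chasing $I$ and $I'$ with $P$ (where, for existential rules, the chase introduces fresh nulls as witnesses). It then suffices to extend the given homomorphism $h\colon I\to I'\subseteq J'$ to a homomorphism $\hat h\colon J\to J'$ that agrees with $h$ on $adom(I)$: restricting $\hat h$ to the active domain of the $\bfS_{out}$-reduct of $J$ yields the desired $h'\colon P(I)\to P(I')$, since $\hat h$ preserves in particular the $\bfS_{out}$-facts and reducts merely discard the other facts.

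To build $\hat h$ I would induct along the chase sequence $J=\bigcup_{i\geq 0}\chase^i_P(I)$. For the base case, $\chase^0_P(I)=I$ and $h$ itself, followed by the inclusion $I'\subseteq J'$, is a homomorphism $\chase^0_P(I)\to J'$. For the inductive step, suppose $h_i\colon\chase^i_P(I)\to J'$ has been constructed, and suppose the step producing $\chase^{i+1}_P(I)$ fires a rule $\exists\textbf{z}\,(R_1(\textbf{x}_1),\dots,R_n(\textbf{x}_n))\colondash S_1(\textbf{y}_1),\dots,S_m(\textbf{y}_m)$ under an assignment $\alpha$ of the body variables into $adom(\chase^i_P(I))$ satisfying every body fact. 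Composing with $h_i$, the assignment $h_i\circ\alpha$ satisfies the body of the same rule in $J'$. Because $J'$ is a solution for $I'$, it satisfies this rule, so $J'$ contains witnesses for the existential variables $\textbf{z}$ making the head facts hold. Defining $h_{i+1}$ to agree with $h_i$ and to send the fresh nulls created for $\textbf{z}$ to these witnesses extends $h_i$ to $\chase^{i+1}_P(I)$. Setting $\hat h=\bigcup_i h_i$ gives a homomorphism $J\to J'$ extending $h$.

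The step I expect to require the most care is this inductive step: one must verify that satisfaction of the rule in $J'$ really supplies witnesses for all existential variables simultaneously, and that $h_{i+1}$ is well defined on the newly created nulls, which by construction of the chase are pairwise distinct and disjoint from the earlier domain. A secondary technical point is that $J$ and $J'$ may be infinite, so the chase sequence is indexed by all of $\omega$ under a fair firing order; since each rule body is finite, every fact of $J$ is derived at some finite stage and the union $\bigcup_i h_i$ is well defined. Finally, to obtain the statement for the arbitrary representatives implicit in the definition of $P(I)$ and $P(I')$, I would pre- and post-compose $h'$ with the homomorphisms witnessing $\leftrightarrow_{adom(I)}$ and $\leftrightarrow_{adom(I')}$ between those representatives and the chase-based ones; as these fix $adom(I)$ and $adom(I')$ respectively, the composite still agrees with $h$ on $adom(I)$, as required.
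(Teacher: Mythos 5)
Your proof is correct. The paper states this lemma without proof (it is a standard fact about the chase and universal solutions), and your argument — inducting along the chase sequence, using that $J'$ is a solution to supply witnesses for the existential variables, sending fresh nulls to those witnesses, and then restricting to the $\bfS_{out}$-reduct and transporting along the $\leftrightarrow_{adom}$-equivalences between representatives — is exactly the standard argument the paper implicitly relies on. The points you flag as delicate (well-definedness on fresh nulls, fairness of the firing order for possibly infinite chases, and the final passage between representatives of $P(I)$) are indeed the right ones to check, and you handle them correctly.
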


We say that two \edatalog programs $P, P'$ with $\bfS_{in}^P=\bfS_{in}^{P'}$ and
$\bfS_{out}^P=\bfS_{out}^{P'}$
are \emph{equivalent} if, for all 
$\bfS_{in}^P$-instances $I$, 
$P(I)\leftrightarrow_{adom(I)} P'(I)$.

\section{TAM Datalog}
\label{sec:tam}

TAM Datalog is a fragment of Datalog defined by two requirements:
``tree-shaped'' and ``almost monadic''. We introduce each in isolation first.

\paragr{Almost-Monadic Datalog Programs}
Recall that a Datalog program is \emph{monadic} if all relations
 in $\bfS_{aux}$ are unary. 
It is well known that monadic Datalog programs can be expressed
in Monadic Second-Order logic (MSO).
Formally, by a 
\emph{$k$-ary MSO query} over a schema $\bfS$,
we will mean an MSO formula
$\phi(x_1, \ldots, x_k)$ over $\bfS$. 
We say that a Datalog program 
$P=(\bfS_{in},\bfS_{out},\bfS_{aux},\Sigma)$ together with a designated $k$-ary relation
$R\in \bfS_{out}$, \emph{defines} 
an MSO query $\phi_R(x_1, \ldots, x_k)$ over $\bfS_{in}$, if for all $\bfS_{in}$-instances $I$ and $a_1, \ldots, a_k\in adom(I)$,
$R(a_1, \ldots, a_k)\in P(I)$ iff $I\models\phi_R(a_1, \ldots, a_k)$.
The following is folklore in the database literature (cf.~\cite{Gottlob2004:monadic} for an explicit proof):
\begin{theorem}
 \label{thm:monadic-mso}
Let $P$ be a monadic Datalog
program and $R\in \bfS_{out}^P$. Then $(P,R)$ defines an
MSO query.
\end{theorem}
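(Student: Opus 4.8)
The plan is to exploit the fact that in a \emph{monadic} program all recursion is confined to unary auxiliary predicates, so that the least-fixpoint computation performed by the chase can be captured by \emph{monadic} second-order quantification. First I would observe that, by the definition of a Datalog program, the relations of $\bfS_{out}^P$ occur only in rule heads and never in rule bodies; hence the output layer is non-recursive and sits on top of the (possibly recursive) auxiliary predicates. It therefore suffices to (i) define, for each $U\in\bfS_{aux}^P$, an MSO formula $\phi_U(x)$ with the property that $U(a)\in\chase_P(I)$ iff $I\models\phi_U(a)$, and then (ii) obtain $\phi_R$ by a single non-recursive unfolding step on top of these.

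For step (i) I would use the standard characterisation of the chase as the least model: the restriction of $\chase_P(I)$ to the auxiliary predicates is the intersection of all interpretations of $\bfS_{aux}^P$ that, together with $I$, satisfy every rule of $P$ whose head is an auxiliary atom (the output rules do not constrain the auxiliary predicates, so they may be ignored here). Writing $\bfS_{aux}^P=\{U_1,\dots,U_m\}$ and letting $S_1,\dots,S_m$ range over subsets of $adom(I)$, I would define a first-order formula $\mathrm{Closed}(S_1,\dots,S_m)$ expressing that each auxiliary rule $U_i(\bx_0)\colondash R_1(\bx_1),\dots,R_n(\bx_n)$ is satisfied, where each auxiliary body atom $U_j(t)$ is read as $t\in S_j$ and each $\bfS_{in}^P$-atom is kept verbatim. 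Then $\phi_{U_i}(x):=\forall S_1\cdots\forall S_m\,\big(\mathrm{Closed}(S_1,\dots,S_m)\rightarrow S_i(x)\big)$ is an MSO formula (only unary set quantifiers, with a first-order matrix), and it holds at $a$ exactly when $a$ lies in every closed tuple, i.e.\ in the least model, i.e.\ $U_i(a)\in\chase_P(I)$. This is precisely the point at which monadicity is essential: representing candidate interpretations of the auxiliary predicates as \emph{sets} is what keeps us within MSO rather than full second-order logic.

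For step (ii), since $R$ occurs only in heads, $R(\mathbf{a})\in P(I)$ iff some rule with head $R(\bx_0)$ has a satisfying assignment of its body in $\chase_P(I)$ sending $\bx_0$ to $\mathbf{a}$. I would therefore set $\phi_R(x_1,\dots,x_k)$ to be the disjunction, over all rules $\rho$ with head $R(\bx_0)$, of $\exists\bar z_\rho\,\mathrm{Body}_\rho^{*}$, where $\bar z_\rho$ collects the body variables not appearing in the head, the head positions are identified with $x_1,\dots,x_k$, and $\mathrm{Body}_\rho^{*}$ is the rule body with each auxiliary atom $U_j(t)$ replaced by the formula $\phi_{U_j}(t)$ from step (i) and each input atom left unchanged. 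Substituting MSO formulas into a first-order matrix again yields an MSO formula, so $\phi_R$ is as required.

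The main obstacle I anticipate is the correctness argument underlying step (i): one must verify that the intersection of all closed tuples coincides with the auxiliary reduct of $\chase_P(I)$. This splits into soundness (every auxiliary fact produced by the chase lies in every closed tuple, shown by induction on the stages $\chase^i_P(I)$) and completeness (the intersection of all closed tuples is itself closed, hence contains the chase), both routine consequences of monotonicity but worth spelling out, as they are exactly what licenses replacing the least fixpoint by a universal monadic second-order quantifier.
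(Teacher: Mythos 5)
Your proof is correct and is the standard ``least fixed point via universal monadic second-order quantification'' argument; the paper gives no proof of this theorem at all (it is stated as folklore with a pointer to Gottlob et al.), so there is nothing to diverge from. One cosmetic remark: as phrased, both your ``soundness'' and your ``completeness'' step establish the same inclusion, namely that the auxiliary reduct of $\chase_P(I)$ is contained in every closed tuple; the converse inclusion (intersection $\subseteq$ chase) is immediate because the auxiliary reduct of $\chase_P(I)$ is itself one of the closed tuples, which is exactly the paper's lemma that the chase is the intersection of all solutions.
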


We will now define a weaker restriction, namely that of \emph{almost monadic} Datalog programs,
for which the same holds. These are programs in which
every $k$-ary auxiliary relation has, among its $k$ argument positions, (at most) one specified ``\emph{articulation position}'',
 and the syntax of the rules is
constrained in such a way that variables occurring in non-articulation positions can only be used to carry information forward, and not 
to perform joins. 

\begin{definition}[Almost-Monadic Datalog Programs]
An \emph{articulation function}, for a Datalog program $P$, 
is a partial function $f$ mapping relations $R\in \bfS_{aux}^P$ to
a number $f(R)\in\{1, \ldots, arity(R)\}$, which we will
refer to as the \emph{articulation position} of $R$. 
Each $i\in \{1, \ldots, arity(R)\}$ other than $f(R)$ is
called a \emph{non-articulation position} of $R$.
A Datalog program is \emph{almost monadic} if there exists an 
articulation function such that, in every
rule, each variable occurring in a non-articulation position of 
an auxiliary relation in a rule body occurs only once in that rule body, and
does not occur in the articulation position of any auxiliary relation in the head.
\end{definition}

Note: the articulation conditions pertain to 
auxiliary relations and not to output relations.

\begin{example}\label{ex:almost-monadic}
The Datalog program from Example~\ref{ex:tc}
(which computes all pairs $(a,b)$ for which 
there is a directed path from $a$ to $b$) is an almost-monadic Datalog program: 
the witnessing articulation function assigns to 
the auxiliary relation $R$ its first position as articulation 
position.
It is worth pointing out that, if we extend
the program with an additional rule
$Ans(x,y) \colondash Path(y,x)$
(so that it computes all pairs $(a,b)$ for which 
there is a directed path from $a$ to $b$ or from $b$ to $a$),
the resulting program is still almost-monadic. This is because the
requirements on the articulation function only pertain to 
auxiliary relations in the head, and not to output relations.
\end{example}
\begin{restatable}{proposition}{propalmostmonadic}
The almost-monadic Datalog program from  Example~\ref{ex:almost-monadic}
is not equivalent to a monadic Datalog program.
\end{restatable}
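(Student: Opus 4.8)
The plan is to argue by contradiction: assume there is a monadic Datalog program $P'$ with a distinguished binary output relation $Ans$ that is equivalent to the transitive-closure program of \autoref{ex:almost-monadic}. I would first observe that the naive route through \autoref{thm:monadic-mso} cannot succeed, since transitive closure is itself MSO-definable (reachability from $x$ to $y$ is expressed by the universal sentence ``every vertex set containing $x$ and closed under out-neighbours contains $y$''). So the obstruction to monadicity has to be genuinely combinatorial, not logical. The leverage will come from combining \autoref{lem:unfoldings} with a structural description of the unfoldings of a monadic program, tested against the long directed paths $L_n=(\{Edge(a_0,a_1),\dots,Edge(a_{n-1},a_n)\},\langle a_0,a_n\rangle)$.

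The first step is a forcing argument. For each $n$ there is a directed path from $a_0$ to $a_n$ in $L_n$, so $Ans(a_0,a_n)\in P'(L_n)$, and by \autoref{lem:unfoldings} some unfolding $(J,\langle c,d\rangle)\in\Unfoldings(P',Ans)$ admits a homomorphism $h\colon (J,\langle c,d\rangle)\to(L_n,\langle a_0,a_n\rangle)$. Since $(J,\langle c,d\rangle)$ maps to itself by the identity, \autoref{lem:unfoldings} also gives $Ans(c,d)\in P'(J)$, so by equivalence $J$ contains a directed $Edge$-path from $c$ to $d$. Its $h$-image is a directed walk from $a_0$ to $a_n$ in $L_n$; as $L_n$ carries only the forward unit edges, this walk is forced to be $a_0,a_1,\dots,a_n$, traversing distinct vertices. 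Hence $J$ contains a simple directed path $c=v_0\to v_1\to\cdots\to v_n=d$ of length exactly $n$ with $h(v_i)=a_i$.

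The crux is a structural property of monadic unfoldings. Because every auxiliary relation of $P'$ is unary, each substitution step in a derivation glues the child rule body to the rest of the derivation along the single argument of the expanded auxiliary atom; by induction the canonical instance $J$ is assembled as a tree of blocks (one per rule body in the derivation, each with at most $K$ elements, where $K$ bounds the number of variables in a rule of $P'$) glued at single shared elements, so that the interior of any subtree is separated from the remainder of $J$ by a single cut vertex. Both $c$ and $d$ are exported head variables of the top rule and hence lie in the root block. I would then show that a simple path with both endpoints in the root block cannot enter the interior of any child subtree: taking the first index $i$ with $v_i$ in such an interior and the first later index $j$ leaving it, the single cut vertex would equal both $v_{i-1}$ and $v_j$, contradicting simplicity. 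Thus all of $v_0,\dots,v_n$ lie in the root block, giving $n+1\le K$ and contradicting that $n$ was arbitrary.

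I expect the main obstacle to be making the ``tree of blocks glued at single cut vertices'' description fully rigorous: setting up the derivation tree of an unfolding, tracking precisely which variables are shared across substitution steps (and which are renamed fresh), and proving the single-cut-vertex separation property by induction on the derivation, together with the bookkeeping that $c$ and $d$ really survive in the root block after all substitutions. The remaining ingredients, namely the forcing of the walk onto $L_n$ and the final counting bound, are then routine. I would also note that the argument adapts verbatim to the symmetric variant mentioned in \autoref{ex:almost-monadic}: a path realizing $Ans(c,d)$ in $J$ would have to map onto a directed walk in $L_n$, and since $L_n$ admits no directed walk from $a_n$ back to $a_0$, the witnessing path is again forced to run forward from $c$ to $d$, so the same contradiction applies.
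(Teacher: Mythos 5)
Your argument is correct, but it takes a genuinely different route from the paper's. The paper argues bottom-up: letting $n$ be the maximum number of variables in any $Ans$-rule of the hypothetical monadic program $P$, it runs $P$ on the \emph{disjoint union} of two isomorphic directed paths $a_0\to\cdots\to a_{n+1}$ and $b_0\to\cdots\to b_{n+1}$, observes by the evident symmetry that $S(a_i)\in\chase_P(I)$ iff $S(b_i)\in\chase_P(I)$ for every unary auxiliary $S$, and then notes that the final rule application deriving $Ans(a_0,a_{n+1})$ is too small to $Edge$-connect its two head variables across $n+2$ distinct vertices; swapping the $a$-images for $b$-images on the component of the second head variable then derives the false fact $Ans(a_0,b_{n+1})$. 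You instead argue top-down via Lemma~\ref{lem:unfoldings} on a single path $L_n$: you force a witnessing unfolding to contain a simple directed path of length $n$ between its two distinguished elements, and then use the block-tree structure of unfoldings of monadic programs (child rule bodies glued to their parent at a single cut vertex, because auxiliary atoms are unary) to confine that path to the root block, which has boundedly many elements. Both proofs exploit the same obstruction --- a monadic program cannot transmit a binary connection across an unboundedly long path --- but the paper's version needs no analysis of unfoldings and is shorter, whereas yours isolates a reusable structural fact about monadic unfoldings that resonates with the articulation-position theme elsewhere in the paper. Two points to make explicit when you write this up: first, since derivable rules are defined only up to ``renaming variables as necessary,'' you should fix the convention that every substitution step introduces fresh copies of the non-head variables (any other derivable rule's canonical instance is a homomorphic image of such a tree-structured one, so Lemma~\ref{lem:unfoldings} still applies); second, your cut-vertex step should be phrased in terms of facts rather than elements (an $Edge$-fact touching the interior of a child subtree must itself belong to that subtree), which is what actually pins $v_{i-1}$ and $v_j$ to the cut vertex. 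Your treatment of the symmetric variant of the program is also correct.
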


The following result justifies the terminology \emph{almost monadic}. It shows
that almost-monadic Datalog programs can be
simulated, in a precise sense, by monadic Datalog
programs.

\begin{restatable}{theorem}{thmmonadicreduction} 
\label{thm:monadic-reduction}
For each almost-monadic Datalog program $P$
and
$k$-ary relation symbol $R\in \bfS_{out}^P$, there is a Boolean
monadic Datalog program $P'$
where $\bfS_{in}^{P'}= \bfS_{in}^P\cup\{Q_1, \ldots, Q_k\}$, such that the following are equivalent,
for all $\bfS_{in}^P$-instances $I$ and $a_1, \ldots, a_k\in adom(I)$:
\begin{enumerate}
    \item      $R(a_1, \ldots, a_k)\in P(I)$, 
    \item           $P'(I\cup\{Q_1(a_1), \ldots, Q_k(a_k)\})=true$.
\end{enumerate}
Conversely, for every Boolean monadic Datalog program $P'$ with $\bfS_{in}^{P'}= \bfS\cup\{Q_1, \ldots, Q_k\}$, where  each $Q_i$ is unary, there is a TAM Datalog program $P$ with $\bfS_{in}^{P}=\bfS$ and
$\bfS_{out}^{P} = \{R\}$,  such that the above equivalence holds.
\end{restatable}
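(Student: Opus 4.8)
The statement has two directions, and I would treat them separately, using the unfolding characterization of Lemma~\ref{lem:unfoldings} as the common backbone.

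For the forward direction, the plan is to \emph{monadize} $P$ by recording, for each non-articulation argument of each auxiliary relation, only the finite amount of information that the markers $Q_1,\dots,Q_k$ can detect about it, namely which of the distinguished elements $a_1,\dots,a_k$ it is equal to. Concretely, for every $m$-ary auxiliary relation $S$ with articulation position $j=f(S)$ and every assignment $\tau$ from the non-articulation positions $\{1,\dots,m\}\setminus\{j\}$ to subsets of $\{1,\dots,k\}$, I would introduce a fresh unary relation $S^{\tau}$, with the intended meaning that $S^{\tau}(b)$ holds exactly when $P$ derives a fact $S(c_1,\dots,c_m)$ with $c_j=b$ and, for each non-articulation position $l$, the set of indices $i$ with $c_l=a_i$ equals $\tau(l)$. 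Each rule of $P$ is then rewritten into monadic rules: atoms over $\bfS_{in}$ are kept, and each body auxiliary atom $S(\dots)$ is replaced by the corresponding $S^{\tau}(u)$ on its articulation variable $u$. This is exactly where the almost-monadic hypothesis is essential: since every non-articulation body variable occurs only once and never flows into an articulation position of a head, the value in a non-articulation position is never joined, so its $\tau$-record can be maintained purely locally, consulting $Q_1,\dots,Q_k$ to initialize it and passing it unchanged into the head. The rule for $R$ is turned into a rule deriving $Ans()$ that checks, using the unary relations together with $Q_1,\dots,Q_k$, that the argument placed in each position $i$ of $R$ is indeed the marked element $a_i$. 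The equivalence of the two conditions I would then prove by a routine induction matching derivations of $P$ on $I$ with derivations of $P'$ on $I\cup\{Q_1(a_1),\dots,Q_k(a_k)\}$, or equivalently by translating unfoldings via Lemma~\ref{lem:unfoldings}.

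For the converse direction, the plan is to \emph{internalize} the markers so that the $k$ distinguished positions of the output relation $R$ take over the role that $Q_1,\dots,Q_k$ played in the input. Reading Lemma~\ref{lem:unfoldings} from right to left, I would aim to build a TAM program $P$ whose unfoldings for $R$ are exactly the monadic derivation trees of $P'$ in which every leaf atom $Q_i(z)$ has been deleted and its element $z$ identified with the $i$-th distinguished element $b_i$. Such merged trees are tree-shaped except that the $b_i$ may recur, which is precisely the shape produced by a tree-shaped program once the output positions are designated as the elements allowed to be shared. The subtlety is that a single $a_i$ may be tested by $Q_i$ at many leaves lying in different branches of the derivation, so $b_i$ must reach several places at once; threading it through non-articulation positions is impossible, since the almost-monadic discipline forbids a non-articulation variable from occurring twice in a rule body. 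The resolution I would pursue is to route each marked element through \emph{articulation} positions, where joins are permitted: I would introduce, for each IDB predicate of $P'$, auxiliary relations whose articulation argument tracks the element currently being processed and which may re-enter a distinguished element $b_i$ through an articulation join exactly when $P'$ consults $Q_i$.

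The step I expect to be the main obstacle is precisely this last one in the converse: simultaneously simulating the monadic computation of $P'$, allowing each distinguished element to be revisited arbitrarily often across branching derivations, and still meeting both the tree-shaped and almost-monadic syntactic requirements. I would discharge it by careful bookkeeping of which relation carries which distinguished element in its articulation position, then verify directly from the construction that every rule of $P$ has a tree-shaped canonical instance and admits the required articulation function, before closing the argument with the unfolding correspondence of Lemma~\ref{lem:unfoldings}.
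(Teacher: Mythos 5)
Your forward direction is essentially the paper's proof. The paper likewise introduces, for every auxiliary (and output) relation $S$ and every partial assignment $f$ of markers $Q_j$ to argument positions, a unary predicate $S^f$ on the articulation argument, rewrites each rule once for every consistent choice of such assignments (adding $Q_j$-atoms for the head's non-articulation variables), and closes with $Ans()\colondash R^f(x)$ for the total assignment $f(i)=Q_i$; correctness is the induction on derivation length you indicate. Your variant that records the exact set $\{i : c_l=a_i\}$ rather than a partial lower bound is an inessential difference, and you correctly identify why almost-monadicity makes the annotation maintainable.

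The converse is where you have a genuine gap, and the escape route you sketch is a dead end: each auxiliary relation has at most \emph{one} articulation position, and in your scheme that position is already spent on the monadic argument of the simulated IDB predicate, so there is no room to ``articulate'' $k$ further distinguished elements; a rule of $P'$ with two IDB body atoms would still force the same $b_i$ into two different subderivations. The paper's construction is essentially the one you rejected: it widens every IDB predicate $S$ of $P'$, and each marker $Q_i$, to a $(1+k)$-ary predicate $S^*(x,y_1,\dots,y_k)$ whose last $k$ slots carry the entire distinguished tuple as a passive payload. The seed rules with head $Q^*_i(x_i,x_1,\dots,x_k)$ and (safe-ified) empty body make $Q_i^*$ mean ``first argument equals $y_i$''; every rule of $P'$ is lifted by threading $y_1,\dots,y_k$ verbatim through all starred body and head atoms; and the designated output relation is $Ans^*$, of arity $k$. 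Correctness ($c\in S^{P'(I')}$ iff $(c,a_1,\dots,a_k)\in(S^*)^{P(I)}$ for $I'=I\cup\{Q_1(a_1),\dots,Q_k(a_k)\}$) is again a derivation-length induction. Your underlying worry is not baseless --- in the lifted rules the $y_j$ do occur in non-articulation positions of several body atoms whenever a rule of $P'$ has two or more IDB body atoms, so even the paper's construction requires reading the articulation discipline as tolerating these uniformly carried-along slots (or restructuring such rules) --- but the substantive idea you are missing is to carry the whole distinguished tuple as extra arguments on every auxiliary predicate rather than to re-derive each $b_i$ at the point where $Q_i$ is consulted, and ``careful bookkeeping'' does not substitute for it.
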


It follows that almost-monadic Datalog is contained in MSO. That is,
we have the following analogue of Theorem~\ref{thm:monadic-mso} for
almost-monadic Datalog programs (cf.~Fig.~1):

\begin{restatable}{corollary}{coralmostmonadictomso}
\label{cor:almost-monadic-to-mso}
Let $P$ be an almost-monadic Datalog program and $R\in \bfS_{out}^P$. Then $(P,R)$ defines an MSO query.
\end{restatable}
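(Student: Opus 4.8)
The plan is to derive Corollary~\ref{cor:almost-monadic-to-mso} as a direct consequence of Theorem~\ref{thm:monadic-reduction} together with the folklore Theorem~\ref{thm:monadic-mso}. Fix an almost-monadic Datalog program $P$ and a $k$-ary relation $R \in \bfS_{out}^P$. By the forward direction of Theorem~\ref{thm:monadic-reduction}, there is a \emph{Boolean monadic} Datalog program $P'$ over input schema $\bfS_{in}^{P'} = \bfS_{in}^P \cup \{Q_1, \ldots, Q_k\}$ such that, for all $\bfS_{in}^P$-instances $I$ and all $a_1, \ldots, a_k \in adom(I)$, we have $R(a_1, \ldots, a_k) \in P(I)$ if and only if $P'(I \cup \{Q_1(a_1), \ldots, Q_k(a_k)\}) = true$. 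This reduces the task of defining the $k$-ary query $\phi_R$ to analyzing a \emph{Boolean} query of the monadic program $P'$ on an enriched instance.

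Next I would invoke Theorem~\ref{thm:monadic-mso} applied to $P'$ and its single zero-ary output relation $Ans$. This yields a \emph{Boolean} MSO sentence $\psi$ over the schema $\bfS_{in}^{P'} = \bfS_{in}^P \cup \{Q_1, \ldots, Q_k\}$ such that $P'(J) = true$ iff $J \models \psi$, for every $\bfS_{in}^{P'}$-instance $J$. The remaining step is to eliminate the auxiliary unary markers $Q_1, \ldots, Q_k$ and replace them by the $k$ free first-order variables of the query we wish to define. Concretely, I would define
\[
  \phi_R(x_1, \ldots, x_k) \;:=\; \psi\big[Q_i(t) / (t = x_i)\big],
\]
that is, the formula obtained from $\psi$ by replacing every atomic subformula of the form $Q_i(t)$ (for each $i$ and each term $t$) with the first-order equality $t = x_i$. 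Since each $Q_i$ is interpreted in $J = I \cup \{Q_1(a_1), \ldots, Q_k(a_k)\}$ as the singleton $\{a_i\}$, the substitution faithfully captures membership in $Q_i$ by equality with $a_i$; thus $I \models \phi_R(a_1, \ldots, a_k)$ iff $J \models \psi$ iff $P'(J) = true$ iff $R(a_1, \ldots, a_k) \in P(I)$, which is exactly the definition of $(P,R)$ defining the MSO query $\phi_R$. Note that the result is again an MSO formula over $\bfS_{in}^P$, since equality is available in MSO and no new second-order quantification is introduced.

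I do not anticipate a genuine obstacle here: the corollary is essentially bookkeeping once Theorem~\ref{thm:monadic-reduction} is in hand, and all the real work (the reduction of almost-monadic to monadic Datalog, which must correctly handle the articulation positions so that non-articulation variables merely propagate information) resides in that theorem. The only points requiring care are that each $Q_i$ is genuinely forced to name a single element in the enriched instance (so the equality substitution is sound), and that substituting equalities for the unary predicate atoms stays within the MSO fragment rather than, say, requiring constants for $a_1, \ldots, a_k$ — using free first-order variables $x_1, \ldots, x_k$ sidesteps this cleanly. Everything else follows by chaining the two equivalences.
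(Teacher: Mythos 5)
Your proof is correct and follows essentially the same route as the paper: reduce to a Boolean monadic program via Theorem~\ref{thm:monadic-reduction}, apply Theorem~\ref{thm:monadic-mso} to obtain an MSO sentence over the enriched schema $\bfS_{in}^P\cup\{Q_1,\ldots,Q_k\}$, and then eliminate the markers $Q_1,\ldots,Q_k$ in favour of the free variables $x_1,\ldots,x_k$. The only (immaterial) difference is in that last step: the paper existentially quantifies the $Q_i$ as set variables and forces $\bigwedge_i \forall z(Q_i(z)\leftrightarrow z=x_i)$, whereas you syntactically substitute $t=x_i$ for each atom $Q_i(t)$; the two resulting formulas are equivalent MSO queries over $\bfS_{in}^P$.
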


\begin{figure}
\begin{center}
\usetikzlibrary{positioning,
                shapes.geometric}
\begin{tikzpicture}
 
\node[draw,
    ellipse,
    minimum height=3cm, 
    minimum width=6cm,
    label={165:MSO}] (circle1) at (0,0){};
 
\node[draw,
    ellipse,
    minimum height=3cm, 
    minimum width=6cm,
    label={15:Datalog}] (circle1) at (2,0){};

\node[draw,
    ellipse,
    fill=red!10,
    minimum height =2.5cm,
    minimum width = 3.5cm] (circle2) at (1,0){};
    
\node[draw,
    circle,
    fill=blue!20,
    minimum size =1.5cm] (circle2) at (1,0.4){};

 \node[label={[align=center]\small \begin{tabular}{cc}Monadic\\Datalog\end{tabular}}] at (1,-0.3) {};
 
 \node[label={[align=center]\small \begin{tabular}{cc}Almost Monadic\\Datalog\end{tabular}}] at (1,-1.45) {};

\end{tikzpicture}
\end{center}
\vspace{-4mm}
\caption{Almost Monadic Datalog in relation to MSO and Datalog.}
\vspace{-3mm}
\end{figure}
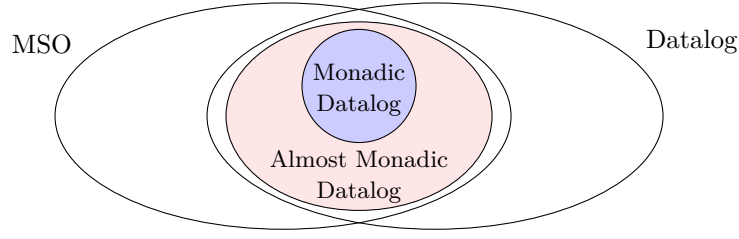

We do not know whether almost-monadic Datalog is strictly contained
in the intersection of MSO and Datalog. However, as we will soon see,
the intersection of \emph{tree-shaped} Datalog and MSO 
is (up to logical equivalence) precisely tree-shaped almost-monadic Datalog.
See also \cite{Bodirsky2021:datalog} for a semantic characterization of the intersection 
of MSO and Datalog in terms of infinite constraint satisfaction problems.
See also \cite{Rudolph2013:flag} where another formalism is introduced that
is contained in the intersection of MSO and Datalog. 

\paragr{Tree-shapedness}
 We say that a Datalog program $P$ is \emph{tree-shaped} if the
 incidence graph of each rule is acyclic. In 
 particular, this implies that no variable occurs twice in the 
 same conjunct in the rule body (but the rule head may contain 
 repeated occurrences of variables). Note that we do not require 
 the incidence graph of the rules to be connected, 
nor do we make any requirements 
 (say, in the case of binary relations) on the direction of edges. 

\begin{example}\label{ex:non-monadic}
Consider the tree-shaped Datalog program $P$ given by
the following two rules (where $\bfS_{in}$
consists of two binary relations, $E, F$):
\[R(x,y) \colondash E(x,u), F(u,y)  \qquad
R(x,y) \colondash E(x,u), R(u,v), F(v,y) 
\]
Then $R^{P(I)}$ contains all pairs $(a,b)$, such 
that there is a directed path from $a$ to $b$ in $I$
consisting of a number of $E$-edges followed by
an equal number of $F$-edges.

It follows from known facts about MSO (viz.~the
fact that MSO on words captures the regular languages)
that $(P,R)$ does not define an MSO query. In particular, $P$ is not
equivalent to a monadic Datalog program, or even an 
almost-monadic Datalog program.
\end{example} 

\vspace{-3mm}

\begin{restatable}{lemma}{lemtreeunfoldings}
\label{lem:tree-unfoldings}
Let $P$ be any tree-shaped Datalog program. Then, for each $R\in\bfS_{out}^P$, $\Unfoldings(P,R)$ consists of acyclic pointed instances.
\end{restatable}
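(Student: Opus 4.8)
The plan is to prove a slightly stronger statement about all \emph{derivable} rules of $P$, by induction on the number of substitution steps used to obtain the rule, namely: the incidence graph of the body of every derivable rule is acyclic. The lemma is then immediate, since each $(J,\mathbf{b})\in\Unfoldings(P,R)$ is the canonical instance of a derivable rule, and the incidence graph of $(J,\mathbf{b})$ is by definition the incidence graph of that rule's body (the distinguished tuple $\mathbf{b}$ plays no role in acyclicity). The base case is exactly the hypothesis that $P$ is tree-shaped: every rule of $P$ has acyclic body.

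For the inductive step I would analyse a single substitution. Suppose $\rho$ is derivable with acyclic body forest $F_\rho$, and we expand an auxiliary atom $A(\mathbf{t})$, with $\mathbf{t}=(t_1,\dots,t_k)$, by a renamed copy of a rule $A(\mathbf{x}_0)\colondash B$, where the head variables $\mathbf{x}_0=(x_1,\dots,x_k)$ are identified position-wise with $\mathbf{t}$ and the remaining variables of $B$ are renamed fresh. The first key observation is that, because $F_\rho$ is acyclic as a multigraph (a variable occurring twice in one atom would already form a $2$-cycle, as already noted for tree-shaped programs), the arguments $t_1,\dots,t_k$ are pairwise distinct, so the atom-node $A(\mathbf{t})$ has degree exactly $k$ in $F_\rho$. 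Hence $A(\mathbf{t})$ is a cut vertex whose removal splits its tree component into $k$ parts, placing $t_1,\dots,t_k$ into $k$ \emph{pairwise distinct} connected components of $F_\rho\setminus\{A(\mathbf{t})\}$.

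The heart of the argument is then a gluing lemma for forests: if two forests are identified along a set $S$ of vertices that lie in pairwise distinct components of one of them, the result is again a forest (any putative cycle would have to traverse an arc connecting two distinct $S$-vertices inside that forest, which is impossible). Here the substituted body contributes the forest $F_B$ (acyclic by tree-shapedness of the substituted rule, or by the induction hypothesis if one permits substituting derivable rules), glued to $F_\rho\setminus\{A(\mathbf{t})\}$ by identifying each head-variable node with the corresponding $t_i$. The delicate point — and the step I expect to be the main obstacle — is that $\mathbf{x}_0$ may contain \emph{repeated} variables: if $x_i=x_j$, then the substitution forces $t_i$ and $t_j$ to be merged. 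I would handle this by performing these merges first, inside $F_\rho\setminus\{A(\mathbf{t})\}$; since the $t_i$'s lie in distinct components, merging those within a common class only joins distinct trees at a single point and preserves the forest, and afterwards the distinct head variables still correspond to glue-points sitting in pairwise distinct components. Applying the gluing lemma to $F_B$ and this merged forest shows the substituted body is again a forest, completing the induction and hence the proof.
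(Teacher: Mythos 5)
Your proposal is correct and follows essentially the same route as the paper: the paper's proof simply observes that $\Unfoldings(P,R)$ consists of canonical instances of derivable rules and asserts that the substitution operation preserves tree-shapedness, which is exactly the claim you prove. Your induction, the observation that the arguments of the expanded atom are pairwise distinct and land in distinct components after removing the atom-node, the handling of repeated head variables, and the forest-gluing lemma are a sound and complete elaboration of the detail the paper leaves to the reader.
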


\paragr{TAM Datalog}
A \emph{TAM Datalog program} is a tree-shaped, almost-monadic Datalog program. 
We will give a precise model-theoretic characterization of TAM Datalog
in terms of MSO.
We say that an MSO query $\phi(x_1, \ldots, x_k)$ is
 \emph{tree-determined} if for each pointed instance $(I,a_1, \ldots, a_k)$,
 we have that $I\models\phi(a_1, \ldots, a_k)$ if and only if there is an acyclic
 pointed instance $(J, b_1, \ldots, b_k)$ such that
 $J\models\phi(a_1, \ldots, a_k)$ and $(J,b_1, \ldots, b_k)\to (I,a_1, \ldots, a_k)$. Note that $J$ must be
 finite and that $J$ is not required to be connected.
 
\begin{restatable}{theorem}{thmmsototam}
\label{thm:mso-to-tam}
Let $\phi(x_1, \ldots, x_n)$ be an MSO formula over a schema $\bfS_{in}$. The following are equivalent:
\begin{enumerate}
    \item $\phi$ is definable by a TAM Datalog program,
    \item $\phi$ is definable by a tree-shaped Datalog program,
    \item $\phi$ is tree-determined.
\end{enumerate}
\end{restatable}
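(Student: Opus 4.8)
The plan is to prove the theorem by establishing a cycle of implications $(1)\Rightarrow(2)\Rightarrow(3)\Rightarrow(1)$. The first implication is immediate from definitions: since a TAM Datalog program is by definition tree-shaped, anything definable by a TAM program is definable by a tree-shaped one. So the real content lies in the other two implications.

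For $(2)\Rightarrow(3)$, suppose $\phi$ is defined by a tree-shaped Datalog program $P$ with output relation $R$. I would invoke Lemma~\ref{lem:unfoldings}, which tells us that $R(\mathbf{a})\in P(I)$ iff some $(J,\mathbf{b})\in\Unfoldings(P,R)$ admits a homomorphism $(J,\mathbf{b})\to(I,\mathbf{a})$. By Lemma~\ref{lem:tree-unfoldings}, every such $(J,\mathbf{b})$ is acyclic. This is almost exactly the statement of tree-determinedness. In the forward direction, if $I\models\phi(\mathbf{a})$, then $R(\mathbf{a})\in P(I)$, so there is an acyclic unfolding $(J,\mathbf{b})$ with $(J,\mathbf{b})\to(I,\mathbf{a})$; and since $(J,\mathbf{b})\to(J,\mathbf{b})$ witnesses $J\models\phi(\mathbf{b})$ via the identity (as $J$ itself is an instance over $\bfS_{in}$ on which $P$ fires), we get the required acyclic witness. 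The converse direction of tree-determinedness follows from monotonicity of Datalog (Lemma~\ref{lem:edatalog-monotonicity} applied to $P$): if an acyclic $(J,\mathbf{b})$ with $J\models\phi(\mathbf{b})$ maps homomorphically into $(I,\mathbf{a})$, then $R(\mathbf{a})\in P(I)$, i.e.\ $I\models\phi(\mathbf{a})$. The one subtlety to check carefully is that the witness $J$ in the definition of tree-determinedness can always be taken to be an unfolding itself, which requires noting that $J\models\phi(\mathbf{b})$ already forces $J$ to contain a homomorphic image of some unfolding.

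\textbf{The hard part will be $(3)\Rightarrow(1)$}, constructing an actual TAM Datalog program from the purely semantic tree-determinedness hypothesis. The strategy I would pursue is to build the program from the set of acyclic witnesses directly. Concretely, tree-determinedness says $\phi$ is equivalent to the (a priori infinite) ``query'' $\bigvee\{(J,\mathbf{b}) : J\models\phi(\mathbf{b}),\ (J,\mathbf{b}) \text{ acyclic}\}$ under the homomorphism preorder. The goal is to encode this infinite union of acyclic conjunctive queries as a finite TAM Datalog program. The key leverage is that $\phi$ is assumed to be MSO-definable, so the set of acyclic witnesses is MSO-definable, and MSO over trees/acyclic structures has strong finiteness properties (regularity). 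I would decompose each acyclic witness along its tree structure and use a tree-automaton / regular-set argument to show that the infinitely many acyclic witnesses are generated by finitely many ``rule patterns.'' The almost-monadic discipline is exactly what lets a Datalog program pass tree-automaton states along the branches: the articulation position carries the ``current node'' while non-articulation positions carry the distinguished elements $x_1,\dots,x_k$ forward without performing joins. I expect to leverage Theorem~\ref{thm:monadic-reduction}, which converts between Boolean monadic Datalog programs and TAM programs with $k$ distinguished elements; this reduces the task to producing a Boolean \emph{monadic} Datalog program equivalent to the MSO query on the annotated instances $I\cup\{Q_1(a_1),\dots,Q_k(a_k)\}$, restricted to acyclic inputs.

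\textbf{The main obstacle}, and where I would concentrate the most care, is bridging from ``MSO-definable and tree-determined'' to ``definable in monadic Datalog.'' In general MSO strictly exceeds monadic Datalog, so one cannot simply appeal to a generic MSO-to-Datalog translation; the tree-determinedness hypothesis must be used essentially to restrict attention to acyclic witnesses, on which the correspondence between MSO and (monadic) Datalog is governed by tree automata and becomes tight. I anticipate the cleanest route is to characterize the class of acyclic witnesses via a finite tree automaton obtained from $\phi$, verify that this automaton is closed upward under the relevant homomorphisms (using tree-determinedness to guarantee closure), and then read off the monadic Datalog program whose IDB predicates are the automaton states. Care is needed to ensure the resulting program is simultaneously tree-shaped \emph{and} almost-monadic, and that translating back through Theorem~\ref{thm:monadic-reduction} preserves equivalence with $\phi$ on \emph{all} instances, not merely acyclic ones --- here tree-determinedness is invoked a second time to lift correctness from acyclic inputs to arbitrary inputs.
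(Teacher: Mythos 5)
Your proposal is correct and follows essentially the same route as the paper: $(2)\Rightarrow(3)$ via Lemmas~\ref{lem:unfoldings} and~\ref{lem:tree-unfoldings}, and $(3)\Rightarrow(1)$ by annotating the distinguished elements with unary predicates, compiling $\phi$ into a tree automaton over acyclic witnesses, reading off a Boolean monadic Datalog program whose IDB predicates are the automaton states, converting it to a TAM program via Theorem~\ref{thm:monadic-reduction}, and invoking tree-determinedness to lift correctness from acyclic to arbitrary instances. The only detail you omit is the paper's small trick of adding (and later removing) a fresh binary relation to reduce to connected instances, which does not change the substance of the argument.
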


\begin{remark}
It is worth comparing this to the result in 
\cite{Gottlob2004:monadic} that states that monadic Datalog
and MSO have the same expressive power on finite trees. Besides
the fact that Theorem~\ref{thm:mso-to-tam} is a characterization
on arbitrary (finite) instances while the result in~\cite{Gottlob2004:monadic} is restricted to trees, 
there are a few other important differences: in~\cite{Gottlob2004:monadic},
it is assumed that the trees are represented as structures
in which the children of each node are ordered; that the 
signature
includes predicates marking the root, leafs, the first child 
of each node, and the last child of each node; and
 that each node of the tree is labeled by precisely 
 one of the (other) unary predicates in the signature. These assumptions together imply that every homomorphism between
 such trees is necessarily an isomorphism.
\end{remark}
\begin{restatable}[TAM Datalog is closed under composition]{corollary}{corclosureundercomposition}
For all TAM Datalog programs $P_1$ and
$P_2$  with $\bfS_{in}^{P_2}=\bfS_{out}^{P_1}$,
there is a TAM Datalog program $P_3=(\bfS_{in}^{P_1}, \bfS_{out}^{P_2}, \bfS'_{aux},\Sigma')$ such that, for all $\bfS_{in}^{P_1}$-instances $I$,
$P_3(I) = P_2(P_1(I))$.
\end{restatable}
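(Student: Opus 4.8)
The plan is to prove the statement semantically, via the characterization of TAM Datalog in Theorem~\ref{thm:mso-to-tam}, rather than by manipulating rules directly. It suffices to treat the output relations of $P_2$ one at a time: for each $k$-ary $R\in\bfS_{out}^{P_2}$, consider the $k$-ary query $\psi_R$ over $\bfS_{in}^{P_1}$ defined by $I\models\psi_R(\mathbf{a})$ iff $R(\mathbf{a})\in P_2(P_1(I))$. If each $\psi_R$ is definable by a TAM Datalog program, then taking the disjoint union of these programs (renaming their auxiliary schemas apart, and combining their articulation functions) yields a single TAM Datalog program $P_3$ with $\bfS_{in}^{P_3}=\bfS_{in}^{P_1}$, $\bfS_{out}^{P_3}=\bfS_{out}^{P_2}$, and $P_3(I)=P_2(P_1(I))$; here I use that for plain Datalog the output is determined exactly, not merely up to homomorphic equivalence. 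By Theorem~\ref{thm:mso-to-tam}, it is then enough to show that each $\psi_R$ is (i) definable in MSO and (ii) tree-determined.

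For (i), Corollary~\ref{cor:almost-monadic-to-mso} gives, for each $S\in\bfS_{out}^{P_1}$, an MSO query $\phi_S$ over $\bfS_{in}^{P_1}$ defining $S$ under $P_1$, and an MSO query $\phi_R$ over $\bfS_{in}^{P_2}=\bfS_{out}^{P_1}$ defining $R$ under $P_2$. I would obtain $\psi_R$ from $\phi_R$ by substituting, for each atom $S(\mathbf{t})$ over the intermediate schema, its definition $\phi_S(\mathbf{t})$. The only subtlety is that $\phi_R$ is evaluated over $adom(P_1(I))$, which may be a proper subset of $adom(I)$; since membership in $adom(P_1(I))$ is itself MSO-definable over $\bfS_{in}^{P_1}$ (an element lies in it iff it occurs in some $\bfS_{out}^{P_1}$-fact, which is expressible through the $\phi_S$), I would relativize all first-order and set quantifiers of $\phi_R$ to this definable set before substituting. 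The outcome is an MSO query $\psi_R$ over $\bfS_{in}^{P_1}$, as required.

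For (ii), the easy direction of tree-determinedness is monotonicity: if $(J,\mathbf{b})\to(I,\mathbf{a})$, then applying Lemma~\ref{lem:edatalog-monotonicity} twice shows that $R(\mathbf{b})\in P_2(P_1(J))$ implies $R(\mathbf{a})\in P_2(P_1(I))$. For the hard direction, suppose $R(\mathbf{a})\in P_2(P_1(I))$. By Lemmas~\ref{lem:unfoldings} and \ref{lem:tree-unfoldings} applied to $P_2$, there is an acyclic $(K,\mathbf{b})\in\Unfoldings(P_2,R)$ and a homomorphism $g\colon(K,\mathbf{b})\to(P_1(I),\mathbf{a})$. For each fact $f=S(\mathbf{c})$ of $K$ we have $S(g(\mathbf{c}))\in P_1(I)$, so the same two lemmas applied to $P_1$ yield an acyclic $P_1$-unfolding $(L_f,\mathbf{e}_f)\to(I,g(\mathbf{c}))$. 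I would then build the witness $J$ over $\bfS_{in}^{P_1}$ by taking the disjoint union of the $L_f$ (with fresh internal elements) and gluing, for each fact $f$ and each position $i$, the $i$-th distinguished element of $L_f$ to the $i$-th component of $\mathbf{c}$. Because each $L_f$ maps to $I$ compatibly with $g$, these maps assemble into a homomorphism $(J,\mathbf{b})\to(I,\mathbf{a})$; and because each $L_f$ sits inside $J$ with its distinguished elements equal to $\mathbf{c}$, Lemma~\ref{lem:unfoldings} gives $S(\mathbf{c})\in P_1(J)$ for every fact of $K$, so $(K,\mathbf{b})\to(P_1(J),\mathbf{b})$ by inclusion and hence $R(\mathbf{b})\in P_2(P_1(J))$.

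The main obstacle is showing that $J$ is acyclic. The key point is that $K$ is not merely acyclic but Berge-acyclic: its incidence graph is a forest, so within any single fact the elements are distinct, and any two elements co-occurring in a fact are connected only through that fact. Contracting each attached acyclic unfolding $L_f$ back to the single fact-node it replaces therefore recovers exactly the incidence forest of $K$, so any cycle in the incidence graph of $J$ would project to a cycle in that of $K$, which is impossible. The one delicate case is when some $L_f$ uses a single element in two distinguished positions, forcing the identification of two components of $\mathbf{c}$; this is legitimate because it arises only when $g$ already identifies the corresponding elements of $P_1(I)$, and it amounts to contracting within a single fact of $K$, which preserves acyclicity. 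Having established (i) and (ii), Theorem~\ref{thm:mso-to-tam} provides a TAM Datalog program defining each $\psi_R$, which completes the proof.
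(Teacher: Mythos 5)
Your proof is correct and rests on the same pivot as the paper's --- the semantic characterization of TAM Datalog in Theorem~\ref{thm:mso-to-tam} --- but you enter that theorem through a different door. The paper first observes that the naive union $P_3=(\bfS_{in}^{P_1},\bfS_{out}^{P_2},\bfS_{aux}^{P_1}\cup\bfS_{out}^{P_1}\cup\bfS_{aux}^{P_2},\Sigma^{P_1}\cup\Sigma^{P_2})$ already computes the composition and is syntactically \emph{tree-shaped} (though not almost-monadic), establishes MSO-definability by the same formula substitution you use, and then invokes the implication $(2)\Rightarrow(1)$ of Theorem~\ref{thm:mso-to-tam}. You instead verify condition $(3)$, tree-determinedness, directly: your gluing of acyclic $P_1$-unfoldings into the acyclic $P_2$-unfolding $K$ is, in effect, a semantic re-derivation of what the paper gets for free by applying Lemma~\ref{lem:tree-unfoldings} and Lemma~\ref{lem:unfoldings} to the union program (substituting rule bodies for rule heads at the syntactic level is exactly your instance-level gluing). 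The price of your route is the acyclicity argument for the glued instance $J$, whose ``delicate case'' (an unfolding $L_f$ with repeated distinguished elements forcing identifications among the elements of a fact of $K$) you flag but treat somewhat briskly; a fully rigorous version needs the observation that, because the incidence graph of $K$ is a forest, two elements of a single fact of $K$ can only become identified through that fact itself, so each $L_f$ attaches to $J$ at one point per equivalence class and no cycle is created. (The paper's Lemma~\ref{lem:tree-unfoldings} absorbs the same subtlety, equally briskly, into ``substitution preserves tree-shapedness.'') On the credit side, you are more explicit than the paper on two points it glosses over: the relativization of the quantifiers of $\phi_R$ to the MSO-definable set $adom(P_1(I))$ when substituting, and the fact that the final assembly yields the output instance \emph{exactly} rather than up to homomorphic equivalence.
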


We also provide a syntactic normal form for TAM Datalog programs.
A TAM Datalog program is \emph{simple} if every rule body
contains precisely one occurrence of a relation from $\bfS_{in}$.
For instance the program given in Example~\ref{ex:almost-monadic} 
is a simple TAM Datalog program. 

\begin{restatable}{theorem}{thmnormalform}
\label{thm:normal-form}
Every (connected) TAM Datalog program can be transformed in polynomial-time into an equivalent (connected) simple
TAM Datalog program.
\end{restatable}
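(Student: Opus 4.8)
The plan is to leave every rule whose body already contains at most one $\bfS_{in}$-atom untouched (in particular output-projection rules such as $Ans \colondash Path$, which read no $\bfS_{in}$-relation, are already simple), and to repeatedly \emph{split} any rule with two or more $\bfS_{in}$-atoms by carving off one $\bfS_{in}$-atom at a time into a fresh auxiliary relation. Fix a rule $\rho:\ H(\bx)\colondash B$ with at least two $\bfS_{in}$-atoms. By tree-shapedness the incidence graph of $B$ is a forest; I would root it at the variable occupying the articulation position of $H$ whenever $H$ is auxiliary, and at an arbitrary node when $H$ is an output relation (for an output head there is, by the observation that the articulation conditions constrain only auxiliary and not output relations, nothing to protect). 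I then choose an $\bfS_{in}$-atom $A$ that is \emph{leaf-most}, meaning no other $\bfS_{in}$-atom lies in the subtree strictly below $A$, and let $c$ be the variable of $A$ on the root side. I extract $A$ together with its descendant subtree: introduce a fresh auxiliary relation $T$ and the rule $T(c,\by)\colondash A,\ (\text{descendants of }A)$, where $\by$ lists the exported head variables occurring in the extracted part other than $c$, and replace the extracted subtree inside $\rho$ by the single atom $T(c,\by)$. This yields a rule $\rho'$ with exactly one fewer $\bfS_{in}$-atom and a $T$-rule containing the single $\bfS_{in}$-atom $A$.

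Next I would check that each split preserves all four properties. Tree-shapedness and connectedness are immediate, since $B'$ arises from a forest by contracting a connected subtree to one pendant atom at $c$, and the $T$-body is a subforest of $B$. For almost-monadicity I set the articulation position of $T$ to the position of $c$. In $B'$ the variable $c$ occurs only in $T$ and in its parent atom; if that parent is auxiliary, then $c$ already sat in its articulation position in $\rho$ (otherwise $c$, occurring twice in $B$, would have violated the almost-monadic condition), so $c$ occupies only articulation positions in $B'$, which is permitted. The carry variables $\by$ occur in $B'$ only inside $T$, hence exactly once, and -- crucially -- they do not include the articulation variable of $H$: since we rooted at that variable and peeled a leaf-most subtree, the head's articulation variable stays near the root and is never demoted to a carry position. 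A symmetric inspection of the $T$-rule (where $c$ occurs, besides in the head, only in the $\bfS_{in}$-atom $A$, and each carry variable inherits the single-occurrence and non-collision conditions from $\rho$) shows it too is almost monadic.

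Finally, termination and the polynomial bound follow from a potential argument: ordering rules by their number of $\bfS_{in}$-atoms, each split replaces a rule with $m\ge 2$ such atoms by rules each having strictly fewer than $m$, so the sorted multiset of these counts strictly decreases; as each split introduces only boundedly many new atoms and relations, the total size and the running time remain polynomial. I expect the genuine obstacle to be exactly the preservation of almost-monadicity: splitting off an \emph{arbitrary} $\bfS_{in}$-atom can force the head's articulation variable into a non-articulation (carry) position of the new predicate $T$, which is illegal, and it can likewise collide the cut variable $c$ with a carry variable. Rooting the decomposition at the head's articulation variable and always peeling the leaf-most $\bfS_{in}$-atom is precisely what rules these collisions out; verifying that such an atom always exists and that the articulation assignment for $T$ is consistent with the inherited conditions is the crux of the argument.
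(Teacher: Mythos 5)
Your splitting step is essentially the paper's construction: the paper also uses acyclicity of the body to find a cut variable $z$ shared between two halves, introduces a fresh auxiliary relation articulated at $z$ (or at a variable occurring in an input atom), and iterates; your variant of rooting at the head's articulation variable and peeling off one leaf-most $\bfS_{in}$-atom at a time is a finer-grained version of the same idea, and your verification of why almost-monadicity survives (the cut variable can only sit in articulation positions, and the head's articulation variable can never land in a carry position of the new predicate) is sound and in fact more explicit than what the paper writes down.

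However, there is a genuine gap at the very first sentence: you declare that rules with \emph{zero} $\bfS_{in}$-atoms, such as $Ans(x,y)\colondash Path(x,y)$, are ``already simple.'' They are not. Simplicity requires that every rule body contain \emph{precisely one} occurrence of a relation from $\bfS_{in}$, not at most one, and the paper devotes the entire second half of its proof to upgrading ``at most one'' to ``exactly one.'' This is not a cosmetic requirement: the downstream adjoint construction (Theorem~\ref{thm:tam-adjoint}) pivots on the unique input atom $E(\textbf{y})$ of each rule, so rules whose bodies consist only of auxiliary atoms must be eliminated. The paper's fix is to take such a rule, pick a variable $x$ occurring in the articulation position of some body atom, and replace the rule by \emph{all} rules obtained by adding one atom $R'(\textbf{u},x,\textbf{v})$ with $R'\in\bfS_{in}$ and $\textbf{u},\textbf{v}$ fresh distinct variables. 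Correctness of this padding rests on a separate invariant you never invoke, namely that every value occurring in a derived auxiliary or output fact of the chase already occurs in some fact of the input instance, so that the added atom is satisfiable in exactly the situations where the original rule fires. Your proposal as written outputs a program that still contains input-free rules and hence is not simple; you need this second phase (and its justifying invariant) to complete the proof.
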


\section{Right-Adjoints}
\label{sec:adjoints}

Recall that we can view an \edatalog program $P$ semantically as a mapping
from $\bfS_{in}^P$-instances to $\bfS_{out}^P$-instances that is
monotone with respect to homomorphisms: every
homomorphism $h:I\to I'$ gives rise 
to a homomorphism $h':P(I)\to P(I')$. To use the language of 
category theory, this means that $P$ is a \emph{functor from 
the category of $\bfS_{in}^P$-instances and homomorphisms to 
$\bfS_{out}^P$-instances and homomorphisms}.
Recall that, %
for functors $F:X\to Y$ and $G:Y\to X$ (where $X$ and $Y$ are
arbitrary categories), we say that 
$G$ is a \emph{right-adjoint} for $F$, and that $F$ is a 
\emph{left-adjoint} of $G$, if it holds that $F(I)\to J$ iff
$I\to G(J)$. 
\footnote{Note that, unlike in the standard category theoretical definition, we care only about existence of an arrow instead of a 1-to-1 correspondence between the two sets of arrows.}
In this section, we study the existence of 
right-adjoints for \edatalog
programs. 

\begin{example}
Consider the Datalog program $P=(\bfS_{in}, \bfS_{out}, \emptyset, \Sigma)$,
where $\bfS_{in} = \{R\}$, $\bfS_{out}=\{S\}$, and $\Sigma$ consists of the 
rules $S(x,y) \colondash R(x,y)$ and $S(x,y) \colondash R(y,x)$. 
We can think of the input instances for $P$ as directed graphs, and we can think
of $P(I)$ as the \emph{symmetric closure} of $I$.
For every $\bfS_{out}^P$-instance
$J$, let $\Omega(J)$ be the $\bfS_{in}^P$-instance that is the 
\emph{maximal symmetric sub-instance} of $J$, that is,
$\Omega(J)$ consists of all facts $R(x,y)$ for which it holds that
$J$ contains both $S(x,y)$ and $S(y,x)$. 
It is not hard to see that $\Omega$ is a right-adjoint of $P$. That is, $P(I)\to J$  iff
$I\to \Omega(J)$.
\end{example}

\begin{example}
Consider the Datalog program $P=(\bfS_{in}, \bfS_{out}, \emptyset, \Sigma)$,
where $\bfS_{in} = \{Q_1,Q_2\}$, $\bfS_{out}=\{Q_3\}$, and $\Sigma$ consists of the 
rule
$Q_3() \colondash Q_1(x), Q_2(y)$.
This Datalog program does \emph{not} have a right-adjoint in the above sense.
Indeed, let $J$ be the empty instance. Then $P(I)\to J$ holds if and only
if either $I$ has no $Q_1$-facts or $I$ has no $Q_2$-facts, a condition
that cannot be equivalently characterized by the existence of a homomorphism
from $I$ to any fixed single instance $J'$. However, it can be shown
that $P(I)\to J$ if and only if either $I\to J'_!$ or $I\to J'_2$, where
$J'_1=\{Q_1(a)\}$ and $J'_2=\{Q_2(a)\}$. If we generalize the notion of right-adjoint by allowing $\Omega(J)$ to be a finite set of instances, then, as we will see later, $P$ \emph{does} admit
a right-adjoint. As we will see later, the fact that $\Omega(J)$  consists of multiple instances, is related to the fact that
the program includes a rule whose incidence graph is not connected.
\end{example}

Motivated by the above examples and other considerations that will
become clear in Section~\ref{sec:dualities}, the precise notion of right-adjoints that we will adopt here is a little more refined:

\begin{definition}[Generalized Right-Adjoints] \label{def:adjoints}
Let $P$
 be any \edatalog program.
 By a \emph{generalized right-adjoint} for $P$ we will mean a 
function $\Omega_P$ that maps every $J\in \Inst[\bfS_{out}^P]$ to
a finite set of pairs $(J',\iota)$ with
  $J'\in \Inst[\bfS_{in}^P]$ and $\iota:adom(J')\hookrightarrow adom(J)$ a partial function,
  such that, for all
  $I\in \Instinf[\bfS_{in}^P]$, $P(I)\to J$ iff $I\to J'$ for some $(J',\iota)\in\Omega_P(J)$, in such a way that the 
  following diagram commutes:
  \[
  \begin{tikzcd}
    P(I) \arrow{r}{} & J  \\%
    I \arrow{r}{} \arrow[hook]{u}{id}    & J'\arrow[swap,hook]{u}{\iota}
  \end{tikzcd}
  \]
\end{definition}

Here, the notation $f:X\hookrightarrow Y$ indicates 
that $f$ is a partial function from $X$ to $Y$.

Recall that, when $P$ is an arbitrary $\edatalog$-program, $P(I)$ may be an infinite instance, even when $I$ is finite. It is for this reason
(and because, later on, we will consider compositions of \edatalog programs)
that the above definition requires the adjoint operator to behave well
even for infinite instances $I$.
Incidentally, the results in our paper 
do not depend on the fact that $J$ is finite, and we could change the above definition such that $J,J'\in\Instinf[\bfS^P_{out}]$, but we have opted not to do so since this is not necessary for our use cases.

This notion of generalized right-adjoint is properly defined and behaves as one expects:

\begin{restatable}{theorem}{thmcomposition}
\label{thm:composition}
If two \edatalog programs have generalized right-adjoints, then so does their composition.
\end{restatable}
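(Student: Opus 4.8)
The plan is to define the generalized right-adjoint of the composition by composing the two given adjoints, exactly as one composes ordinary adjunctions. Write $P = P_2\circ P_1$, so that $P(I) = P_2(P_1(I))$, with $\bfS_{in}^P = \bfS_{in}^{P_1}$ and $\bfS_{out}^P = \bfS_{out}^{P_2}$, and let $\Omega_{P_1}$, $\Omega_{P_2}$ be the given generalized right-adjoints. For $K \in \Inst[\bfS_{out}^{P_2}]$ I define
\[
  \Omega_P(K) = \{\, (J'', \iota'\circ\iota'') \mid (J',\iota')\in\Omega_{P_2}(K),\ (J'',\iota'')\in\Omega_{P_1}(J')\,\}.
\]
Each intermediate $J'\in \Inst[\bfS_{out}^{P_1}]$ is a \emph{finite} instance, so $\Omega_{P_1}(J')$ is well defined and finite; since $\Omega_{P_2}(K)$ is finite too, $\Omega_P(K)$ is finite, and $\iota'\circ\iota''$ is the composite of two partial functions, hence a partial function $adom(J'') \hookrightarrow adom(K)$. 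This is the obvious candidate; the work is to verify the defining equivalence of Definition~\ref{def:adjoints} together with the commuting square.

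For the equivalence, I fix $I \in \Instinf[\bfS_{in}^P]$ and chain the two adjunctions:
\[
  P(I)\to K
  \iff P_1(I)\to J'\text{ for some }(J',\iota')\in\Omega_{P_2}(K)
  \iff I\to J''\text{ for some }(J'',\iota'')\in\Omega_{P_1}(J').
\]
The first equivalence is the adjoint property of $P_2$ applied to the instance $P_1(I)$, and here it is essential that a generalized right-adjoint is required to behave correctly on \emph{all} instances in $\Instinf$: even when $I$ is finite, $P_1(I)$ may be infinite, and it is precisely this instance that is fed to $\Omega_{P_2}$. The second equivalence is the adjoint property of $P_1$ applied to $I$, using that the intermediate $J'$ is finite. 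Composing the two yields $P(I)\to K$ iff $I\to J''$ for some $(J'', \iota'\circ\iota'')\in\Omega_P(K)$, which is the required biconditional.

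It remains to produce the commuting square for $\Omega_P$, which I obtain by stacking the two given squares vertically. Given $g:I\to J''$ with $(J'',\iota'')\in\Omega_{P_1}(J')$, the square for $\Omega_{P_1}$ yields $h_1:P_1(I)\to J'$ with $h_1(x)=\iota''(g(x))$ on $adom(I)$; feeding $h_1$ into the square for $\Omega_{P_2}$ yields $h_2:P(I)\to K$ with $h_2(y)=\iota'(h_1(y))$ on $adom(P_1(I))$. Since the left edge of the $P_1$-square keeps each $x\in adom(I)$ as the same element of $adom(P_1(I))$, for such $x$ we get $h_2(x)=\iota'(h_1(x))=\iota'(\iota''(g(x)))=(\iota'\circ\iota'')(g(x))$, so the outer square commutes. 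The reverse direction—extracting a pair $(J'',\iota'\circ\iota'')$ and a homomorphism $g$ from a given $h:P(I)\to K$—is symmetric: peel off the $P_2$-square first, then the $P_1$-square.

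I anticipate that the main obstacle is bookkeeping around the partial maps and the infinite intermediate instances rather than any deep idea. One must check that the composite $\iota'\circ\iota''$ is exactly the partial function witnessing the outer square—in particular that it is defined wherever the stacked squares force it to be—and one must be scrupulous that every application of the $\Omega_{P_2}$-adjoint is made to the possibly infinite instance $P_1(I)$, which is exactly why Definition~\ref{def:adjoints} is phrased over $\Instinf$. Everything else is a routine diagram chase.
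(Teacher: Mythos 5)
Your construction $\Omega_P(K)=\{(J'',\iota'\circ\iota'')\mid (J',\iota')\in\Omega_{P_2}(K),\ (J'',\iota'')\in\Omega_{P_1}(J')\}$ is exactly the one used in the paper, and your verification (chaining the two adjoint equivalences, noting that $\Omega_{P_2}$ must be applied to the possibly infinite $P_1(I)$ while the intermediate $J'$ stays finite, and stacking the two commuting squares) is the same argument, spelled out in more detail than the paper's one-line proof. Correct, and essentially identical in approach.
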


Our main results, in this section, will show that certain classes of  \edatalog programs admit generalized right-adjoints.
In the next sections, we will apply this to obtain  new results, such as regarding
the existence of homomorphism dualities.

\begin{remark}
While we are specifically interested in right-adjoints in this paper, one may
also wonder what it means for a \edatalog program to admit a (generalized) left-adjoint. Generalized left-adjoints for \edatalog programs are closely related to query rewritings, as studied in the literature on data integration and 
data exchange. A Datalog program $P$ has
a generalized left adjoint iff $P$ is equivalent to a non-recursive
Datalog program. Indeed, if $P$ has a generalized left-adjoint $\Theta$, then,
for each $R\in \bfS_{out}^P$, the $\bfS_{in}^P$-instances in 
$\Theta(\{R(a_1, \ldots, a_n)\})$ correspond  to the members of $\Unfoldings(P,R)$
(cf.~\cite{Foniok2015:functors,DalmauKO}).
\end{remark}

\begin{restatable}{theorem}{thmtamadjoint}
\label{thm:tam-adjoint}
Every TAM Datalog program $P$ has a generalized right-adjoint $\Omega_P$. If $P$ is connected then $\Omega_P(J)$ is always a singleton.
Moreover, $\Omega_P(J)$ is computable in 2Exptime given $J$ and $P$, and in ExpTime whenever the arity of $P$ is bounded.
\end{restatable}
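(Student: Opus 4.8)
The plan is to construct the generalized right-adjoint $\Omega_P$ explicitly by leveraging the characterization of TAM Datalog programs established earlier in the paper. The key insight is that TAM Datalog programs are tree-shaped, so by \Cref{lem:tree-unfoldings}, for each $R \in \bfS_{out}^P$ the set $\Unfoldings(P,R)$ consists of \emph{acyclic} pointed instances. Combined with \Cref{lem:unfoldings}, this means $R(\mathbf{a}) \in P(I)$ iff some acyclic unfolding $(J_0,\mathbf{b})$ admits a homomorphism $(J_0,\mathbf{b}) \to (I,\mathbf{a})$. The condition $P(I) \to J$ then amounts to the requirement that, for every fact $R(\mathbf{a})$ that is forced into $P(I)$ by such an unfolding mapping into $I$, the image tuple must be consistent with a homomorphism into $J$. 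The structure we seek for $\Omega_P(J)$ must encode exactly the ``obstruction'' to this condition failing.

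First, I would build on \Cref{thm:monadic-reduction}, which reduces each $k$-ary output relation of an almost-monadic program to a Boolean monadic Datalog program $P'$ over an expanded signature with fresh unary markers $Q_1,\dots,Q_k$. Monadic Datalog programs define MSO queries (\Cref{thm:monadic-mso}), and crucially their complements are recognizable by tree automata on the acyclic unfoldings. The strategy is to compute, for the target instance $J$, a finite set of ``companion'' instances $J'$ over $\bfS_{in}^P$ together with the partial maps $\iota$, using an automata-theoretic construction: one runs a bottom-up tree automaton that, processing the tree-shaped unfoldings, tracks which partial homomorphisms into $J$ remain extendable. The finitely many automaton states induce a finite set of candidate witnesses $J'$, and the commuting-diagram condition of \Cref{def:adjoints} is guaranteed by defining $\iota$ to be the partial map recording the distinguished elements' images in $adom(J)$. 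When $P$ is connected, every unfolding is connected, so there is effectively a single reachable component and $\Omega_P(J)$ collapses to a singleton.

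The hard part will be twofold. First, establishing the equivalence $P(I) \to J \iff I \to J'$ for \emph{infinite} $I$ (as the definition demands, for compositional purposes) requires a compactness or König's-lemma argument: one must show that a homomorphism $P(I) \to J$ exists iff no finite acyclic unfolding witnesses an obstruction, and that the automaton-derived $J'$ faithfully captures this for arbitrary, possibly infinite, $I$. The tree-shapedness and the acyclicity of unfoldings are what make the automaton finite-state and hence make the compactness go through. Second, the commuting-diagram requirement with the partial map $\iota$ is subtle: I would need to verify that the identity embedding $id: I \hookrightarrow P(I)$ composed with the homomorphism $P(I) \to J$ agrees, on the image of $adom(I)$, with $\iota \circ (I \to J')$. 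This forces a careful bookkeeping of which elements of $adom(J)$ arise as images of exported versus existential/auxiliary variables.

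For the complexity bounds, the plan is to track the cost of the automaton construction. The monadic Datalog program from \Cref{thm:monadic-reduction} translates into a tree automaton whose number of states is singly exponential in the program size; determinizing or complementing for the adjoint construction incurs a second exponential, yielding the $2\mathrm{ExpTime}$ bound in general. When the arity of $P$ is bounded, the number of fresh markers $Q_1,\dots,Q_k$ and hence the relevant alphabet is polynomially bounded, which removes one exponential and gives $\mathrm{ExpTime}$. I expect the automata-theoretic machinery to be the technical crux, with the infinite-$I$ compactness argument as the main conceptual obstacle to state and verify cleanly.
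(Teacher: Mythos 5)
Your plan identifies the right ingredients (tree-shapedness, acyclic unfoldings, almost-monadicity) but the core of the theorem is left as a promissory note, and the route you sketch has a genuine gap. The entire content of the statement is the explicit construction of $J'$ and $\iota$ from $J$ together with a verification of the equivalence $P(I)\to J \iff I\to J'$; you say the automaton states "induce a finite set of candidate witnesses $J'$" but never say what the elements and facts of $J'$ are, nor how a run of the automaton is converted into a homomorphism $I\to J'$ and back. Worse, the detour through Theorem~\ref{thm:monadic-reduction} is aimed at the wrong predicate: that reduction decides membership of a \emph{single} tuple $R(a_1,\dots,a_k)$ in $P(I)$ via markers $Q_1,\dots,Q_k$, whereas $P(I)\to J$ is a single global consistency condition over \emph{all} output facts simultaneously (one must choose one map $h:adom(P(I))\to adom(J)$ that works for every derived fact at once). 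Bridging from per-tuple acceptance to this global condition is precisely the hard part, and nothing in your sketch addresses it. Your worry about needing compactness for infinite $I$ is a symptom of this: if you try to characterize $P(I)\to J$ by the absence of finite unfolding "obstructions" you will run into exactly this globality problem, since no single unfolding is an obstruction — only an inconsistent family of them is.

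The paper avoids all of this by a direct exponential construction, with no automata and no compactness. After normalizing $P$ to be \emph{simple} (Theorem~\ref{thm:normal-form}, which you do not invoke but which is essential: every rule body has exactly one $\bfS_{in}$-atom), the elements of $J'$ are pairs $(b,X)$ with $b\in adom(J)\cup\{\bot\}$ and $X$ a set of $\bfS_{aux}$-facts over $adom(J)\cup\{\bot\}$ having $b$ in articulation position; the facts of $J'$ are those $E((b_1,X_1),\dots,(b_n,X_n))$ closed under the rules of $P$ in the obvious sense, and $\iota$ is the projection $(b,X)\mapsto b$. Both directions of the equivalence are then proved by directly transporting homomorphisms: from $h:P(I)\to J$ one sets $h'(a)=(h(a),h(F_a))$ where $F_a$ collects the auxiliary facts of $\chase_P(I)$ articulated at $a$; conversely one projects and argues by induction on derivation length. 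This works verbatim for infinite $I$. The singleton claim for connected $P$ falls out because the construction produces one $J'$; the general case is handled by adding a fresh binary relation to connect rule bodies and then taking components, not by the "single reachable component" argument you gesture at. The complexity bounds come from counting the pairs $(b,X)$ (doubly exponential in general since the number of auxiliary facts is exponential in the arity, singly exponential for bounded arity), not from automaton determinization. Your sets of "extendable partial homomorphisms" are morally the same data as the sets $X$, so your intuition is not wrong, but as written the proposal does not contain a proof.
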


The proof of Theorem~\ref{thm:tam-adjoint} (which is given in the appendix) makes crucial use of both the tree-shapedness and the almost-monadicity of the 
Datalog program. Indeed, both properties are
important for the existence of generalized right-adjoints:

\begin{restatable}{proposition}{proptreeshapednoadjoint}
\label{prop:treeshaped-no-adjoint}
The tree-shaped Datalog program $P$ in Example~\ref{ex:non-monadic} (which is not almost-monadic) does
not admit a generalized right-adjoint. 
\end{restatable}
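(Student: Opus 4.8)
The plan is to exhibit a single output instance $J$ for which the class of inputs mapped ``below'' $J$ cannot be captured by finitely many homomorphism-duals, by reducing to the non-regularity of a counting language. Throughout, recall from Example~\ref{ex:non-monadic} that $R(a,b)\in P(I)$ holds exactly when $I$ contains a directed path from $a$ to $b$ of the form $E^nF^n$ for some $n\geq 1$. I would encode a word $w=\ell_1\cdots\ell_n\in\{E,F\}^{+}$ as the labelled directed path $I_w$ with domain $\{v_0,\dots,v_n\}$ and one edge $\ell_j(v_{j-1},v_j)$ for each position. The first thing to observe is that, for any \emph{fixed} finite $\bfS_{in}^P$-instance $T$, a homomorphism $I_w\to T$ is the same thing as a walk in $T$ spelling $w$; hence $\{w\mid I_w\to T\}$ is a regular language, recognised by the nondeterministic automaton whose states are $adom(T)$, whose $E$- and $F$-transitions are the $E$- and $F$-edges of $T$, and all of whose states are both initial and accepting. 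Consequently, if $P$ had a generalized right-adjoint $\Omega_P$ (Definition~\ref{def:adjoints}), then for any fixed $J$ the language $L:=\{w\mid P(I_w)\to J\}$ would be the finite union $\bigcup_{(J',\iota)\in\Omega_P(J)}\{w\mid I_w\to J'\}$, and therefore regular.

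I take $J=\{R(0,1)\}$ with $0\neq 1$. A digraph maps homomorphically to $J$ iff it has no self-loop and no directed path of length two (equivalently, no vertex is simultaneously the tail and the head of an edge). So the whole argument reduces to computing, for words of the shape $w=E^aF^bE^cF^d$ (with $a,b,c,d\geq 1$), whether the digraph $P(I_w)$ contains a directed $R$-path of length two; note $P(I_w)$ is loop-free since every $E^nF^n$ path has distinct endpoints. This is the crux, and it is where I expect the real work. A length-two $R$-path $R(a',b'),R(b',c')$ forces its pivot $b'$ to be the head of an $F$-edge and the tail of an $E$-edge of $I_w$; since $I_w$ is a simple path, the only such vertex is the boundary $v_{a+b}$ between the block $F^b$ and the block $E^c$. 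A short case analysis on which edges fall into which block then shows that an incoming balanced path $R(\cdot,v_{a+b})$ exists iff $a\geq b$ (it must use exactly the last $b$ of the $E^a$-edges together with all of $F^b$), and an outgoing balanced path $R(v_{a+b},\cdot)$ exists iff $c\leq d$ (it must use all of $E^c$ together with the first $c$ of the $F^d$-edges). Hence $P(I_w)$ has a length-two $R$-path iff $a\geq b$ and $c\leq d$, and therefore
\[
P(I_w)\to J \quad\Longleftrightarrow\quad a<b \ \text{ or }\ c>d .
\]

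Finally I would derive a contradiction by pumping. Assume $\Omega_P(J)$ exists, so that $L=\{w\mid P(I_w)\to J\}$ is regular and hence so is its complement $\overline L$; by the previous paragraph, $E^aF^bE^cF^d\in\overline L$ iff $a\geq b$ and $c\leq d$. Let $N$ be a pumping length for $\overline L$ and consider $s=E^NF^NEF$, which lies in $\overline L$ (here $a=b=N$ and $c=d=1$). Any decomposition $s=xyz$ with $|xy|\leq N$ and $y\neq\epsilon$ has $y=E^r$ with $r\geq 1$ lying inside the initial $E^N$-block; then $xz=E^{N-r}F^NEF$ has its first block strictly shorter than its second, so $xz\notin\overline L$, contradicting the pumping lemma. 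Thus $L$ is not regular, so $\Omega_P(J)$ cannot exist, and $P$ admits no generalized right-adjoint. The main obstacle is the boundary case analysis establishing the displayed equivalence; everything else is the standard automaton/pumping machinery together with Definition~\ref{def:adjoints}.
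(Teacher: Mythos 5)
Your proof is correct, and it takes a genuinely different route from the paper's. The paper chooses $J$ to be the directed $2$-cycle $\{R(0,1),R(1,0)\}$ (so that $G\to J$ iff $G$ has no odd closed $R$-walk) and tests the putative adjoint against a family of \emph{cyclic} input instances $I_{e,f}$ containing three balanced segments, one of length $e$/$f$; the contradiction is obtained by a hands-on pumping argument on the structures in $\Omega_P(J)$: a homomorphic image of an $E$-path longer than $|adom(J')|$ must repeat a vertex, placing it on an $E$-cycle of $J'$, along which a homomorphism from the admissible $I_{e,f}$ can be stretched into one from the forbidden $I_{f,f}$. You instead take $J$ to be a single $R$-edge (so $G\to J$ iff $G$ has no loop and no directed $R$-path of length two), encode words over $\{E,F\}$ as labeled paths $I_w$, and observe that $\{w\mid I_w\to T\}$ is regular for every fixed finite $T$; hence a generalized right-adjoint would force $L=\{w\mid P(I_w)\to J\}$ to be regular, which your explicit computation ($E^aF^bE^cF^d\in L$ iff $a<b$ or $c>d$, via the pivot analysis at $v_{a+b}$, which I have checked and which is correct) refutes by the standard pumping lemma. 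Your reduction to regularity is more modular and reusable -- it rules out a generalized right-adjoint for any program whose ``maps-to-$J$'' behaviour on path-encoded words is non-regular, and it cleanly outsources the combinatorial core to formal language theory -- while the paper's argument is self-contained and stays entirely within the category of instances; both ultimately exploit the same non-regular phenomenon $\{E^nF^n\}$. One small point to tidy in a final write-up: in the pumping step the degenerate case $y=E^N$ gives $xz=F^NEF$, which falls outside the shape $E^aF^bE^cF^d$ with $a\geq 1$; but $P(I_{F^NEF})$ is a single $R$-edge and so maps to $J$, hence $xz\in L$ and the contradiction survives.
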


\vspace{-2mm}

\begin{restatable}{proposition}{propcacyclicnoadjoint}
\label{prop:cacyclic-no-adjoint}
The monadic Datalog program given by the single rule
$Ans(x) \colondash E(x,x)$
does not admit a generalized right-adjoint. 
\end{restatable}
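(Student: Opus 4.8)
The plan is to show that the monadic Datalog program $P$ given by the single rule $Ans(x) \colondash E(x,x)$ has no generalized right-adjoint by exhibiting an instance $J$ for which no finite family $\Omega_P(J)$ can satisfy the defining biconditional. Since $P$ is Boolean-like with a unary output relation $Ans$, let me first understand the semantics: $P(I)$ contains $Ans(a)$ exactly when $E(a,a)$ holds in $I$, i.e., $a$ carries a self-loop. So $P(I)$ is the set of self-loop vertices of the input digraph $I$, viewed as an $\{Ans\}$-instance (a set of unary $Ans$-facts).

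First I would fix the target instance to be the empty $\bfS_{out}^P$-instance $J=\emptyset$ (over schema $\{Ans\}$). For this $J$, the condition $P(I)\to J$ holds iff $P(I)$ is itself empty, i.e., iff $I$ contains no $E$-self-loop at all. Suppose toward a contradiction that $\Omega_P(\emptyset)=\{(J'_1,\iota_1),\ldots,(J'_m,\iota_m)\}$ is a finite set witnessing the adjunction. Then for every $\bfS_{in}^P$-instance $I$, we would have: $I$ has no self-loop iff $I\to J'_k$ for some $k$. In other words, the class of loop-free digraphs would coincide with the class of digraphs mapping homomorphically into one of finitely many fixed targets $J'_1,\ldots,J'_m$.

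The key step is to derive a contradiction from this. The class of loop-free digraphs is exactly the class of digraphs admitting a homomorphism to the ``no-loop'' constraint, which is not finitely dually characterizable in this way. Concretely, I would use the standard fact that the class of structures not mapping into any of a finite set of targets is closed under certain operations, or more directly, I would exhibit a sequence of loop-free digraphs whose ``obstruction'' grows without bound. The cleanest route is: for each $k$, the target $J'_k$ is a fixed finite loop-free digraph (it must be loop-free, since $J'_k$ itself is loop-free iff $J'_k\to J'_k$ via identity forces $J'_k$ to be in the class, and a target containing a loop would let looped instances map into it, violating the biconditional). Take any loop-free digraph $I$ with odd girth (shortest odd cycle) strictly larger than $2|adom(J'_k)|+1$ for every $k$ — for instance, a long directed odd cycle, or better, a loop-free digraph of high chromatic number and high odd girth (which exists by classical constructions). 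Such an $I$ is loop-free, hence should map into some $J'_k$; but a homomorphism $I\to J'_k$ would force $J'_k$ to contain a closed walk of the same odd length, and since $J'_k$ has fewer vertices than the girth of $I$, one shows this closed walk must repeat a vertex and thereby produce a shorter odd closed walk, iterating down to a self-loop in $J'_k$, contradicting loop-freeness of $J'_k$.

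The main obstacle, and the place requiring the most care, is selecting the witnessing instances $I$ and arguing rigorously that no homomorphism into a small loop-free target can exist. The pigeonhole/odd-girth argument above works for undirected or symmetric constraints, but for directed graphs the cleanest choice is a single long \emph{directed cycle} $C_n=\{E(0,1),E(1,2),\ldots,E(n-1,0)\}$, which is loop-free: if $C_n\to J'_k$ then the image is a closed directed walk of length $n$ in $J'_k$, and by taking $n$ larger than $|adom(J'_k)|$ and a multiple that forces collapse, one argues that avoiding a self-loop in $J'_k$ is possible only if $J'_k$ contains a directed cycle, which is fine — so the odd-girth trick is not directly what is needed. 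I would therefore instead take as witnesses a family whose loop-freeness is preserved but whose homomorphic targets are provably unbounded, or appeal to the characterization that a finite homomorphism duality $(F,\{J'_1,\ldots,J'_m\})$ exists iff the obstruction set $F$ consists of finitely many \emph{c-acyclic} (indeed tree-like) instances; since the ``has a self-loop'' property is defined by the single obstruction $\{E(a,a)\}$, which is \emph{not} c-acyclic (it is a one-element cycle through no distinguished element), the relevant duality theorem fails, yielding the non-existence directly. I expect reconciling these two strategies — the explicit combinatorial witness versus the appeal to the structure theory of finite dualities — to be the crux, and I would ultimately favor the explicit argument with the self-loop obstruction $\{E(a,a)\}$, noting it is precisely the minimal cyclic structure that blocks any finite-target characterization.
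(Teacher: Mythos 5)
Your setup coincides exactly with the paper's: take $J=\emptyset$, observe that $P(I)\to J$ iff $I$ is loop-free, assume a finite witness set $J'_1,\ldots,J'_m$, and deduce from the instance $\{E(a,a)\}$ that every $J'_k$ must itself be loop-free. Where the proposal stalls is at the decisive step: producing a loop-free $I$ that cannot map into any of the (loop-free, bounded-size) targets. The paper's witness is the \emph{irreflexive $E$-clique} $K_N$ with $N$ larger than every $|adom(J'_k)|$: it is loop-free, so it would have to map into some $J'_k$, but by pigeonhole any homomorphism identifies two distinct vertices $a\neq b$ with $E(a,b)$ present, producing a self-loop in $J'_k$ --- contradiction. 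You circle this idea (you mention high chromatic number, which is exactly what the clique instantiates) but never commit to it; the odd-girth machinery is a red herring here (no girth condition is needed, only chromatic number exceeding the target sizes), and your directed-cycle attempt fails for the reason you yourself identify. The final sentence, that $\{E(a,a)\}$ ``is precisely the minimal cyclic structure that blocks any finite-target characterization,'' asserts the conclusion rather than proving it.

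Your fallback route --- invoking Theorem~\ref{thm:cacyclic-duals} with $k=0$, noting that the obstruction $\{E(a,a)\}$ with an \emph{empty} tuple of distinguished elements is not homomorphically equivalent to any acyclic instance (anything receiving a homomorphism from it must contain a loop) --- is actually valid and would close the proof, since a generalized right-adjoint evaluated at $J=\emptyset$ yields exactly the finite duality $(\{\{E(a,a)\}\},\{J'_1,\ldots,J'_m\})$ that the theorem forbids. You correctly flag that the pointed version $(\{E(a,a)\},a)$ \emph{is} c-acyclic and so the pointed duality does exist (this is the content of Remark~\ref{rem:cacyclic}), which is why only the $0$-ary case gives the contradiction. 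Had you committed to this route, the proof would be complete, if less elementary than the paper's clique argument; as written, the route you say you favor is left with a genuine gap.
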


For Boolean non-recursive programs, there is a converse to Theorem~\ref{thm:tam-adjoint}
(cf.~also~\cite[Theorem 2.5]{Foniok2015:functors}). 
The following theorem follows from results that we will prove in Section~\ref{sec:dualities} 
(specifically, Theorem~\ref{thm:adjoint-dualities}
in combination with Theorem~\ref{thm:cacyclic-duals}):

\begin{restatable}{theorem}{thmnonrecursive}
\label{thm:non-recursive}
For Boolean non-recursive Datalog programs $P$, the following are
equivalent:
\begin{enumerate}
    \item $P$ admits a generalized right-adjoint,
    \item $P$ is equivalent to a TAM Datalog program.
\end{enumerate}
\end{restatable}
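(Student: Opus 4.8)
The plan is to treat the two directions separately, using the results of Section~\ref{sec:dualities} as black boxes. The implication $(2)\Rightarrow(1)$ is the easy one: by Theorem~\ref{thm:tam-adjoint} every TAM Datalog program admits a generalized right-adjoint, and the property of admitting a generalized right-adjoint is purely semantic, depending only on the relation ``$P(I)\to J$''. It is therefore invariant under equivalence of \edatalog programs: if $P\equiv P'$ then $P(I)\leftrightarrow_{adom(I)} P'(I)$, hence $P(I)\to J$ iff $P'(I)\to J$, so the same $\Omega$ witnesses both. Thus if $P$ is equivalent to a TAM Datalog program, $P$ itself admits a generalized right-adjoint. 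All the content is in $(1)\Rightarrow(2)$.

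For $(1)\Rightarrow(2)$, I would first unpack what a generalized right-adjoint means for a \emph{Boolean non-recursive} program. Since $\bfS_{aux}^P=\emptyset$, no unfolding step is possible, so $\Unfoldings(P,Ans)$ is exactly the finite set $F=\{J_1,\dots,J_m\}$ of canonical instances of the rules, each an (empty-distinguished-tuple) $\bfS_{in}^P$-instance; by Lemma~\ref{lem:unfoldings}, $P(I)=true$ iff $J_i\to I$ for some $i$. I would then evaluate the adjoint at the empty output instance $\emptyset$: because $Ans$ is $0$-ary, $P(I)\to\emptyset$ iff $P(I)=\emptyset$ iff $P(I)=false$. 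Writing $D$ for the set of first components of the pairs in $\Omega_P(\emptyset)$, the defining property of the adjoint reads: \emph{no member of $F$ maps into $I$ iff $I$ maps into some member of $D$}. This is precisely the assertion that $(F,D)$ is a finite homomorphism duality, which is the Boolean instance of Theorem~\ref{thm:adjoint-dualities}, and I would invoke that theorem to obtain the duality cleanly.

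Next I would feed this finite duality into Theorem~\ref{thm:cacyclic-duals}, the characterization of when a finite set of pointed instances has a finite duality, to conclude that each $J_i$ is homomorphically equivalent to a c-acyclic pointed instance $J_i'$. Since the $J_i$ have empty distinguished tuples, c-acyclicity degenerates to acyclicity, so the incidence graph of each $J_i'$ is acyclic. Replacing each rule $Ans()\colondash J_i$ by $Ans()\colondash J_i'$ yields a program $P'$ equivalent to $P$, because homomorphic equivalence of $J_i$ and $J_i'$ guarantees $J_i\to I$ iff $J_i'\to I$ for every $I$. Finally, each rule of $P'$ now has an acyclic incidence graph, so $P'$ is tree-shaped; and since $P'$ is still non-recursive ($\bfS_{aux}^{P'}=\emptyset$), the almost-monadicity conditions are vacuously satisfied. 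Hence $P'$ is a TAM Datalog program equivalent to $P$, establishing $(2)$.

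The substantive mathematics lives entirely in the two imported theorems, so the difficulty of this argument is low once they are available; the main task is bookkeeping. In particular I would be careful to verify that the commuting-square condition in Definition~\ref{def:adjoints} causes no trouble at the empty output instance (when $P(I)=\emptyset$ the top map is the empty homomorphism and the square commutes trivially), and to confirm that ``equivalent to a TAM program'' rather than ``is a TAM program'' is genuinely what $(2)$ asks for. It is precisely this that lets me pass from each body $J_i$ to its acyclic homomorphic equivalent $J_i'$, rather than needing the original rule bodies to be acyclic already.
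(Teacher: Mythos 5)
Your overall route is exactly the one the paper intends: the paper offers no self-contained proof, stating only that the theorem ``follows from Theorem~\ref{thm:adjoint-dualities} in combination with Theorem~\ref{thm:cacyclic-duals}'', and your argument is a faithful expansion of precisely that plan --- direction $(2)\Rightarrow(1)$ via Theorem~\ref{thm:tam-adjoint} and invariance of the adjoint property under equivalence, direction $(1)\Rightarrow(2)$ via the finite duality $(\Unfoldings(P,Ans),D)$ and the c-acyclicity characterization. Your observations that c-acyclicity degenerates to acyclicity for empty distinguished tuples and that almost-monadicity is vacuous when $\bfS_{aux}=\emptyset$ are both correct and are the right bookkeeping points to make.

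There is, however, one step that is wrong as literally written: you conclude from the duality that \emph{each} rule body $J_i$ is homomorphically equivalent to a c-acyclic instance. Theorem~\ref{thm:cacyclic-duals} applies only to sets of \emph{pairwise homomorphically incomparable} pointed instances, a hypothesis you never arrange, and without it the conclusion genuinely fails. Concretely, let $P$ consist of the two rules $Ans()\colondash E(x,y)$ and $Ans()\colondash E(x,y),E(y,z),E(z,x)$. This $P$ is equivalent to the one-rule TAM program $Ans()\colondash E(x,y)$ and hence admits a generalized right-adjoint, yet the triangle body of the second rule is not homomorphically equivalent to any acyclic instance, so it cannot be ``replaced by $J_i'$'' as you propose. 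The repair is one line: first prune $F=\Unfoldings(P,Ans)$ to a subset $F_0$ consisting of one representative of each homomorphically minimal body (this preserves the semantics of $P$, since $J_1\to J_2$ implies that the rule with body $J_2$ is redundant); then $F_0$ is an antichain with $F_0\upclosure=F\upclosure$, so $(F_0,D)$ is still a finite duality, Theorem~\ref{thm:cacyclic-duals} applies to $F_0$, and the remainder of your argument goes through verbatim on the pruned program. With that insertion the proof is correct and coincides with the paper's intended argument.
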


\medskip

Next, we identify a second class of \edatalog programs that admits
generalized right-adjoints.
We say that a \edatalog program is \emph{strongly linear} if
the body of each rule consists of a single
atom (over $\bfS_{in}\cup\bfS_{aux}$) without repeating variables. 
Strongly linear
\edatalog rules are also known as \emph{LAV (``Local-as-View'') constraints}.
Note that strong linearity is a stronger restriction 
than the mere requirement that every rule body contains at most one $\bfS_{aux}$-atom (which is often called linearity).
For instance the program in Example~\ref{ex:tc}
is not strongly linear. Every strongly linear program is clearly tree-shaped.

\begin{example}
Consider the \edatalog program $P=(\bfS_{in}, \bfS_{out}, \bfS_{aux}, \Sigma)$,
where $\bfS_{in}=\{R_{in}\}$, $\bfS_{out}=\{R_{out}\}$, $\bfS_{aux}=\{R\}$, and $\Sigma$ consists of 
the three rules
\[R(x,y,z) \colondash R_{in}(x,y,z) ~~~~~~~~
\exists uv~ R(y,u,v) \colondash R(x,y,z) ~~~~~~~~
R_{out}(x,y,z) \colondash R(x,y,z) 
\]

An $\bfS_{in}$-instance is a database instance consisting
of a single ternary relation, and we can think of the $P$ as performing two
things: copying the input data to the output, and adding additional facts to make sure that the
inclusion dependency $\forall xyz(R(x,y,z)\to \exists uv (R(y,u,v)))$ is satisfied. 
This program $P$ is strongly linear. Note that $P$ is recursive and not 
weakly acyclic. Indeed, $P(I)$ can be an infinite instance even when $I$ is finite.
\end{example}

\vspace{-3mm}

\begin{restatable}{theorem}{thmlinearadjoints}
Every strongly linear \edatalog
program $P$ has a generalized right-adjoint $\Omega_P$. Moreover, for each 
$\bfS_{out}$-instance $J$, $\Omega_P(J)$ is a singleton set and can be computed
in ExpTime from $J$ and $P$. If $P$ is fixed $\Omega_P(J)$ can be computed in polynomial time. %
\end{restatable}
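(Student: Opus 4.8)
The plan is to reduce the existence of a homomorphism $P(I)\to J$ to a purely local ``safety'' condition on $J$, computed as a greatest fixpoint, and then to read off a single input instance $J'$ from that fixpoint. The key structural observation is that strong linearity makes the chase behave like a forward-branching rewriting process: since every rule body is a single atom $S(\by)$ with pairwise distinct variables, (i) every fact $S(\mathbf a)$ with $S\in\bfS_{in}^P\cup\bfS_{aux}^P$ is a legal match for the body, and (ii) the facts produced while chasing forward from $S(\mathbf a)$ depend only on $S$ and on the images chosen for $\mathbf a$, never on joins with other facts. Consequently the obligation ``the $\bfS_{out}^P$-facts generated from $S(\mathbf a)$ can be mapped into $J$'' factors through the relation $S$ and the tuple of $J$-images of $\mathbf a$ alone. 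This is exactly what forces $\Omega_P(J)$ to be a single instance: no disjunction arises, in contrast to rules with several body atoms such as the rule $Q_3()\colondash Q_1(x),Q_2(y)$ discussed earlier.

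Concretely, I would first form $J^+$ by adding to $J$ one fresh isolated element $\star$ (and no new facts); $\star$ will serve as the image of domain elements that carry no obligation. Then, over the finite complete lattice of families $(\mathrm{OK}_R)_{R\in\bfS_{in}^P\cup\bfS_{aux}^P}$ of predicates $\mathrm{OK}_R\subseteq adom(J^+)^{arity(R)}$, I define the operator $\Phi$ sending such a family to the one where $\Phi(\mathrm{OK})_R(\mathbf v)$ holds iff for every rule $\exists\mathbf z\,(A_1,\dots,A_m)\colondash R(\by)$ there is a tuple $\mathbf w$ over $adom(J^+)$ (the intended images of $\mathbf z$) such that, writing $\by\mapsto\mathbf v,\ \mathbf z\mapsto\mathbf w$, each head atom $A_j=T_j(\mathbf t_j)$ satisfies $T_j(\text{image of }\mathbf t_j)\in J$ when $T_j\in\bfS_{out}^P$, and $\mathrm{OK}_{T_j}(\text{image of }\mathbf t_j)$ when $T_j\in\bfS_{aux}^P$. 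As $\mathrm{OK}$ grows the right-hand side only gets easier to satisfy, so $\Phi$ is monotone; let $\mathrm{OK}^*$ be its greatest fixpoint, obtained by iterating $\Phi$ downwards from the top element (this terminates because the lattice is finite). I then set $adom(J')=adom(J)\cup\{\star\}$, put $S(\mathbf v)\in J'$ exactly when $S\in\bfS_{in}^P$ and $\mathrm{OK}^*_S(\mathbf v)$, let $\iota$ be the projection to $adom(J)$ (undefined on $\star$), and declare $\Omega_P(J)=\{(J',\iota)\}$.

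For correctness I would argue the two directions separately. For the forward direction, given $g\colon P(I)\to J$, I extend it to $\hat g$ on all of $adom(\chase_P(I))$ by sending every element lying in no $\bfS_{out}^P$-fact to $\star$; since $\star$ is isolated, this respects all $\bfS_{out}^P$-facts. The set $\{(R,\hat g(\mathbf a))\mid R(\mathbf a)\in\chase_P(I),\ R\in\bfS_{in}^P\cup\bfS_{aux}^P\}$ is then a post-fixpoint of $\Phi$ — for each rule whose body matches $R(\mathbf a)$, the universal solution satisfies its head, and the $\hat g$-images of the witnessing elements furnish the required $\mathbf w$ — hence it is contained in $\mathrm{OK}^*$. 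Restricting $\hat g$ to $adom(I)$ therefore yields a homomorphism $I\to J'$. For the converse, given $h\colon I\to J'$, I build $g^*\colon P(I)\to J$ by extending $h$ along the chase: whenever a rule fires on a fact whose images already satisfy $\mathrm{OK}^*$, the fixpoint equation supplies images $\mathbf w$ for the freshly introduced nulls that validate all $\bfS_{out}^P$-head atoms in $J$ and keep $\mathrm{OK}^*$ true for all $\bfS_{aux}^P$-head atoms, so the construction can always be continued. Because each null is created exactly once (strong linearity again), $g^*$ is well defined, and the argument is coinductive, so it goes through even when $\chase_P(I)$, and hence $P(I)$, is infinite; this is what lets the statement quantify over all $I\in\Instinf[\bfS_{in}^P]$. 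With $\iota$ the projection, the required square commutes on the nose.

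The main obstacle is precisely the recursion: because $P$ may be non-terminating (as in the strongly linear inclusion-dependency example above), $P(I)$ can be infinite, so a least-fixpoint/unfolding analysis in the style of Lemma~\ref{lem:unfoldings} does not suffice. Taking the \emph{greatest} fixpoint of $\Phi$ is what captures the safety condition ``the forward chase can be continued into $J$ forever,'' and the coinductive extension of $h$ to $g^*$ is the delicate step that must be carried out with care, including the bookkeeping of the free element $\star$ for domain elements that end up outside $P(I)$. Finally, for the complexity claim I observe that the lattice has size $\sum_R(|adom(J)|+1)^{arity(R)}\le |P|\cdot(|adom(J)|+1)^{a}$, with $a$ the maximal arity, that the downward iteration stabilizes within that many steps, and that each evaluation of $\Phi$ searches over tuples $\mathbf w$ of length at most $|P|$; this is $2^{O(|P|)}\cdot\mathrm{poly}(|J|)$, i.e.\ ExpTime in general and, for fixed $P$ (hence fixed schema and bounded arity), polynomial in $|J|$.
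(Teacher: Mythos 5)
Your construction is essentially the paper's: the paper defines $J'$ as the set of $\bfS_{in}$-facts $R(\mathbf{d})$ over $adom(J)\cup\{\bot\}$ satisfying $P(\{R(\mathbf{d})\})\to_{adom(J)}J$, and then observes that this set is computed by exactly your greedy/greatest-fixpoint iteration (start from the maximal instance whose $\bfS_{out}$-part lies in $J$ and repeatedly delete $\bfS_{in}\cup\bfS_{aux}$-facts violating a rule), with the same partial projection $\iota$ and the same complexity accounting. The two correctness directions are argued with interchangeable machinery --- the paper routes one direction through $P(I)\to P(J')\to_{adom(J)}J$ via functoriality and the other through the fact that images of solutions under total maps remain solutions, where you use a post-fixpoint argument and a coinductive chase extension --- but the underlying use of strong linearity, namely that the chase decomposes fact-by-fact so no disjunction arises, is identical.
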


Right adjoints for strongly linear \edatalog programs consisting of a single non-recursive rule were initially given, using a different terminology, in \cite{Pultr70}.

Based on the above results, we can define a larger fragment of \edatalog that admits generalized right-adjoints: every \edatalog program that can be 
represented as a finite composition (in any order) of TAM Datalog programs and 
strongly linear \edatalog programs admits a generalized right-adjoint. It is possible to give a syntactic definition
of this language (using stratification) but we will  not do so here.

\section{Intermezzo: TGDs}

A \emph{tuple-generating dependency (TGD)} is a first-order sentence
of the form 
\[ \forall\textbf{x}(\phi(\textbf{x})\to \exists \textbf{y}\psi(\textbf{x},\textbf{y}))\]
where $\phi(\textbf{x})$ and $\psi(\textbf{x},\textbf{y})$ are  conjunctions
of relational atomic formulas. TGDs allow expressing a wide variety of constraints, including database dependencies such as inclusion dependencies. 

Every finite set of TGDs naturally gives rise to an \edatalog program.
More precisely, for any set $\Sigma$ of TGDs over a schema $\bfS$, 
we will denote by $P_\Sigma$ the $\edatalog$ program 
with $\bfS_{in}^P=\{R_{in}\mid R\in\bfS\}$, $\bfS_{out}^P=\{R_{out}\mid R\in\bfS\}$, and $\bfS_{aux}^P=\bfS$,
consisting
of the TGDs in $\Sigma$ as \edatalog rules
(where $\forall\textbf{x}(\phi(\textbf{x})\to \exists \textbf{y}\psi(\textbf{x},\textbf{y}))$ becomes $\exists\textbf{y}\psi(\textbf{x},\textbf{y}) \colondash \phi(\textbf{x})$), plus the ``copy constraints'' $R(\textbf{x}) \colondash R_{in}(\textbf{x})$
and $R_{out}(\textbf{x})\colondash R(\textbf{x})$
for each $R\in \bfS$. 

Although the input and output schemas of $P_\Sigma$ are renamings of $\bfS$, we will be sloppy and write
$P(I)$ even when $I$ is an $\bfS$-instance, with the understanding that
relation symbols are renamed in the obvious way; and similarly, we
will treat $P(I)$ as an $\bfS$-instance.

The \edatalog program $P_\Sigma$ ``captures'' $\Sigma$ in the 
following sense: 

\begin{lemma} 
\label{lem:capturing}
Let $\Sigma$ be any finite set of TGDs. Then:
\begin{enumerate}
\item For all $\bfS$-instances $I$, $I\subseteq P_\Sigma(I)$ and $P_\Sigma(I)\models\Sigma$.
\item For all $\bfS$-instances $I\models\Sigma$, $P_\Sigma(I)\to_{adom(I)} I$.
\end{enumerate}
\end{lemma}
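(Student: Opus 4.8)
The plan is to prove both statements by directly appealing to the semantics of the \edatalog program $P_\Sigma$ in terms of its universal solution. Recall that $P_\Sigma(I)$ is (the relevant reduct of) a universal solution of $I$ with respect to $P_\Sigma$, and that solutions are, by definition, instances $J\supseteq I$ (modulo the input/output renaming) in which every rule of $P_\Sigma$ is satisfied. The key observation tying everything together is that the rules of $P_\Sigma$ come in three groups: the copy-in constraints $R(\mathbf{x})\colondash R_{in}(\mathbf{x})$, the copy-out constraints $R_{out}(\mathbf{x})\colondash R(\mathbf{x})$, and the TGDs of $\Sigma$ rewritten over the auxiliary schema $\bfS$. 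Since the auxiliary schema $\bfS$ is a renamed copy of the working schema, satisfaction of the rewritten TGDs in a solution is literally satisfaction of $\Sigma$ on the $\bfS$-reduct.

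For part (1), I would argue as follows. First, $I\subseteq P_\Sigma(I)$: the copy-in constraints force every input fact $R_{in}(\mathbf{a})$ to generate the auxiliary fact $R(\mathbf{a})$, and then the copy-out constraints force $R_{out}(\mathbf{a})$; thus every fact of $I$ (under the renaming convention that lets us treat $I$ as an $\bfS$-instance and read $P_\Sigma(I)$ back as an $\bfS$-instance) appears in $P_\Sigma(I)$. Here I use the sloppiness convention spelled out in the paragraph preceding the lemma, identifying $R$, $R_{in}$, and $R_{out}$. Second, $P_\Sigma(I)\models\Sigma$: this is immediate from the fact that $P_\Sigma(I)$ is (the $\bfS$-reduct of) a solution, so in particular every rewritten TGD of $\Sigma$ is satisfied on the auxiliary relations, which under the renaming is exactly $\Sigma$ holding on $P_\Sigma(I)$.

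For part (2), suppose $I\models\Sigma$. The strategy is to exhibit $I$ itself (expanded with the copied relations) as a \emph{solution} for $I$, and then invoke the universality of $P_\Sigma(I)$. Concretely, let $J_0$ be the instance over $\bfS_{in}\cup\bfS_{out}\cup\bfS_{aux}$ whose input, auxiliary, and output copies of each relation all equal the corresponding relation of $I$. Then $J_0\supseteq I$, and I claim all rules of $P_\Sigma$ are satisfied in $J_0$: the copy constraints hold trivially since all three copies coincide, and each rewritten TGD holds on the auxiliary copy precisely because $I\models\Sigma$. Hence $J_0$ is a solution for $I$. By universality of the universal solution underlying $P_\Sigma(I)$, there is a homomorphism from that universal solution into $J_0$ fixing $adom(I)$; restricting to the $\bfS_{out}$-reduct and reading through the renaming gives $P_\Sigma(I)\to_{adom(I)} I$, as required.

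The step I expect to require the most care is not any deep obstacle but rather the bookkeeping around the renaming between $\bfS$, $\bfS_{in}$, $\bfS_{out}$, and $\bfS_{aux}$, and being precise about where the universal-solution property (rather than the chase construction) is invoked in part (2). In particular, the definition of universal solution quantifies over \emph{all} solutions, finite or infinite; once I have verified that the finite instance $J_0$ is a genuine solution, universality delivers the homomorphism $P_\Sigma(I)\to_{adom(I)} J_0\cong I$ directly, without needing to reason about whether $P_\Sigma(I)$ is finite. Care is also needed to ensure the homomorphism fixes $adom(I)$, which follows because $J_0$ restricted to $adom(I)$ is isomorphic to $I$ via the identity and the universal solution's homomorphism is by definition an $\to_{adom(I)}$ map.
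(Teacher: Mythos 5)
The paper states this lemma without proof, and your argument is the evident intended one: part (1) from the copy rules together with the fact that any solution satisfies the rewritten TGDs on the auxiliary copy, and part (2) by exhibiting the ``three identical copies of $I$'' instance as a finite solution and invoking universality to get a homomorphism fixing $adom(I)$. The only point worth making explicit in your part (1) is that $P_\Sigma(I)$ is the $\bfS_{out}$-reduct rather than the $\bfS_{aux}$-reduct, so one should observe that in the chase these coincide up to renaming (the copy rule $R_{out}(\mathbf{x})\colondash R(\mathbf{x})$ is the only rule with $R_{out}$ in its head, so $R_{out}$ and $R$ receive exactly the same tuples); with that remark your proof is complete and correct.
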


For any property $X$ of \edatalog programs
(e.g., \emph{tree-shaped}, \emph{weakly acyclic}, or \emph{having a generalized right-adjoint}), we will say that a finite set of TGDs
$\Sigma$ has property $X$ if $P_\Sigma$ does.

Next, we will give some examples of sets of TGDs that have a
generalized right-adjoint.
We will see the importance of this property in the next sections.

\begin{example}
\label{ex:tgd-adjoint}
The following sets of TGDs have a generalized right-adjoint:

\begin{itemize}
\item $\Sigma_1=\{\forall xyz(R(x,y)\land R(y,z)\to R(x,z))\}$. 
   To see that $P_{\Sigma_1}$ has a generalized right-adjoint, observe
   that it consists of the rules depicted on the left:

   \hbox{\parbox{.5\textwidth}{
   \[\begin{array}{lll}
         R(x,y) &\colondash& R_{in}(x,y) \\
         R(x,z) &\colondash& R(x,y), R(y,z) \\
         R_{out}(x,y) &\colondash& R(x,y) 
   \end{array}\]}%
   \parbox{.5\textwidth}{
   \[\begin{array}{lll}
         R(x,y) &\colondash& R_{in}(x,y) \\
         R(x,z) &\colondash& R(x,y), R_{in}(y,z) \\
         R_{out}(x,y) &\colondash& R(x,y) 
   \end{array}\]}}
   $P_{\Sigma_1}$ is neither a TAM Datalog program, nor a strongly linear \edatalog program. However, it is equivalent to the program $P'$ consisting of
   the rules depicted on the right.
   Note how we have replaced one occurrence of $R$ by $R_{in}$. The equivalence of $P_{\Sigma_1}$ and $P'$ is easy to show.
   Furthermore, $P'$ is a TAM Datalog program (where the articulation position 
   of $R$ is the second position). Since $P_{\Sigma_1}$ is equivalent
   to $P'$ and $P'$ has a generalized right-adjoint, $P_{\Sigma_1}$ does too
   (indeed, it has the same generalized right-adjoint). 
\item $\Sigma_2 = \{\forall xy(R(x,y)\to \exists z(R(y,z))\}$. To see that
$P_{\Sigma_2}$ has a generalized-right adjoint it suffices to observe that
it is a strongly linear \edatalog program. Indeed, the same applies to any
set of $\Sigma$ consisting only of inclusion dependencies. 
\item $\Sigma_3 = \Sigma_1\cup\Sigma_2$.
   Although $\Sigma_3$ is neither strongly linear, nor equivalent to a TAM Datalog program, it is equivalent to the composition of two \edatalog 
   programs, namely $P_{\Sigma_2}$ and $P_{\Sigma_1}$. To see this, note 
   that whenever $I\models\Sigma_2$, then also $P_{\Sigma_1}(I)\models\Sigma_2$. 
   (Coincidentally, the order in which we perform the composition here matters:
   if $I\models\Sigma_1$, it does not, in general, follow that $P_{\Sigma_2}(I)\models\Sigma_1$!)
   Since $P_{\Sigma_1}$ and $P_{\Sigma_2}$ each have a 
    generalized right-adjoint, their composition does too. 
\item $\Sigma_4 = \{\forall xyzu (R(x,y)\wedge R(y,z)\wedge R(z,u)\rightarrow R(x,u)), \forall xy(R(x,y)\rightarrow R(y,x))\}$. Just as in the case of $\Sigma_1$, we have
that, although $P_{\Sigma_4}$ is not a TAM Datalog program, it can be easily rewritten as one. We will return to this example later, in Remark~\ref{rem:tgd-adjoint-acyclic}.
\item Let us say
that a TGD is \emph{monadic} if the 
relation in the rule head is monadic.
Then, for every finite set $\Sigma$
of monadic tree-shaped TGDs, $P_\Sigma$ 
is a TAM Datalog program.
\end{itemize}
\end{example}

\begin{remark}
Most of the above examples  involve
adhoc arguments. 
We leave it as an open problem to define a large syntactic class of (sets of) TGDs that have a generalized
right-adjoint, which includes $\Sigma_1$. Theorem~\ref{thm:tam-adjoint} with Theorem~\ref{thm:mso-to-tam}
does imply that, for finite sets of tree-shaped
TGDs $\Sigma$, if $P_\Sigma$ is MSO-definable then
$\Sigma$ has a generalized right-adjoint.
\end{remark}

\section{Homomorphism Dualities}
\label{sec:dualities}

For any set of instances $X$, let $X\upclosure = \{A\mid B\to A$ for some $B\in X\}$, and let $X\downclosure = \{A\mid A\to B$ for some $B\in X\}$.
A \emph{homomorphism duality} is a pair of sets of instances $(F,D)$, 
such that $F\upclosure$ is the complement of $D\downclosure$. The same
definition extends naturally to pointed instances. By a \emph{finite}
homomorphism duality, we mean a homomorphism duality $(F,D)$ where $F$
and $D$ are finite sets. By a \emph{tree duality}, we mean a 
homomorphism duality $(F,D)$ where $F$ is a (possibly infinite)
set of (not-necessarily-connected) acyclic instances, and $D$ is finite. 

The study of dualities originated in combinatorics (see \cite{HellNesetril2004}) motivated by its links to the structure of the homomorphism partial order, and the complexity of deciding the existence of homomorphism between graphs and, more generally, relational structures (a.k.a.~constraint satisfaction problems or CSPs).
Indeed, dualities have played an important role in the study of CSPs. In particular, it was shown \cite{Atserias08} that the CSPs definable in FO are precisely those whose template are the right-hand side of a finite duality. In a similar vein, %
the CSPs solvable by the well-known \emph{arc-consistency} algorithm are precisely those whose template is the right-hand side of a tree duality. More generally, 
the CSPs that are solvable by local consistency methods are
those whose template is the right-hand side of 
a homomorphism duality whose left-hand side consists of instances of bounded treewidth. See \cite{BulatovKL08} for a survey on the connections between duality and consistency algorithms.

\begin{example} 
Let $\bfS=\{R\}$, where $R$ is a binary relation symbol, and let  $n\geq 1$.
Let $L_n$ be the finite linear order of length $n$, and 
let $P_{n+1}$ be the directed path of length $n+1$.
Then $(\{P_{n+1}\},\{L_n\})$ is a finite homomorphism duality.
\end{example}

\begin{example}
Let $\bfS=\{P_0,P_1,E\}$, where $P_0$ and $P_1$ are unary and $E$ is binary, and consider the two-element $\bfS$-instance $I=\{P_0(0), P_1(1), E(0,0), E(1,1)\}$ (without distinguished elements). It is easy to see that, for all $\bfS$-instances $J$, $J\to I$ holds if and only if 
no connected component of $J$ contains both a $P_0$-fact
and a $P_1$-fact. This can be 
expressed in the form of a tree duality:
let $F$ be the set of all (acyclic) instances consisting of an oriented path that connects
 a $P_0$-node to a $P_1$-node.
Then
$(F,\{I\})$ is a homomorphism duality.
\end{example}

\begin{theorem}[\cite{FoniokNT08,tCD2022:conjunctive}]
\label{thm:cacyclic-duals}
Fix a schema $\bfS$ and $k\geq 0$.
Let $F$ be any finite set of pairwise homomorphically incomparable $k$-ary pointed instances over $\bfS$. 
The following are equivalent:
\begin{enumerate}
\item There is a finite set of $k$-ary pointed instances $D$ over $\bfS$ such that $(F,D)$ is a 
homomorphism duality.
\item Each pointed instance in $F$ is homomorphically equivalent to a c-acyclic pointed instance. 
\end{enumerate}
Moreover (for fixed $\bfS$ and $k$), given a set $F$ of c-acyclic  
pointed instances, such a set $D$ can be
computed in ExpTime.
\end{theorem}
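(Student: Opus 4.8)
The plan is to prove Theorem~\ref{thm:cacyclic-duals}, focusing on the harder and more interesting direction, the construction in (2)~$\Rightarrow$~(1), since this is where the substantive work lies and where the ExpTime bound must be established.

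\textbf{The easy direction.} First I would dispatch (1)~$\Rightarrow$~(2). Suppose $(F,D)$ is a finite homomorphism duality with $F$ and $D$ finite. The key observation is a standard fact about finite dualities: if $(F,D)$ is a duality, then each member of $F$ must be homomorphically equivalent to a c-acyclic pointed instance. I would argue this via the characterization of c-acyclicity in terms of the existence of a "right-hand side" obstruction. Concretely, for a pointed instance $(A,\mathbf{a})$ that is \emph{not} homomorphically equivalent to a c-acyclic one, one can build a sequence of instances (roughly, tensor/product powers or a suitable "blow-up" along a cycle not passing through the distinguished elements) witnessing that $(A,\mathbf{a})\upclosure$ cannot be written as the complement of $D\downclosure$ for any finite $D$. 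I expect to cite \cite{FoniokNT08} for this direction, as it is the classical part of the statement, and present it compactly.

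\textbf{The hard direction.} The main work is (2)~$\Rightarrow$~(1): given $F$ a finite set of c-acyclic $k$-ary pointed instances, construct a finite dual set $D$ in ExpTime. Here is where I would exploit the machinery built earlier in the paper. The approach I would take is to encode the obstruction-finding problem as a Datalog/adjoint question. The idea is that for a single c-acyclic pointed instance $(A,\mathbf{a})$, the class of instances $B$ such that $(A,\mathbf{a})\not\to(B,\mathbf{b})$ should be captured by the non-existence of a homomorphism to a \emph{finite} family of "dual" instances. Since $(A,\mathbf{a})$ is c-acyclic, every cycle passes through a distinguished element, which is exactly the structural condition that makes a tree-like (almost-monadic, tree-shaped) Datalog program able to detect homomorphisms from $(A,\mathbf{a})$. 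I would build a TAM Datalog program whose unfoldings realize $A$ up to homomorphic equivalence (using that c-acyclicity lets us "cut open" the cycles at the distinguished elements, turning $A$ into a tree-shaped object with the distinguished elements as articulation points), and then invoke Theorem~\ref{thm:tam-adjoint} to obtain a generalized right-adjoint. The dual set $D$ is then read off from the output of the adjoint applied to a suitable canonical "refutation" target.

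\textbf{Assembling the duality and the complexity bound.} Concretely, I would set up a TAM Datalog program $P$ (over an enriched input schema marking the distinguished elements by fresh unary relations $Q_1,\dots,Q_k$, exactly as in Theorem~\ref{thm:monadic-reduction}) such that $P$ detects a homomorphism from the members of $F$. The generalized right-adjoint $\Omega_P$ from Theorem~\ref{thm:tam-adjoint}, applied to the "false/empty" target, yields the finite set $D$: the elements of $\Omega_P(J)$ for the appropriate $J$ are precisely the instances to which an input $B$ fails to admit a homomorphism exactly when some member of $F$ maps into $B$. The duality equation $F\upclosure = \text{complement of } D\downclosure$ then follows from the defining adjunction property $P(I)\to J \iff I\to B'$ for some $B'\in\Omega_P(J)$, combined with Lemma~\ref{lem:unfoldings} relating $P$ to homomorphisms from the unfoldings (which are the members of $F$). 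For the complexity, Theorem~\ref{thm:tam-adjoint} gives that $\Omega_P(J)$ is computable in 2Exptime in general and in ExpTime when the arity is bounded; since here the arity $k$ and the schema $\bfS$ are fixed, the bounded-arity case applies and yields the claimed ExpTime bound.

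\textbf{Main obstacle.} The hard part will be the reduction of an \emph{arbitrary} c-acyclic pointed instance to a TAM Datalog program whose unfoldings recover it up to homomorphic equivalence, while keeping the transformation size-controlled enough for the ExpTime bound. The subtlety is that c-acyclicity permits cycles through the distinguished elements, so naively "unfolding" at those articulation points could blow up; I will need to argue that the distinguished elements act as genuine articulation positions (in the almost-monadic sense) and that the tree-shaped, almost-monadic structure is preserved. Verifying that the instances produced by $\Omega_P$ are genuinely the dual family (i.e., that the adjunction square translates exactly into the duality condition, including the handling of the pointed/distinguished-element bookkeeping via the partial functions $\iota$) will require care, but should follow mechanically once the program $P$ is correctly set up.
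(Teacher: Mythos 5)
Your plan for the hard direction (2)~$\Rightarrow$~(1) is built on an approach that the paper itself explicitly rules out. You propose to encode each c-acyclic pointed instance $(A,\mathbf{a})$ as a TAM Datalog program whose unfoldings recover $(A,\mathbf{a})$, and then read off the dual set from the generalized right-adjoint of Theorem~\ref{thm:tam-adjoint}. But TAM Datalog programs are \emph{tree-shaped}: the incidence graph of every rule body must be acyclic, and consequently (Lemma~\ref{lem:tree-unfoldings}) every unfolding is an \emph{acyclic} pointed instance. A c-acyclic instance may contain cycles, as long as they pass through distinguished elements --- e.g.\ $(\{E(a,a)\},a)$ --- and such an instance can never arise as the unfolding of a tree-shaped program. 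There is no mechanism by which ``the distinguished elements act as articulation positions'' that would let you cut these cycles open: the articulation-function machinery of almost-monadic Datalog constrains how variables in auxiliary relations are joined across rules; it does not relax the acyclicity requirement on rule bodies. Worse, the obstruction is not merely syntactic: Proposition~\ref{prop:cacyclic-no-adjoint} exhibits the single-rule program $Ans(x)\colondash E(x,x)$, whose canonical instance is c-acyclic but not acyclic, and proves it admits \emph{no} generalized right-adjoint (via the clique argument). Remark~\ref{rem:cacyclic} states in so many words that the ``dualities through adjoints'' technique cannot construct finite dualities for arbitrary c-acyclic instances. So the adjoint route only yields the special case of Theorem~\ref{thm:cacyclic-duals} where $F$ consists of genuinely acyclic instances; for the full c-acyclic case it cannot be made to work.

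You should also note that the paper does not prove this theorem at all: it is imported from \cite{FoniokNT08,tCD2022:conjunctive}, with the ExpTime bound extracted from the latter. The actual constructions there are direct combinatorial ones (exponential-size duals built from the c-acyclic structure itself), not adjoint-based. Your sketch of the easy direction (1)~$\Rightarrow$~(2) is fine in spirit and correctly attributed, but the substantive half of your proposal would need to be replaced wholesale by one of those direct constructions.
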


The ExpTime bound is not explicitly stated in \cite{tCD2022:conjunctive} but follows from  results
in that paper.

\paragr{Constructing Dualities through Adjoints}
The following theorem establishes a close relationship between 
generalized right-adjoints and homomorphism dualities. Specifically,
it shows that generalized right-adjoints can be used to construct
duals. This approach was first used in \cite{Foniok2015:functors} and \cite{DalmauKO}, where right-adjoints are applied to derive the dual of 
a tree.

\begin{restatable}{theorem}{thmadjointdualities}
\label{thm:adjoint-dualities}
Let $P$ be any Datalog program that has a generalized right-adjoint.
Then, for each $R\in\bfS_{out}^P$, there is a
finite set of pointed $\bfS_{in}$-instances $D$ such that $(\Unfoldings(P,R),D)$ is a homomorphism 
duality. 
\end{restatable}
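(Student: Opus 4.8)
The plan is to reformulate the claim using Lemma~\ref{lem:unfoldings}, which tells us that $\Unfoldings(P,R)\upclosure$ is exactly the class of $k$-ary pointed $\bfS_{in}$-instances $(\mathbf{a}$-pointed) $(I,\mathbf{a})$ with $R(\mathbf{a})\in P(I)$, where $k=arity(R)$. Thus it suffices to produce a finite set $D$ of $k$-ary pointed $\bfS_{in}^P$-instances such that, for every $(I,\mathbf{a})$, we have $R(\mathbf{a})\notin P(I)$ if and only if $(I,\mathbf{a})\to (D',\mathbf{c})$ for some $(D',\mathbf{c})\in D$; these two conditions are then complementary, which is precisely what it means for $(\Unfoldings(P,R),D)$ to be a homomorphism duality. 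I would obtain $D$ by applying the generalized right-adjoint $\Omega_P$ to a small family of ``almost-complete'' output instances, one for each identification pattern of the distinguished tuple, and then use the partial functions $\iota$ to equip the resulting input instances with distinguished elements.

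Concretely, for each equivalence relation $\theta$ on $\{1,\dots,k\}$ I would define an $\bfS_{out}^P$-instance $J^\theta_R$ whose active domain consists of one element $[i]_\theta$ per $\theta$-class together with one extra element $\bot$, and which contains every $\bfS_{out}^P$-fact over this domain \emph{except} the single fact $R([1]_\theta,\dots,[k]_\theta)$; write $s^\theta_i=[i]_\theta$ and $\mathbf{s}^\theta=(s^\theta_1,\dots,s^\theta_k)$. The key semantic observation is: whenever $\theta$ is the identification pattern of $\mathbf{a}$ (that is, $a_i=a_j$ iff $i\,\theta\,j$), there is a homomorphism $g\colon P(I)\to J^\theta_R$ with $g(\mathbf{a})=\mathbf{s}^\theta$ if and only if $R(\mathbf{a})\notin P(I)$. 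The forward direction is immediate, since if $R(\mathbf{a})\in P(I)$ then this fact would have to be sent to the one missing fact of $J^\theta_R$; for the converse one maps $a_i\mapsto s^\theta_i$ and every other element to $\bot$, and checks that, because the pattern of $\mathbf{a}$ is exactly $\theta$, the only $R$-fact that could land on the forbidden tuple is $R(\mathbf{a})$ itself.

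I would then set $\Omega_P(J^\theta_R)=\{(J'_1,\iota_1),\dots\}$ and let $D$ consist of all pointed instances $(J'_l,\mathbf{c})$, over all $\theta$ and all $l$, where $\mathbf{c}=(c_1,\dots,c_k)$ ranges over tuples with $c_i\in\iota_l^{-1}(s^\theta_i)$. Each $\iota_l^{-1}(s^\theta_i)$ is a finite subset of $adom(J'_l)$ and there are finitely many patterns $\theta$, so $D$ is finite. Both directions of the required equivalence then follow from the adjunction together with the commuting square of Definition~\ref{def:adjoints}, which gives $g(x)=\iota(f(x))$ for $x\in adom(I)$ whenever $f\colon I\to J'$ corresponds to $g\colon P(I)\to J^\theta_R$. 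For completeness, if $R(\mathbf{a})\notin P(I)$ then, taking $\theta$ to be the pattern of $\mathbf{a}$, the witness $g$ from the previous paragraph yields some $(J',\iota)\in\Omega_P(J^\theta_R)$ and $f\colon I\to J'$ with $\iota(f(a_i))=g(a_i)=s^\theta_i$, so $\mathbf{c}:=(f(a_1),\dots,f(a_k))$ is an admissible tuple and $f\colon (I,\mathbf{a})\to (J',\mathbf{c})\in D$. For soundness, a pointed homomorphism $(I,\mathbf{a})\to (J',\mathbf{c})\in D$ arising from some pattern $\theta$ composes, via the adjunction, to a homomorphism $g\colon P(I)\to J^\theta_R$ with $g(a_i)=\iota(c_i)=s^\theta_i$; since the forward direction of the semantic observation holds for \emph{every} $\theta$ (it only uses that $R(\mathbf{s}^\theta)$ is the missing fact), this forces $R(\mathbf{a})\notin P(I)$.

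The main obstacle is bridging the mismatch between the pointed nature of the desired duality and the fact that $\Omega_P$ produces ordinary, non-pointed homomorphisms: the distinguished elements of the dual instances are not handed to us directly but must be reconstructed from the partial functions $\iota$ through the preimages $\iota^{-1}(s^\theta_i)$, and the correctness of this reconstruction relies essentially on the commuting square in Definition~\ref{def:adjoints}. A secondary subtlety is that a single ``almost-complete'' instance cannot correctly handle tuples with repeated entries, which is why one must range over all identification patterns $\theta$ and take the union of the resulting duals.
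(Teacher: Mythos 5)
Your proposal is correct and follows essentially the same route as the paper: construct an ``almost complete'' $\bfS_{out}^P$-instance omitting exactly one $R$-fact, apply the generalized right-adjoint to it, and recover the distinguished tuples of the dual instances from preimages under the partial functions $\iota$, using Lemma~\ref{lem:unfoldings} to translate membership in $\Unfoldings(P,R)\upclosure$ into derivability of $R(\mathbf{a})$ and the commuting square of Definition~\ref{def:adjoints} to transfer pointedness across the adjunction. The one place where you genuinely go beyond the paper is the treatment of distinguished tuples with repeated entries. The paper works with a single instance $J$ over $\{b_1,\dots,b_k,c\}$ with the $b_i$ pairwise distinct, and its claimed equivalence ``$R(\mathbf{c})\in P(C)$ iff $(P(C),\mathbf{c})\not\to(J,\mathbf{b})$'' fails when $\mathbf{c}$ has repetitions: no map can send a single element $c_i=c_j$ to two distinct elements $b_i\neq b_j$, so the non-existence of the pointed homomorphism holds vacuously while $R(\mathbf{c})$ need not be derivable (e.g., for the program with the single rule $R(x,y)\colondash E(x,y)$ and the pointed instance $(\{E(a,b)\},(a,a))$). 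Your family $J^\theta_R$ indexed by identification patterns $\theta$ is exactly the repair needed, your check that only $R(\mathbf{a})$ itself can land on the forbidden tuple is the right argument, and your observation that the soundness direction works for every $\theta$ (not just the pattern of $\mathbf{a}$) is what makes the union over patterns harmless. The only remaining technicality, which your write-up shares with the paper's, is that $g(\mathbf{a})$ presupposes $a_i\in adom(P(I))$; this is easily handled by sending such elements to $\bot$, where your construction already places everything outside the distinguished tuple.
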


\begin{proof} 
We may assume without loss of generality
that $\bfS_{out}^P=\{R\}$.
Let $J$ be the 
$\bfS_{out}^P$-instance
with $adom(J)=\{b_1, \ldots, b_k, c\}$ 
(for $k=arity(R)$)
containing all 
$R$-facts over $adom(J)$ except $R(b_1, \ldots, b_k)$.
Let 
$D=\{(J',\textbf{b}')\mid (J',\iota)\in\Omega_P(J), \textbf{b}'\in adom(J')^k, \iota(\textbf{b}')=\textbf{b}\}$, where $\textbf{b}=b_1, \ldots, b_k$.
We claim that $(\Unfoldings(P,R), D)$ is a homomorphism duality.
Let $(C,\textbf{c})$ be any $\bfS_{in}$-instance with $k$ distinguished elements. Then an instance in $\Unfoldings(P,R)$ homomorphically maps to $(C,\textbf{c})$ iff
$R(\textbf{c}) \in P(C)$ iff  
$(P(C),\textbf{c})\not\to(J,\textbf{b})$ iff (by the adjoint property)
$(C,\textbf{c})\not\to (J',\textbf{b}')$ for all $(J',\iota)\in \Omega_P(J)$ and $\textbf{b}'$ with $\iota(\textbf{b}')=\textbf{b}$.
\end{proof}

In particular, this implies that, 
for every TAM Datalog program $P$ and for each $R\in\bfS_{out}^P$, there is a
finite set of pointed instances $D$ such that $(\Unfoldings(P,R),D)$ is a homomorphism duality.
Since $\Unfoldings(P,R)$ consists of acyclic instances whenever $P$ is a TAM Datalog program, this gives us a systematic way of constructing tree-dualities.

Observe that, in the special case where $P$ is a TAM Datalog program, or a strongly linear $\edatalog$ program, the proof of Theorem~\ref{thm:adjoint-dualities} yields a ExpTime algorithm for computing $D$ from $(P,R)$ provided the arity of the relations in $P$ is bounded. %

In fact, the following theorem says that every tree-duality can be 
obtained in this way.

\begin{corollary}
\label{co:homdual}
Let $F$ be any set of acyclic pointed instances.
 The following 
are equivalent:
\begin{enumerate}
    \item There is a finite set of pointed instances $D$ such that $(F,D)$ is a homomorphism duality
    \item $F\upclosure=\Unfoldings(P,R)\upclosure$ for some TAM Datalog program $P$ and $R\in\bfS_{out}^P$.
\end{enumerate}
\end{corollary}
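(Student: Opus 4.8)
The plan is to prove Corollary~\ref{co:homdual} by establishing the two implications separately, where the forward direction $(1)\Rightarrow(2)$ is the genuinely new content and $(2)\Rightarrow(1)$ follows almost immediately from the machinery already developed.

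For the easy direction $(2)\Rightarrow(1)$, I would simply invoke Theorem~\ref{thm:adjoint-dualities} together with Theorem~\ref{thm:tam-adjoint}. Since every TAM Datalog program has a generalized right-adjoint, Theorem~\ref{thm:adjoint-dualities} hands me a finite set $D$ such that $(\Unfoldings(P,R),D)$ is a homomorphism duality. The hypothesis that $F\upclosure = \Unfoldings(P,R)\upclosure$ means $F$ and $\Unfoldings(P,R)$ generate the same up-set, and since a homomorphism duality $(F_0,D)$ depends only on $F_0\upclosure$ (as $F_0\upclosure$ must equal the complement of $D\downclosure$), the same $D$ witnesses that $(F,D)$ is a homomorphism duality. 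This is essentially a one-line argument.

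The forward direction $(1)\Rightarrow(2)$ is where the work lies. Given a finite set $D$ of pointed instances with $(F,D)$ a homomorphism duality, I need to produce a TAM Datalog program $P$ and output relation $R$ with $F\upclosure = \Unfoldings(P,R)\upclosure$. My plan is to build $P$ directly from the acyclic members of $F$: each acyclic pointed instance $(J,\mathbf{b})\in F$ becomes the canonical instance of a (non-recursive) Datalog rule $R(\mathbf{x}_{\mathbf{b}})\colondash \text{body}(J)$ whose body lists the facts of $J$ and whose head tuple records the distinguished elements. Taking $R$ as the single output relation and $P$ the collection of all these rules, the unfoldings of $P$ are (up to homomorphic equivalence) exactly the instances in $F$, so $\Unfoldings(P,R)\upclosure = F\upclosure$ is immediate from Lemma~\ref{lem:unfoldings}. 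The crucial point to verify is that this $P$ can be taken to be TAM: since every instance in $F$ is acyclic, each rule is tree-shaped, so $P$ is tree-shaped. However, $P$ as constructed may be infinite (if $F$ is infinite), whereas a Datalog program must have finitely many rules.

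I expect the main obstacle to be precisely this finiteness issue, and resolving it is where I would appeal to the duality hypothesis rather than just the acyclicity of $F$. The existence of a \emph{finite} dual $D$ is exactly the condition that constrains $F\upclosure$ to be ``finitely describable,'' and the intended bridge is Theorem~\ref{thm:mso-to-tam}: a tree-determined MSO query is TAM-definable. So the real plan is to show that the up-set $F\upclosure$ (equivalently, the complement of $D\downclosure$) is definable by a tree-determined MSO query, and then invoke Theorem~\ref{thm:mso-to-tam} to obtain a genuine (finite) TAM Datalog program defining it. Concretely, I would define the $k$-ary query $\phi(x_1,\dots,x_k)$ on an instance $C$ to hold exactly when $(C,\mathbf{c})\not\to(J_0,\mathbf{b}_0)$ for every $(J_0,\mathbf{b}_0)\in D$; since $D$ is finite, each condition $(C,\mathbf{c})\not\to(J_0,\mathbf{b}_0)$ is first-order (hence MSO) expressible, so $\phi$ is MSO. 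The duality $(F,D)$ guarantees that $C\models\phi(\mathbf{c})$ iff some member of $F$ maps into $(C,\mathbf{c})$, and since the members of $F$ are acyclic, this witness can be taken acyclic and mapping homomorphically into $(C,\mathbf{c})$ — which is exactly the tree-determinedness condition. Theorem~\ref{thm:mso-to-tam} then yields the desired TAM program $P$ with $\Unfoldings(P,R)\upclosure = \{(C,\mathbf{c})\mid C\models\phi(\mathbf{c})\}\upclosure = F\upclosure$, completing the proof.
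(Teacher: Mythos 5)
Your proof is correct and follows essentially the same route as the paper: $(2)\Rightarrow(1)$ via Theorem~\ref{thm:adjoint-dualities} (using that TAM programs have generalized right-adjoints), and $(1)\Rightarrow(2)$ by observing that the complement of $D\downclosure$ is MSO-definable and tree-determined (thanks to the acyclicity of the members of $F$ and the duality), then invoking Theorem~\ref{thm:mso-to-tam}. One minor inaccuracy that does not affect the argument: for a fixed finite $(J_0,\mathbf{b}_0)$, the condition $(C,\mathbf{c})\to (J_0,\mathbf{b}_0)$ is in general MSO- but \emph{not} first-order-definable (e.g., $2$-colourability), though MSO is all you need here.
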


\begin{proof}
From 1 to 2: It is well known that, for any finite set of pointed instances $D$, 
there is an MSO formula $\phi$ that defines $D\downclosure$. Hence, by duality, $\neg\phi$
defines $F\upclosure$.
Furthermore, the fact
that $F$ consists of acyclic pointed instances 
implies that $\neg\phi$ is tree-determined. Therefore, the direction 1 to 2 follows
 from Theorem~\ref{thm:mso-to-tam}.  
The direction from 2 to 1 follows from Theorem~\ref{thm:adjoint-dualities}.
\looseness=-1
\end{proof}

It is possible to strengthen Corollary \ref{co:homdual} by showing that the above conditions (1) and (2) are, in turn, equivalent to the fact that $F\upclosure=G\upclosure$ for some regular set $G$ of acyclic queries (where ``regular'' needs to be defined in a suitable way, as in \cite{ErdosPTT17}). This follows from the fact that Theorem 3.9 uses tree-automata as an intermediate step in the proof. We note that the special case of this equivalence for  Boolean CQs over digraphs was proven in \cite{ErdosPTT17}.

 Theorem~\ref{thm:adjoint-dualities} also implies that every finite set of acyclic pointed instances $F$ is
the left-hand side of a finite homomorphism duality: it suffices to let $P$ be the  TAM Datalog program containing 
 a single non-recursive rule for each $(I,\textbf{a})\in F$, whose canonical instance is $(I,\textbf{a})$. 
Then, the unfoldings of $P$ are, up to isomorphism, precisely
the  pointed instances in $F$.
It follows from Theorem~\ref{thm:adjoint-dualities} that there is a finite set $D$
such that $(F,D)$ is a homomorphism duality.
\looseness=-1

\begin{remark}\label{rem:cacyclic}
Recall that a pointed instance is the left-hand side of a finite homomorphism duality if and only if (up to homomorphic equivalence) it is 
c-acyclic (Theorem~\ref{thm:cacyclic-duals}). In the light of
this, it is natural to ask whether the above ``dualities through adjoints'' 
technique can be used to construct a finite homomorphism duality for any c-acyclic pointed instance. Proposition~\ref{prop:cacyclic-no-adjoint} shows
that this is not the possible. Note that the
canonical instance of the rule of the program in 
Proposition~\ref{prop:cacyclic-no-adjoint}
is
$(\{E(a,a)\},a)$, which is c-acyclic (but not acyclic).
\end{remark}

\paragr{Homomorphism dualities relative to a background theory}
In many settings, one is interested in restricted classes of relational structures, such as
 linear orders, equivalence relations, database instances that satisfy given integrity constraints, models of a knowledge base, etc. 
In this section, we look at the question when homomorphism dualities exist in  the category of all instances that satisfy a given background theory. 

A few results are known.
An undirected graph can be viewed as an instance over a 
schema $\bfS$ consisting of a single binary relation symbol $E$, 
satisfying the TGDs $\forall xy(E(x,y)\to E(y,x))$ and $\forall x\neg E(x,x)$. It is known that the category of
undirected graphs and homomorphisms has no finite dualities, up to homomorphic equivalence, other than the trivial duality $(\{K_2\}, \{K_1\})$, where $K_1$ and $K_2$ are the 2-element clique and  the empty graph, respectively~(cf.~\cite{HellNesetril2004}). Similarly, a finite algebra of a similarity type $\sigma$ can be viewed as an $\bfS$-instance, with $\bfS=\{R_f\mid f\in\sigma\}$
satisfying  $\Sigma=\{\forall\textbf{x}\exists y R_f(\textbf{x},y), \forall \textbf{x}yz (R_f(\textbf{x},y)\land R_f(\textbf{x},z)\to y=z)\mid f\in\sigma\}$, 
and it is known that, in the category of finite algebras,
no non-trivial finite
dualities exist~\cite{Ball2010}.

The next result shows how to obtain finite homomorphism dualities from classes  
definable by a weakly acyclic set of TGDs that have a generalized right-adjoint.

\begin{restatable}{theorem}{thmcacyclicdualititestgd}
\label{thm:c-acyclic-dualities-tgd}
Let $\Sigma$ be a finite weakly acyclic set of TGDs that has a generalized right-adjoint. 
Let $F$ be any finite set of pointed instances.
If each member of $F$ is of the form 
$(P_\Sigma(A),\textbf{a})$ for some c-acyclic pointed
instance $(A,\textbf{a})$, then 
$F$ has finite duality w.r.t. $\Sigma$.
\end{restatable}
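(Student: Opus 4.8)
The plan is to reduce the relative duality to an \emph{absolute} homomorphism duality for the c-acyclic pre-images of the members of $F$, and then to convert the resulting duals into models of $\Sigma$ using the generalized right-adjoint $\Omega_{P_\Sigma}$. Write each member of $F$ as $(P_\Sigma(A_i),\textbf{a}_i)$ with $(A_i,\textbf{a}_i)$ c-acyclic, and set $F_0=\{(A_i,\textbf{a}_i)\}$. The heart of the argument is that, on a $\Sigma$-model $B$, reachability from $F$ coincides with reachability from $F_0$, so an absolute dual of $F_0$ already separates $\Sigma$-models correctly; the only remaining work is to make the separating instances satisfy $\Sigma$. Weak acyclicity is used throughout to guarantee that all the instances $P_\Sigma(\cdot)$ we form are finite.

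First I would prove a \emph{transfer lemma}: for c-acyclic $(A,\textbf{a})$ and $B\models\Sigma$, $(P_\Sigma(A),\textbf{a})\to(B,\textbf{b})$ iff $(A,\textbf{a})\to(B,\textbf{b})$. Left-to-right is immediate from $A\subseteq P_\Sigma(A)$ (Lemma~\ref{lem:capturing}(1)). For the converse, a homomorphism $(A,\textbf{a})\to(B,\textbf{b})$ extends by monotonicity (Lemma~\ref{lem:edatalog-monotonicity}) to $P_\Sigma(A)\to P_\Sigma(B)$, and since $B\models\Sigma$ we have $P_\Sigma(B)\to_{adom(B)}B$ (Lemma~\ref{lem:capturing}(2)); composing, and tracking distinguished elements, gives the result. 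Next, since $F_0$ is a finite set of c-acyclic pointed instances, Theorem~\ref{thm:cacyclic-duals} (after discarding homomorphically redundant members so that $F_0$ is an antichain, which preserves $F_0\upclosure$) yields a finite set $D_0$ of pointed instances with $(F_0,D_0)$ an absolute duality. Combined with the transfer lemma, for every $B\models\Sigma$ we obtain that some member of $F$ maps to $(B,\textbf{b})$ iff no member of $D_0$ receives a homomorphism from $(B,\textbf{b})$.

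It remains to replace $D_0$ by a finite set $D$ of \emph{$\Sigma$-models} with the same down-closure \emph{among $\Sigma$-models}. Here I would invoke the adjoint twice. For $B\models\Sigma$ we have $B\leftrightarrow_{adom(B)}P_\Sigma(B)$ (Lemma~\ref{lem:capturing}), so $(B,\textbf{b})\to(D',\textbf{d})$ iff $(P_\Sigma(B),\textbf{b})\to(D',\textbf{d})$, which by the adjoint property holds iff $(B,\textbf{b})\to(D'',\textbf{d}'')$ for some $(D'',\iota)\in\Omega_{P_\Sigma}(D')$ and some $\textbf{d}''$ with $\iota(\textbf{d}'')=\textbf{d}$ (distinguished elements matched through the commuting square of Definition~\ref{def:adjoints}, exactly as in the proof of Theorem~\ref{thm:adjoint-dualities}). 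I then set
\[ D=\{(P_\Sigma(D''),\textbf{d}'')\mid (D',\textbf{d})\in D_0,\ (D'',\iota)\in\Omega_{P_\Sigma}(D'),\ \iota(\textbf{d}'')=\textbf{d}\}. \]
Each $P_\Sigma(D'')$ satisfies $\Sigma$ (Lemma~\ref{lem:capturing}(1)) and is finite by weak acyclicity, and $D$ is finite since $D_0$ and each $\Omega_{P_\Sigma}(D')$ are. The inclusion ``$B\to D'$ implies $B$ maps into some member of $D$'' follows from the displayed adjoint computation together with $D''\subseteq P_\Sigma(D'')$. For the reverse inclusion I would apply the adjoint property a second time, with input $D''$ itself: since $D''\to D''$ is the identity and $(D'',\iota)\in\Omega_{P_\Sigma}(D')$, the adjoint property forces $P_\Sigma(D'')\to D'$, sending $\textbf{d}''$ to $\iota(\textbf{d}'')=\textbf{d}$; hence any $(B,\textbf{b})\to(P_\Sigma(D''),\textbf{d}'')$ composes to $(B,\textbf{b})\to(D',\textbf{d})$. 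Thus $D$ and $D_0$ have the same down-closure on $\Sigma$-models, and $(F,D)$ is a finite homomorphism duality w.r.t.~$\Sigma$ with $D$ a set of $\Sigma$-models.

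The step I expect to be the main obstacle is exactly this final conversion. The naive attempt---replacing each dual $(D',\textbf{d})$ by its chase $(P_\Sigma(D'),\textbf{d})$---fails, because for $B\models\Sigma$ the condition $B\to P_\Sigma(D')$ is strictly weaker than $B\to D'$ (the chase introduces fresh existential witnesses into which extra models of $\Sigma$ can map). The fix is to route through the adjoint's instances $D''\in\Omega_{P_\Sigma}(D')$ and to use the identity-instance trick giving $P_\Sigma(D'')\to D'$, which re-tightens the down-closure to the correct set; this is the only place where the generalized right-adjoint is genuinely needed. A secondary, purely bookkeeping difficulty is aligning the $k$ distinguished elements through the partial maps $\iota$, which is handled precisely as in Theorem~\ref{thm:adjoint-dualities} and which I would only sketch.
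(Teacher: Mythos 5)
Your proof is correct and follows essentially the same route as the paper's: reduce to an absolute duality for the c-acyclic preimages via monotonicity and Lemma~\ref{lem:capturing}, obtain the dual set from Theorem~\ref{thm:cacyclic-duals}, and replace each dual $(D',\textbf{d})$ by the instances $(P_\Sigma(D''),\textbf{d}'')$ for $(D'',\iota)\in\Omega_{P_\Sigma}(D')$, using the identity-homomorphism trick to get $P_\Sigma(D'')\to D'$ for the tightening direction. The only (welcome) additions are your explicit statement of the transfer lemma, the antichain normalization of $F_0$ before invoking Theorem~\ref{thm:cacyclic-duals}, and the remark on why naively chasing the duals fails.
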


Regarding complexity, consider the case where $\Sigma$ is a
fixed set of TGDs (not treated as part of the input) such
that $P_\Sigma$
is equivalent to a TAM Datalog program.
Then Theorem~\ref{thm:c-acyclic-dualities-tgd}
yields a 2ExpTime algorithm for computing the dual set
$D$ from $F$, assuming $F$ is specified by
the underlying set of c-acyclic structures $(A,\textbf{a})$. It follows that, for instance,
for the class of transitive digraphs (which,
as we saw earlier, is captured by a TAM Datalog program),
we have a 2ExpTime-algorithm for constructing duals
for digraphs that are specified as the transitive closure of an
acyclic digraph. The same holds when $\Sigma$
is a weakly acyclic set of inclusion 
dependencies, 
or, more generally, when $P_\Sigma$ is
equivalent to a weakly acyclic strongly linear Datalog program.
In fact, in this case, we get an ExpTime upper bound.

For the special
case of monadic tree-shaped TGDs, 
we can prove a converse:

\begin{restatable}{theorem}{thmmonadicduality}
\label{thm:monadic-duality}
Let $\Sigma$ be any set of tree-shaped monadic TGDs.
Let $F$ be any finite set of pairwise homomorphically-incomparable pointed instances
$(A,\textbf{a})$ with $A\models\Sigma$. 
Then, the following are equivalent:
\begin{enumerate}
    \item $F$ has finite duality w.r.t. $\Sigma$, 
    \item Each $(A,\textbf{a})\in F$ is homomorphically equivalent to 
        $(P_\Sigma(A'),\textbf{a})$ for some c-acyclic $(A',\textbf{a})$. 
\end{enumerate}
\end{restatable}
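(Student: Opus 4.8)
The plan is to prove the two implications separately, using the adjoint machinery for $P_\Sigma$. Since $\Sigma$ consists of tree-shaped monadic TGDs, $P_\Sigma$ is a TAM Datalog program (as noted in the last bullet of Example~\ref{ex:tgd-adjoint}); being plain Datalog it is weakly acyclic and, by Theorem~\ref{thm:tam-adjoint}, has a generalized right-adjoint $\Omega_{P_\Sigma}$. For the direction $2\Rightarrow1$ I would first observe that finite duality w.r.t.\ $\Sigma$ depends only on the homomorphic-equivalence classes of the members of $F$ (since $F\upclosure$ does). Hence I may replace each $(A,\textbf{a})\in F$ by the homomorphically equivalent $(P_\Sigma(A'),\textbf{a})$ supplied by condition~2, and then apply Theorem~\ref{thm:c-acyclic-dualities-tgd} verbatim to obtain a finite dual.

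The substantive direction is $1\Rightarrow2$, and it rests on two preliminary facts. The first is a \emph{reflection property}: for every $\Sigma$-model $B$ and arbitrary instance $X$, $X\to B$ iff $P_\Sigma(X)\to B$ (one direction is immediate from $X\subseteq P_\Sigma(X)$; the other combines Lemma~\ref{lem:edatalog-monotonicity} with $P_\Sigma(B)\to_{adom(B)} B$ from Lemma~\ref{lem:capturing}). The second transfers the given finite dual $D$ through the adjoint. Applying the $\Sigma$-duality to the $\Sigma$-model $P_\Sigma(C)$ and rewriting each condition $(P_\Sigma(C),\textbf{c})\to(C_0,\textbf{c}_0)$ via the adjoint property of $\Omega_{P_\Sigma}$ (Definition~\ref{def:adjoints}, with distinguished elements handled by the commuting diagram), I obtain a \emph{finite} set $\hat D$ of pointed instances such that, for all $C$, the \emph{preimage property} $\Phi$ --- which holds at $(C,\textbf{c})$ iff $(A,\textbf{a})\to(P_\Sigma(C),\textbf{c})$ for some $(A,\textbf{a})\in F$ --- is equivalent to $(C,\textbf{c})\not\to(\hat D_0,\hat{\textbf{c}}_0)$ for all members of $\hat D$. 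Thus the up-set $G$ defined by $\Phi$ is exactly the complement of $\hat D\downclosure$.

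Granting that $\Phi$ is \emph{tree-determined}, the theorem follows cleanly. Fix $(A,\textbf{a})\in F$. Since $(A,\textbf{a})\to(P_\Sigma(A),\textbf{a})$ we have $\Phi(A,\textbf{a})$, so tree-determinacy yields an acyclic $(A',\textbf{a}')\to(A,\textbf{a})$ with $\Phi(A',\textbf{a}')$, i.e.\ $(A'',\textbf{a}'')\to(P_\Sigma(A'),\textbf{a}')$ for some $(A'',\textbf{a}'')\in F$. As $(A',\textbf{a}')\to(A,\textbf{a})$, the reflection property gives $(P_\Sigma(A'),\textbf{a}')\to(A,\textbf{a})$, whence $(A'',\textbf{a}'')\to(A,\textbf{a})$; pairwise homomorphic incomparability of $F$ forces $A''=A$. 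Therefore $(A,\textbf{a})\to(P_\Sigma(A'),\textbf{a}')$ and $(P_\Sigma(A'),\textbf{a}')\to(A,\textbf{a})$, so $(A,\textbf{a})$ is homomorphically equivalent to $(P_\Sigma(A'),\textbf{a}')$ with $A'$ acyclic (in particular c-acyclic), which is condition~2.

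The crux is therefore establishing that $\Phi$ is tree-determined, and this is where the finite-duality hypothesis must be used. Note first that $\Phi$ is MSO-definable: $P_\Sigma$ is TAM, so each of its output relations is MSO-definable over the input (Corollary~\ref{cor:almost-monadic-to-mso}), and $\Phi$ is an existential first-order combination of these. However, tree-determinacy is \emph{not} automatic from tree-shapedness of $P_\Sigma$ alone --- for transitive closure and $A$ a looped triangle, $A\to P_\Sigma(C)$ expresses ``$C$ contains a directed cycle'', which is not tree-determined --- so the hypothesis that $F$ has a finite $\Sigma$-dual is genuinely essential. Via the reformulation of the previous paragraph, tree-determinacy of $\Phi$ is equivalent to the assertion that the finite template set $\hat D$ admits a \emph{tree duality}, i.e.\ whenever $(C,\textbf{c})\not\to\hat D$ there is an acyclic $(C',\textbf{c}')\to(C,\textbf{c})$ with $(C',\textbf{c}')\not\to\hat D$. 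I expect this to be the main difficulty, and plan to prove it by exploiting that $\hat D$ arises as an adjoint image under the tree-shaped program $P_\Sigma$: given $A\to P_\Sigma(C)$, I would witness each output fact by an acyclic element of $\Unfoldings(P_\Sigma,\cdot)$ (acyclic by Lemma~\ref{lem:tree-unfoldings}) mapping into $C$, and use $\hat D$ to bound how deep these unfoldings must be before a single acyclic instance mapping into $C$ already satisfies $\Phi$. Once tree-determinacy is secured, Theorem~\ref{thm:mso-to-tam} additionally gives $G\upclosure=\Unfoldings(P^*,R)\upclosure$ for a TAM program $P^*$, providing an alternative justification of the extraction step above through Corollary~\ref{co:homdual}.
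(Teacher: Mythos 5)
Your direction $2\Rightarrow 1$ is fine and matches the paper. The direction $1\Rightarrow 2$, however, has a genuine gap: the entire argument hinges on $\Phi$ being \emph{tree-determined}, and you only offer a plan for proving this. Unfortunately the claim is false in general, so the plan cannot succeed. Take $\Sigma=\emptyset$ (vacuously a set of tree-shaped monadic TGDs, with $P_\Sigma$ the identity up to renaming) and $F=\{(\{E(a,a)\},a)\}$. This $F$ is c-acyclic, hence has finite duality by Theorem~\ref{thm:cacyclic-duals}, so both conditions of the theorem hold. But $\Phi(C,c)$ is then simply ``$E(c,c)\in C$'', and no \emph{acyclic} pointed instance can satisfy $\Phi$, since a loop is itself a cycle in the incidence graph. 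So $\Phi$ is not tree-determined even though the theorem is true. The root of the problem is the mismatch between acyclic and c-acyclic: the conclusion of the theorem only requires a c-acyclic $A'$, which may have cycles through the distinguished tuple, and such cycles can never be unfolded into an acyclic preimage while preserving the homomorphism type relative to $\textbf{a}$. This is exactly the phenomenon flagged in Remark~\ref{rem:cacyclic} and Proposition~\ref{prop:cacyclic-no-adjoint}: the adjoint/tree-determinacy machinery inherently produces \emph{acyclic} left-hand sides and cannot reach the c-acyclic-but-not-acyclic case.

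The paper takes a different, more combinatorial route that handles distinguished elements correctly. Given the finite dual $(D_1,\textbf{d}_1),\dots,(D_n,\textbf{d}_n)$ and $(A,\textbf{a})\in F$, it applies a Sparse Incomparability Lemma adapted to pointed instances (Lemma~\ref{lem:sil}) to obtain $(A',\textbf{a})\to(A,\textbf{a})$ of c-girth at least $m=s\cdot t$ (where $s=|A|$ and $t$ bounds the TGD bodies) that still avoids all the $(D_i,\textbf{d}_i)$. The duality then forces some member of $F$, necessarily $(A,\textbf{a})$ by incomparability, to map into $(P_\Sigma(A'),\textbf{a})$. Finally, monadicity and tree-shapedness of $\Sigma$ are used to show that the image of $A$ is already contained in $P_\Sigma(B')$ for a sub-instance $B'$ of $A'$ with at most $s\cdot t$ facts; the c-girth bound then makes $B'$ c-acyclic (not acyclic!), and $A\to P_\Sigma(B')\to A$ gives the desired homomorphic equivalence. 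If you want to salvage your argument, you would need to replace tree-determinacy by some ``c-acyclic-determinacy'' notion relative to the tuple $\textbf{a}$, and no adjoint-based tool in the paper delivers that; the sparse-incomparability argument is doing genuinely different work.
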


\begin{remark}\label{rem:tgd-adjoint-acyclic}
Theorem~\ref{thm:monadic-duality} cannot be lifted to arbitrary finite weakly acyclic sets of TGDs
that admit a generalized right-adjoint.
Consider the weakly acyclic set of TGDs 
(over a schema with a single binary relation)
$\Sigma = \{\forall xyzu (R(x,y)\wedge R(y,z)\wedge R(z,u)\rightarrow R(x,u)), \forall xy(R(x,y)\rightarrow R(y,x))\}$. 
As we pointed out in Example~\ref{ex:tgd-adjoint}, $P_\Sigma$
is equivalent to a TAM Datalog program, and hence
$\Sigma$ has a generalized right-adjoint. 
Let $A$ be the instance (without 
distinguished elements) $\{R(a,a)\}$, and let 
$B$ be the instance $\{R(a,b),R(b,a)\}$.
Then $(\{A\},\{B\})$ is a 
homomorphism duality w.r.t.~$\Sigma$.
Indeed, let $C$ be an instance satisfying $\Sigma$ and assume that $A\not\rightarrow C$ (i,e, $C$ has no loop). Since $C$ satisfies $\Sigma$ it follows that $C$ has no odd cycle and, hence, is homomorphic to $B$. However, it is easy to see that every instance $A'$ satisfying $P_{\Sigma}(A')=A$ must have a cycle.
\end{remark}

\paragr{Dualities in ABox categories}
Fix some finite set $\Sigma$ of TGDs.
An \emph{ABox} is, intuitively, a finite database instance $I$ that
is treated as an incomplete database and whose completion is  
$P_\Sigma(I)$ (cf.~\cite{baader2003basic}).
We do not assume that $\Sigma$ is weakly acyclic. Therefore, we can think of an ABox $I$
as a finite representation of a possibly infinite instance $P_\Sigma(I)$. 

While, at the level of its specification, an ABox is nothing
else than a finite instance, it naturally comes with a different
type of morphism, capturing the intended semantics: 
we write $h:I\to_\Sigma J$ if $h$ is a partial function from 
$adom(I)$ to $adom(J)$ which can be extended to a homomorphism $h':P_\Sigma(I)\to P_\Sigma(J)$. 

We denote by $ABox_\Sigma[\bfS]$ the category of ABoxes, where the 
arrows are the $\to_\Sigma$-morphisms as described above.
Our interest in this category comes from the fact that
it plays a fundamental role in knowledge representation,
and more specifically, description logic. 
The proof of Theorem~\ref{thm:c-acyclic-dualities-tgd} also 
implies the following result for dualities in the
category $ABox_\Sigma[\bfS]$ (with similar 
complexity bounds):

\begin{restatable}{theorem}{thmaboxduality}
\label{thm:abox-duality}
Let $\Sigma$ be any finite set of TGDs that has a generalized right-adjoint. 
Every finite set $F$ of c-acyclic pointed ABoxes has
finite duality within the 
      category $ABox_\Sigma[\bfS]$.
\end{restatable}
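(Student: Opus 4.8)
The plan is to follow the proof of Theorem~\ref{thm:c-acyclic-dualities-tgd}, but reorganized so that the morphisms $\to_\Sigma$ of $ABox_\Sigma[\bfS]$ are reduced to ordinary homomorphisms between the (possibly infinite) instances $P_\Sigma(C)$, after which the generalized right-adjoint $\Omega := \Omega_{P_\Sigma}$ is used to translate the dual side back into finitely many ABoxes. Recall that, by definition, $(I,\mathbf{a})\to_\Sigma (J,\mathbf{b})$ holds iff $(P_\Sigma(I),\mathbf{a})\to (P_\Sigma(J),\mathbf{b})$. I may assume the members of $F$ are pairwise homomorphically incomparable, since replacing $F$ by its antichain of $\to$-maximal members preserves its $\to_\Sigma$-up-closure. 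The first step is a \emph{reduction lemma}: for every $(A,\mathbf{a})\in F$ and every ABox $(C,\mathbf{c})$, $(A,\mathbf{a})\to_\Sigma (C,\mathbf{c})$ iff $(A,\mathbf{a})\to (P_\Sigma(C),\mathbf{c})$. The forward direction is restriction along $A\subseteq P_\Sigma(A)$ (Lemma~\ref{lem:capturing}); for the converse, a homomorphism $(A,\mathbf{a})\to (P_\Sigma(C),\mathbf{c})$ extends by monotonicity (Lemma~\ref{lem:edatalog-monotonicity}) to $(P_\Sigma(A),\mathbf{a})\to (P_\Sigma(P_\Sigma(C)),\mathbf{c})$, and since $P_\Sigma(C)\models\Sigma$ it is a solution for itself, so the universal solution $P_\Sigma(P_\Sigma(C))$ retracts onto $P_\Sigma(C)$ over $adom(P_\Sigma(C))$; composing yields $(P_\Sigma(A),\mathbf{a})\to (P_\Sigma(C),\mathbf{c})$. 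Hence $(C,\mathbf{c})$ lies in the $\to_\Sigma$-up-closure of $F$ iff $(P_\Sigma(C),\mathbf{c})$ lies in the ordinary up-closure of $F$.

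Next I would invoke Theorem~\ref{thm:cacyclic-duals} (applicable as each member of $F$ is c-acyclic) to obtain a finite set $D_0$ of pointed $\bfS$-instances with $(F,D_0)$ an ordinary homomorphism duality, and set $D=\{(D'',\mathbf{b}')\mid (D',\mathbf{b})\in D_0,\ (D'',\iota)\in\Omega(D'),\ \iota(\mathbf{b}')=\mathbf{b}\}$, which is finite because $D_0$ is finite and each $\Omega(D')$ is a finite set. The crux is the equivalence, for each $(C,\mathbf{c})$ and each $(D',\mathbf{b})\in D_0$: $(P_\Sigma(C),\mathbf{c})\to (D',\mathbf{b})$ iff $(C,\mathbf{c})\to_\Sigma (D'',\mathbf{b}')$ for some $(D'',\mathbf{b}')\in D$ arising from $(D',\mathbf{b})$. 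Left-to-right is the adjoint property of Definition~\ref{def:adjoints} (valid even though $P_\Sigma(C)$ may be infinite, since the definition quantifies over $\Instinf$), followed by monotonicity to pass from $\to$ to $\to_\Sigma$. For the converse I would first establish, by applying the adjoint to $I=D''$ with the identity witness $D''\to D''$ and reading off the commuting-diagram condition of Definition~\ref{def:adjoints}, that there is a homomorphism $P_\Sigma(D'')\to D'$ restricting to $\iota$ on $adom(D'')$, hence sending $\mathbf{b}'\mapsto \mathbf{b}$; composing $(P_\Sigma(C),\mathbf{c})\to (P_\Sigma(D''),\mathbf{b}')$ with it gives $(P_\Sigma(C),\mathbf{c})\to (D',\mathbf{b})$.

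Summing this equivalence over $D_0$ shows that $(C,\mathbf{c})$ lies in the $\to_\Sigma$-down-closure of $D$ iff $(P_\Sigma(C),\mathbf{c})$ lies in the ordinary down-closure of $D_0$. Combining the reduction lemma, the duality $(F,D_0)$, and this fact yields the chain: $(C,\mathbf{c})\in F\upclosure$ in $ABox_\Sigma[\bfS]$ iff $(P_\Sigma(C),\mathbf{c})\in F\upclosure$ ordinarily iff $(P_\Sigma(C),\mathbf{c})\notin D_0\downclosure$ ordinarily iff $(C,\mathbf{c})\notin D\downclosure$ in $ABox_\Sigma[\bfS]$, which is exactly the statement that $(F,D)$ is a homomorphism duality in $ABox_\Sigma[\bfS]$.

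The step I expect to require the most care is the middle equivalence $(P_\Sigma(C),\mathbf{c})\in F\upclosure$ iff $(P_\Sigma(C),\mathbf{c})\notin D_0\downclosure$: because $\Sigma$ is not assumed weakly acyclic, $P_\Sigma(C)$ may be infinite, so this applies the finite duality $(F,D_0)$ at an infinite instance. The ``only if'' direction is immediate, since a homomorphism from some $A\in F$ into $P_\Sigma(C)$ composed with one into a $D'\in D_0$ would contradict $A\not\to D'$ (which holds by the duality at the finite instance $A$). The ``if'' direction is the genuine content and must be justified for infinite instances; here the $\Instinf$-behavior of $\Omega$ is essential, as it is what lets the entire down-closure side be computed on the finite ABox $C$ rather than on $P_\Sigma(C)$. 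I would discharge the residual point by the standard fact that finite homomorphism dualities with c-acyclic left-hand side hold over arbitrary (finite or infinite) instances. Finally, the complexity bounds parallel those of Theorem~\ref{thm:c-acyclic-dualities-tgd}, combining the cost of computing $D_0$ via Theorem~\ref{thm:cacyclic-duals} with that of evaluating $\Omega_{P_\Sigma}$.
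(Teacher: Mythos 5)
Your proof is correct and follows essentially the same route as the paper: the paper likewise takes the ordinary dual $D_0$ of $F$ from Theorem~\ref{thm:cacyclic-duals}, pushes it through $\Omega_{P_\Sigma}$ to obtain the set of ABox duals, and then reuses the two-step equivalence argument from the proof of Theorem~\ref{thm:c-acyclic-dualities-tgd}. Your writeup is in fact more careful than the paper's (which only gestures at "the same arguments"), in particular in making explicit the reduction $(A,\mathbf{a})\to_\Sigma(C,\mathbf{c})\Leftrightarrow(A,\mathbf{a})\to(P_\Sigma(C),\mathbf{c})$ and in flagging and discharging the point that the finite duality $(F,D_0)$ must be applied at the possibly infinite instance $P_\Sigma(C)$.
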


\section{Application: Uniquely Characterizable UCQs}
\label{sec:applications}

In this section, we show-case one application of our
results on homomorphism dualities. It was shown in~\cite{FoniokNT08,tCD2022:conjunctive} 
that every c-acyclic union of conjunctive queries (UCQ) is uniquely characterizable
by a finite collection of labeled examples.
In fact, a UCQ $q$ is uniquely characterizable by a 
finite collection of labeled examples, if and only if
$q$ is equivalent to a c-acyclic UCQ. 
In this section, we study whether the same holds 
over restricted
classes definable by a finite weakly acyclic set of TGDs.

We assume the reader is familiar with the definition
of UCQs (cf.~Appendix~\ref{app:more-prels}). We call a 
UCQ $q$ \emph{c-acyclic} if the (pointed) canonical instance
of each CQ in $q$ is c-acyclic. 

Let $\bfS$ be a schema, and $\Sigma$ a first-order theory
over $\bfS$. 
We say that two UCQs $q,q'$ (over schema $\bfS$) are \emph{equivalent w.r.t.~$\Sigma$} if 
for all $\in\Inst[\bfS]$ with $I\models\Sigma$, $q(I)=q'(I)$. 
By a \emph{collection of labeled examples} for a $k$-ary
query, we mean a pair $(E^+,E^-)$ of finite sets of 
pointed instances with $k$ distinguished elements. 
A UCQ $q$ \emph{fits} such $(E^+,E^-)$ if $\textbf{a}\in q(A)$ for all $(A,\textbf{a})\in E^+$, and 
$\textbf{a}\not\in q(A)$ for all $(A,\textbf{a})\in E^-$. 
A collection of labeled examples $(E^+,E^-)$ 
\emph{uniquely characterizes} a UCQ $q$ w.r.t.~$\Sigma$,
if $q$ fits $(E^+,E^-)$, and every UCQ
that fits $(E^+,E^-)$ is equivalent 
to $q$ w.r.t.~$\Sigma$.

\begin{restatable}{theorem}{thmuniqchar}
\label{thm:uniq-char}
Let $\bfS$ be a schema and $\Sigma$ a finite weakly acyclic set of TGDs that
has a generalized right-adjoint.
Then every c-acyclic UCQ $q$ over $\bfS$ is uniquely characterized w.r.t.~$\Sigma$ by a finite collection
      of labeled examples satisfying $\Sigma$.
\end{restatable}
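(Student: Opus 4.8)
The plan is to reduce unique characterizability of a c-acyclic UCQ $q$ with respect to $\Sigma$ to the existence of finite homomorphism dualities within the class of $\Sigma$-models, and then invoke Theorem~\ref{thm:c-acyclic-dualities-tgd}. Recall that a $k$-ary UCQ $q = q_1 \vee \cdots \vee q_n$ corresponds to a finite set $F = \{(A_1,\mathbf{a}_1), \ldots, (A_n,\mathbf{a}_n)\}$ of pointed instances, namely the canonical instances of its disjuncts, and that for any instance $B$ and tuple $\mathbf{b}$ we have $\mathbf{b} \in q(B)$ iff $(A_i,\mathbf{a}_i) \to (B,\mathbf{b})$ for some $i$. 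Over the restricted class $\{B \mid B \models \Sigma\}$, the natural notion of fitting/equivalence should be controlled by the closure operator $P_\Sigma$: I would first establish the semantic bridge that for $\Sigma$-models $B$, the condition $\mathbf{b} \in q(B)$ is equivalent to the existence of a homomorphism from some disjunct into $B$, and that two UCQs agree on all $\Sigma$-models precisely when their associated $F$-sets have the same up-closure within the $\Sigma$-world.

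First I would translate each c-acyclic canonical instance $(A_i,\mathbf{a}_i)$ of a disjunct of $q$ into the corresponding $\Sigma$-completed pointed instance $(P_\Sigma(A_i),\mathbf{a}_i)$. Because $q$ is c-acyclic, each $(A_i,\mathbf{a}_i)$ is c-acyclic, so the hypotheses of Theorem~\ref{thm:c-acyclic-dualities-tgd} are met for the set $F' = \{(P_\Sigma(A_i),\mathbf{a}_i) \mid i\}$ (after discarding homomorphically redundant members so that $F'$ is an antichain). Applying that theorem yields a finite dual set $D$ such that $(F',D)$ is a homomorphism duality with respect to $\Sigma$; that is, for every $\Sigma$-model $B$ with distinguished tuple $\mathbf{b}$, some member of $F'$ maps into $(B,\mathbf{b})$ iff no member of $D$ receives a homomorphism from $(B,\mathbf{b})$. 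Using Lemma~\ref{lem:capturing} (the fact that $P_\Sigma$ maps into any $\Sigma$-model extending $A$), the condition that some $(P_\Sigma(A_i),\mathbf{a}_i)$ maps into a $\Sigma$-model $(B,\mathbf{b})$ coincides with $\mathbf{b} \in q(B)$; this is the step that translates the abstract duality into a statement about the query.

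Next I would assemble the labeled examples. The positive examples $E^+$ are the instances $(P_\Sigma(A_i),\mathbf{a}_i)$ themselves (or, to ensure they satisfy $\Sigma$, their $P_\Sigma$-completions, which they already are), so that $q$ fits them. The negative examples $E^-$ are the members of the dual set $D$. I then need to verify the two directions of unique characterization: (i) $q$ fits $(E^+,E^-)$ — positive fitting is immediate since each canonical instance maps to its own completion, and negative fitting follows because each dual in $D$ is, by the duality, an instance from which no disjunct of $q$ maps, i.e.\ $\mathbf{b}\notin q(D_j)$; and (ii) any UCQ $q'$ fitting $(E^+,E^-)$ is equivalent to $q$ with respect to $\Sigma$. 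For (ii) I would argue that fitting the positive examples forces each disjunct of $q$ to be implied by $q'$ over $\Sigma$-models, while fitting the negative examples $D$ forces $q'$ not to accept anything outside the up-closure of $F'$; combined via the duality, the $\Sigma$-models accepted by $q'$ are exactly those accepted by $q$, giving equivalence with respect to $\Sigma$.

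The main obstacle I anticipate is step (ii), specifically ensuring that fitting finitely many labeled examples pins down the full $\Sigma$-semantic behavior of an arbitrary competing UCQ $q'$ rather than just its behavior on the example instances. The duality $(F',D)$ controls which \emph{$\Sigma$-models} lie in the up-closure of $F'$, but $q'$ is itself a finite union of conjunctive queries whose canonical instances need not be c-acyclic or satisfy $\Sigma$; the delicate point is to show that if $q'$ accepts some $\Sigma$-model not accepted by $q$, then one of the dual examples in $D$ (or one of the positive examples) already witnesses a mismatch. I would handle this by exploiting that each disjunct of $q'$, when it fits all positive examples, must map into every $(P_\Sigma(A_i),\mathbf{a}_i)$, and that c-acyclicity together with the duality guarantees the canonical instances of $q'$'s disjuncts are homomorphically sandwiched between $F'$ and the complement of $D$; the background theory $\Sigma$ enters through $P_\Sigma$-completion precisely to make these homomorphisms persist. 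Getting this sandwiching argument right, and confirming that all constructed examples genuinely satisfy $\Sigma$ (so that they are admissible examples over the restricted class), is where the real care is required, and it is essentially where the weak-acyclicity and generalized-right-adjoint hypotheses are consumed via Theorem~\ref{thm:c-acyclic-dualities-tgd}.
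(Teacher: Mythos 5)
Your proposal is correct and takes essentially the same route as the paper: the paper also sets $E^+=\{(P_\Sigma(A_i),\mathbf{a}_i)\}$ for the canonical instances of the disjuncts of $q$, obtains $E^-$ from Theorem~\ref{thm:c-acyclic-dualities-tgd}, and concludes via a general lemma (Lemma~\ref{lem:char-dualities}) stating that ``$q$ fits $(E^+,E^-)$ and $(E^+,E^-)$ is a homomorphism duality w.r.t.~$\Sigma$'' implies unique characterization --- exactly the argument you inline in step (ii). The ``delicate point'' you worry about at the end dissolves by plain monotonicity of UCQs under homomorphisms (no sandwiching of $q'$'s canonical instances is needed, and note that fitting a positive example only forces \emph{some} disjunct of $q'$ to map into it, not each disjunct into every example).
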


Motivated by use cases in knowledge representation (cf.~\cite{Funk2022:frontiers})
, we also present a variant of Theorem~\ref{thm:uniq-char} for \emph{ABox-examples}. Recall that,
in the presence of a set of TGDs $\Sigma$,
an ABox is a (finite) instance that does not necessarily 
satisfy $\Sigma$, but that is treated as a finite specification
of the possible-infinite instance $P_\Sigma(I)$. From this 
perspective,
an \emph{ABox-example} is simply a pointed instance $(A,\textbf{a})$.
We say that a UCQ $q$ \emph{fits} an collection of labeled ABox-examples $(E^+,E^-)$ 
w.r.t.~$\Sigma$, if $\textbf{a}\in q(P_\Sigma(A))$ for all
$(A,\textbf{a})\in E^+$, and  $\textbf{a}\not\in q(P_\Sigma(A))$ for all $(A,\textbf{a})\in E^-$. A collection of labeled ABox-examples $(E^+,E^-)$ \emph{uniquely characterizes} a UCQ $q$ if $q$ fits $(E^+,E^-)$ and every UCQ $q'$ that fits
$(E^+,E^-)$, is equivalent to $q$ w.r.t. $\Sigma$.

The phrase ``equivalent w.r.t.~$\Sigma$'' here requires some further
discussion, since it was defined earlier in reference to 
finite instances only, while in the present context, it 
is more natural to consider finite and infinite instances, or at least all instances
that are of the form $P_\Sigma(I)$ for some finite instance $I$. To simplify the picture and avoid confusion, we will restrict attention to
sets of TGDs $\Sigma$ that have the \emph{finite
controllability} property, meaning that for all UCQs $q, q'$, it holds that
$q$ and $q'$ are equivalent over finite instances satisfying $\Sigma$
if and only if $q$ and $q'$ are equivalent over all
(finite and infinite) instances satisfying $\Sigma$.
This implies that the aforementioned three notions of
equivalence all coincide. It is known, for instance, 
that all finite
sets of strongly linear TGDs (in particular, 
inclusion dependencies) are finitely controllable~\cite{Rosati11:finite,Barany2013querying}.
\looseness=-1

\begin{restatable}{theorem}{thmuniqcharabox}
Let $\bfS$ be a schema and $\Sigma$ a finite set of TGDs that
has a generalized right-adjoint and that is finitely controllable. 
Then every c-acyclic UCQ $q$ over $\bfS$ is uniquely characterized w.r.t.~$\Sigma$ by a finite
     collection of labeled ABox-examples.
\end{restatable}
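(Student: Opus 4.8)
The plan is to reduce the ABox-example version of unique characterization to the already-established Theorem~\ref{thm:uniq-char} about ordinary labeled examples satisfying $\Sigma$. The key conceptual bridge is that, by the definition of $\to_\Sigma$-morphisms and the universal-solution property of $P_\Sigma$, fitting an ABox-example $(A,\textbf{a})$ is the same as fitting the (possibly infinite) instance $(P_\Sigma(A),\textbf{a})$, and the finite-controllability assumption guarantees that equivalence-w.r.t.-$\Sigma$ is insensitive to whether we quantify over finite or infinite models. So the whole problem can be transported into the setting where examples are genuine $\Sigma$-models, which is exactly the hypothesis of Theorem~\ref{thm:uniq-char}.

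\medskip

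\noindent\textbf{First} I would take the c-acyclic UCQ $q$ and invoke Theorem~\ref{thm:uniq-char} to obtain a finite collection of labeled examples $(E^+,E^-)$, each satisfying $\Sigma$, that uniquely characterizes $q$ w.r.t.~$\Sigma$ in the ordinary sense. The examples in $E^+$ and $E^-$ are instances $(B,\textbf{b})$ with $B\models\Sigma$. \textbf{Second}, I would turn each such example into an ABox-example by finding, for each $(B,\textbf{b})$, a c-acyclic pointed instance $(A,\textbf{a})$ with $(P_\Sigma(A),\textbf{a})\leftrightarrow_{\{\textbf{a}\}}(B,\textbf{b})$; this is where the generalized right-adjoint is used, precisely as in the proof of Theorem~\ref{thm:c-acyclic-dualities-tgd}, whose argument the statement explicitly says it reuses. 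Concretely, the adjoint lets one recover, from a $\Sigma$-model $B$, a c-acyclic preimage under $P_\Sigma$ whose completion is homomorphically equivalent to $B$ relative to the distinguished elements, using Lemma~\ref{lem:capturing}(2) (which says $P_\Sigma(B)\to_{adom(B)}B$ when $B\models\Sigma$) to close the loop.

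\medskip

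\noindent\textbf{Third}, with the ABox-examples $(A,\textbf{a})$ in hand, I would verify fitting: since $q$ fits $(B,\textbf{b})$ in the ordinary sense and $(P_\Sigma(A),\textbf{a})$ is homomorphically equivalent to $(B,\textbf{b})$, and since UCQ-satisfaction is preserved under homomorphic equivalence of the (pointed) instance, we get $\textbf{a}\in q(P_\Sigma(A))$ exactly when $(B,\textbf{b})\in E^+$. Thus $q$ fits the ABox-examples in the sense defined just before the theorem. \textbf{Finally}, for the uniqueness half, suppose some UCQ $q'$ fits the same ABox-examples; then $q'$ fits the corresponding $\Sigma$-model examples $(P_\Sigma(A),\textbf{a})$, hence fits $(B,\textbf{b})$, and so by the unique characterization from Theorem~\ref{thm:uniq-char} we conclude $q'\equiv_\Sigma q$. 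The finite-controllability hypothesis enters here to ensure that ``equivalent w.r.t.~$\Sigma$'' is the same notion whether measured over finite models or over the possibly-infinite completions $P_\Sigma(I)$, so that the equivalence obtained from Theorem~\ref{thm:uniq-char} is the one demanded in the ABox setting.

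\medskip

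\noindent\textbf{The main obstacle} I anticipate is the bookkeeping around infinite instances: $P_\Sigma(A)$ need not be finite (the theorem deliberately drops weak acyclicity), so fitting is a statement about an infinite instance, and I must be careful that ``$\textbf{a}\in q(P_\Sigma(A))$'' is well-defined and homomorphism-invariant even for infinite completions. This is exactly where finite controllability does the heavy lifting, collapsing the finite and infinite notions of equivalence so that the conclusion of Theorem~\ref{thm:uniq-char}, which is about finite models, can be legitimately upgraded to a statement that also governs the infinite completions arising from ABoxes. The rest is a transport of the construction in Theorem~\ref{thm:c-acyclic-dualities-tgd} through the ABox semantics, and I expect it to go through routinely once the equivalence-of-notions point is nailed down.
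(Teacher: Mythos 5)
There is a genuine gap, and it sits in your very first step. You propose to ``invoke Theorem~\ref{thm:uniq-char} to obtain a finite collection of labeled examples $(E^+,E^-)$, each satisfying $\Sigma$.'' But Theorem~\ref{thm:uniq-char} has \emph{weak acyclicity} of $\Sigma$ as a hypothesis, and the present theorem deliberately does not assume it --- the paper even remarks immediately after the statement that this is the whole point, since it lets the result apply to, e.g., arbitrary finite sets of strongly linear TGDs. You notice at the end that ``the theorem deliberately drops weak acyclicity,'' but you treat this only as a bookkeeping issue about infinite instances, when in fact it invalidates the reduction: without weak acyclicity, Theorem~\ref{thm:uniq-char} (and the Theorem~\ref{thm:c-acyclic-dualities-tgd} it rests on, whose dual set is built from the possibly infinite instances $P_\Sigma(B')$) is simply not available, and there is no guarantee that $q$ is characterized by \emph{finite} examples that are genuine $\Sigma$-models. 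Finite controllability does not repair this; it only equates the finite and unrestricted notions of UCQ-equivalence, it does not manufacture finite universal solutions or finite $\Sigma$-model examples.

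The paper's route avoids this entirely by staying inside the ABox category: it takes $E^+$ to be the (c-acyclic) canonical instances of the CQs of $q$, viewed as ABoxes, obtains a finite dual set $E^-$ from Theorem~\ref{thm:abox-duality} (whose proof uses the generalized right-adjoint directly on finite ABoxes and never needs $P_\Sigma(A)$ to be finite), and then reruns the argument of Lemma~\ref{lem:char-dualities} with $\to_\Sigma$-morphisms in place of ordinary homomorphisms, invoking finite controllability only to make ``equivalent w.r.t.~$\Sigma$'' unambiguous. If you want to keep the spirit of your reduction, the fix is to replace the appeal to Theorem~\ref{thm:uniq-char} by an appeal to Theorem~\ref{thm:abox-duality}; note also that your second step is more elaborate than needed, since for an example $(B,\textbf{b})$ with $B\models\Sigma$ one can take $B$ itself as the ABox by Lemma~\ref{lem:capturing}(2) --- no c-acyclic preimage is required for the negative examples.
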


Note that this result, unlike Theorem~\ref{thm:uniq-char}, does not require weak acyclicity. In particular, it applies
whenever $\Sigma$ is a set of strongly linear TGDs.

\section{Conclusion}

We introduced a new fragment of Datalog, TAM Datalog,
that is semantically well-behaved (closed 
under composition and having a natural semantic characterization) and admits generalized right-adjoints. We also showed that strongly linear \edatalog programs admit generalized right-adjoints. 
We used these results to obtain new methods for 
constructing
homomorphism dualities w.r.t.~a background theory, 
and, subsequently, for constructing unique characterizations for UCQs w.r.t.~a background theory (addressing an open question
from~\cite{tCD2022:conjunctive}).
\looseness=-1

To illustrate the latter,
consider the class $C$ of transitive digraphs. It 
follows from our results that each c-acyclic UCQ $q$
admits a unique characterization w.r.t.~$C$, that is, a 
finite set of labeled examples from $C$
and that uniquely characterize $q$ w.r.t.~$C$. Moreover, the examples in question can be constructed in 2ExpTime. Similarly, if we consider the class of digraphs
satisfying the inclusion dependency $\forall xy(R(x,y)\to\exists z R(y,z))$, then every 
c-acyclic UCQ admits a uniquely characterizing set of
ABox-examples which can be computed in ExpTime.
\looseness=-1

We leave as open problems for future research:
(i)
identifying a syntactic criterion that guarantees
that a given set of TGDs admits a generalized right-adjoint;
(ii)  
extending our results to the case with equality-generating
dependencies; (iii) obtaining tight complexity bounds 
for the task of constructing homomorphism dualities from
TAM Datalog programs.

\bibliographystyle{plainurl}
\bibliography{bib}

\appendix

\section{More preliminaries}
\label{app:more-prels}

\paragr{Weak acyclicity}
The \emph{dependency graph} of an \edatalog program
is a directed graph that has as its nodes all
pairs $(R,i)$ where $R\in\bfS_{aux}$ and $i\in\{1, \ldots, arity(R)\}$. The graph has two types of edges: 
\begin{enumerate}
    \item there is a ``normal'' edge from $(R,i)$ to $(S,j)$ if there is a variable
that occurs in position $i$ of an $R$-atom in the rule body
and that occurs in position $j$ of an $S$-atom in the rule head.
\item there is a ``special'' edge from $(R,i)$ to $(S,j)$ if
$R$ occurs in the rule body and there is an occurrence of $S$
in the rule head that has an existential variable in the 
$j$-th position.
\end{enumerate}
An $\edatalog$ program is said to be \emph{weakly acyclic} if its dependency graph does not contain a directed cycle going through a special edge.

\paragr{(Unions of) Conjunctive Queries}
For $\bfS$ a schema and $k\geq 0$, 
a \emph{$k$-ary conjunctive query (CQ) over $\bfS$} is 
an expression of the form
\begin{equation}\label{eq:cq}
    \phi(y_1, \ldots, y_k) \colondash \exists\textbf{x}(\phi_1\land\cdots\land\phi_n)
\end{equation} 
where each $\phi_i$ is a relational atomic formula, 
and such that each variable $y_i$ occurs in at least one 
conjunct $\phi_j$. A \emph{$k$-ary union of conjunctive queries (UCQ) over $\bfS$} is a finite disjunction of
$k$-ary CQs over $\bfS$. We denote by $q(I)$ the set
of tuples $\textbf{a}$ for which it holds
that $I\models q(\textbf{x})$. 

The \emph{canonical instance} of a CQ of the form 
(\ref{eq:cq}) is the pointed instance $(I,\textbf{y})$
where $I$ is the instance with active domain
$\{y_1, \ldots, y_n, \textbf{x}\}$ whose facts are
the conjuncts of $\phi$, and $\textbf{y}=y_1 \ldots y_k$.
Conversely, the \emph{canonical CQ} of a pointed
instance $(I,\textbf{a})$ with $\textbf{a}=a_1 \ldots a_k$,
is obtained by (i) associating a unique variable $y_a$ to 
each $a\in adom(I)$, (ii) letting $\textbf{x}$ be an
enumeration of all variables $x_a$ for $a\in adom(I)\setminus\{a_1, \ldots, a_n\}$, and (iii) taking
the query $q(y_{a_1}, \ldots, y_{a_n}) \colondash \exists \textbf{x}\bigwedge_{R(b_1, \ldots, b_n)\in I}R(y_{b_1}, \ldots, y_{b_n})$. The well known Chandra-Merlin
theorem states that a tuple $\textbf{a}$ belongs to
$q(I)$ if and only if the canonical instance of $q$
homomorphically maps to $(I,\textbf{a})$.

\section{Proofs for Section~\ref{sec:tam}}

\propalmostmonadic*

\begin{proof}
Suppose, for the sake of a contradiction, that 
there was an equivalent monadic Datalog program $P$.
Let $n$ be the maximum number of variables in any
$Ans$ rule of $P$.
Consider the $\bfS_{in}$-instance $I$
consisting of the facts $R(a_0,a_1), R(a_1, a_2), \ldots, R(a_n, a_{n+1})$
as well as the facts $R(b_0,b_1), R(b_1, b_2), \ldots, R(b_n, b_{n+1})$.
Then $Ans(a_0, a_{n+1})$ is a fact of $P(I)$ while
$Ans(a_0, b_{n+1})$ is not.
A simple isomorphism argument shows that, for all
$i\leq n+1$ and for all $S\in \bfS_{aux}$, 
$S(a_i)$ belongs to $\chase_P(I)$ if and only if $S(b_i)$ belongs to $\chase_P(I)$.
It is then easy to see that any derivation of $Ans(a_0, a_{n+1})$
using a rule of $P$ implies also the existence of a derivation 
of $Ans(a_0, b_{n+1})$ using the same rule. A contradiction. 
\end{proof}

\thmmonadicreduction*

\begin{proof}
To simplify the exposition,
we may assume that the articulation position of each relation (if it has one) is the
first position.
Let $\mathcal{Q}=\{Q_1, \ldots, Q_k\}$.
For each relation $S\in \bfS_{out}^P\cup\bfS_{aux}^P$
with $arity(S)>0$,
and
for each partial function $f:\{1, \ldots, arity(S)\}\hookrightarrow \mathcal{Q}$,
we create a unary relation $S^f$. 
The intuitive meaning of 
$S^f(x)$ is: $$\exists y_1\ldots y_k (S(y_1, \ldots, y_k)\land x=y_1\land \bigwedge_{f(i)=Q_j}Q_j(y_i))~.$$
Let $\bfS'_{aux}$ be the set of all these new unary relations.
Finally, we define the set $\Sigma^{P'}$ of rules  of our new program $P'$.
Take any rule in $\rho\in\Sigma^P$. Without loss of generality, we can 
we can assume that $\rho$ is of the form
\[ R_0(\textbf{x}_0) \colondash R_1(\textbf{x}_1), \ldots, R_n(\textbf{x}_n), E_{n+1}(\textbf{x}_{n+1}), \ldots, E_{n+m}(\textbf{x}_{n+m}) \]
where each $R_i\in\bfS_{aux}^P\cup\bfS_{out}^P$ 
and each $E_i\in \bfS_{in}$.
For $0\leq i\leq n$, let
$f_i:\{1, \ldots, arity(R_i)\}\hookrightarrow \mathcal{Q}$ be a partial function, such that
the following consistency requirement is
satisfied: whenever a variable occurs in 
multiple $\bfS_{aux}\cup\bfS_{out}$-atoms in the above rule, 
say,
in the $j$-th argument position of the atom
$R_i(\textbf{x}_i)$ and in the $j'$-th 
argument position of the atom $R_{i'}(\textbf{x}_{i'})$, then
 $f_i(j)=f_{i'}(j')$ (we allow here that $f_i(j)$ and $f_{i'}(j')$ are both undefined).

For each  rule $\rho\in \Sigma^P$ and for each choice of
partial functions $f_0, \ldots, f_n$, 
satisfying the above consistency requirement, 
we add to $\Sigma^{P'}$ the rule

\[ R_0^{f_0}(x_{0,1}) \colondash R^{f_1}_1(x_{1,1}), \ldots, R_n^{f_n}(x_{n,1}), 
E_{n+1}(\textbf{x}_{n+1}), \ldots, E_{n+m}(\textbf{x}_{n+m}),
\bigwedge_{f_0(i)=Q_j} Q_{j}(x_{0,i})
\]
where $x_{i,j}$ stands for the $j$-th variable in the tuple of variables $\textbf{x}_i$.

Finally we add the rule
\[ Ans() \colondash R^f(x)\]
where $R$ is the relation mentioned in the statement of the proposition,
and $f:\{1, \ldots, arity(R)\}\to \mathcal{Q}$ is the total function given by $f(i)=Q_i$.
This concludes the definition of the monadic Datalog program $P'$. 

Let $I$ be any $\bfS_{in}^P$-instance, and let $I'=I\cup\{Q_1(a_1), \ldots, Q_k(a_k)\}$.
\begin{claim*}
                The following are equivalent, for all 
                $R^f\in \bfS'_{aux}$ and $c\in adom(I)$:
                \begin{enumerate}
                    \item                 $c\in (R^f)^{P'(I)}$ 
                    \item there is a tuple $(b_1, \ldots, b_n)\in R^{P(I)}$
                           such that $b_1=c$ and, for all $i\leq n$, if $f(i)=Q_j$, then
                           $b_i=a_i$.
                \end{enumerate}
\end{claim*}
Both directions of this claim can be proved by an induction on the length of derivations.
In 
particular, it follows from this claim that $Ans() \in P'(I')$
iff $R(a_1, \ldots, a_k)\in P(I)$.

Now let us prove the converse direction. For every $S\in\bfS^{P'}_{aux}\cup\{Q_1,\dots,Q_k\}$, 
$\bfS_{aux}^P$ contains a $(1+k)$-ary symbol $S^*$. We shall use 
${Ans}^*$ (where $Ans$ is the output predicate in $P'$) to denote the output predicate $R$ of $P$ 
The intuitive meaning of $S^*(\textbf{x},y_1,\dots,y_k)$ is

$$S(\textbf{x})\land \bigwedge Q_j(y_j)$$

We note that $\textbf{x}$ consists of a single variable whenever $S\in\bfS^{P'}_{aux}\cup\{Q_1,\dots,Q_k\}$
and is empty whenever $S=Ans$.

To achieve this we include to $\Sigma^P$ the following rules. First, for every $Q_i$ we add the rule:
$$Q^*_i(x_i,x_1,\dots,x_k) \colondash \text{(empty body)}$$
We note that  although this rule is unsafe (that is, the variables in the head do not occur in the body) this can be easily fixed
extending the rule body with an $\bfS_{in}$-atom 
containing every variable in the head and with fresh variables in all other positions of the atom (there are multiple ways to do this, and we add all safe rules that can be obtained in this way).

Secondly, for each rule $\rho$ in $\Sigma^{P'}$ we add to $\Sigma^{P}$ a new rule obtained from $\rho$. Without loss of generality, we can 
we can assume that $\rho$ is of the form
\[ R_0(\textbf{x}_0) \colondash R_1(x_1), \ldots, R_n(x_n), E_{n+1}(\textbf{x}_{n+1}), \ldots, E_{n+m}(\textbf{x}_{n+m}) \]
where $R_0\in\bfS_{aux}^{P'}\cup \bfS_{out}^{P'}$, each $R_i\in\bfS_{aux}^{P'}\cup\{Q_1,\dots,Q_k\}$, and 
each $E_i\in \bfS_{in}^{P'}\setminus\{Q_1,\dots,Q_k\}$. Then we add to $\Sigma^P$ the rule:
\[ R_0(\textbf{x}_0,y_1,\dots,y_k) \colondash R_1(x_1,y_1,\dots,y_k), \ldots, R_n(x_n,y_1,\dots,y_k), E_{n+1}(\textbf{x}_{n+1}), \ldots, E_{n+m}(\textbf{x}_{n+m}) \]

Let $I$ be any $\bfS_{in}^P$-instance, and let $I'=I\cup\{Q_1(a_1), \ldots, Q_k(a_k)\}$. The following claim can be
proved by induction on the length of the derivations.
\begin{claim*}
                The following are equivalent, for all 
                $S^*\in \bfS^{P}_{aux}$ and $c\in adom(I)$:
                \begin{enumerate}
                    \item $c\in S^{P'(I)}$ 
                    \item $(c,a_1,\dots,a_k)\in ({S^*})^{P(I)}$
                \end{enumerate}
\end{claim*}
Note that it follows that $Ans() \in P'(I')$
iff $Ans^*(a_1, \ldots, a_k)\in P(I)$.
\end{proof}

\coralmostmonadictomso*
\begin{proof}
Let $P'$ be as in Theorem~\ref{thm:monadic-reduction}.
By Theorem~\ref{thm:monadic-mso}, there is an MSO sentence
$\phi$ such that, for all $\bfS\cup\{Q_1, \ldots, Q_k\}$-instances $I$,
$Ans() \in P'(I)$ iff
$I\models\phi$.
Let $$\psi(x_1, \ldots, x_k)= \exists Q_1\ldots Q_k (\phi\land \bigwedge_{i}\forall z(Q_i(z)\leftrightarrow z=x_i))$$
Then, for all $\bfS_{in}^P$-instances $I$, $I\models\psi(a_1, \ldots, a_k)$ iff
$R(a_1, \ldots, a_k)\in P(I)$.
\end{proof}

\lemtreeunfoldings*

\begin{proof}
Recall that $\Unfoldings(P,R)$ consists of canonical 
instances of derivable rules, where a derivable rule
is a rule can be obtained from the rules of $P$ through
the operation of substituting occurrences of a rule
head by the corresponding rule body. It is easy to see
that this substitution operation preserves 
tree-shapedness, and therefore every derived rule is
tree-shaped. It follows that every $\Unfoldings(P,R)$
consists of acyclic pointed instances.
\end{proof}

\thmmsototam*

\smallskip

The proof is deferred to Appendix~\ref{app:msototam}.

\corclosureundercomposition*

\begin{proof}
The composition of $P_1$ and $P_2$ is clearly
expressible as a tree-shaped Datalog program:
we may assume that $\bfS_{aux}^{P_1}$ and $\bfS_{aux}^{P_2}$ are disjoint. 
Let $P_{3} = (\bfS_{in}^{P_1},\bfS_{out}^{P_2},\bfS_{aux}^{P_1}\cup\bfS_{out}^{P_1}\cup\bfS_{aux}^{P_2},\Sigma^{P_1}\cup\Sigma^{P_2})$. Then $P_{3}$ defines the composition of $P_1$ and $P_2$. Note that $P_{3}$ is tree-shaped but no longer necessarily almost-monadic. As we will show, however, $P_{3}$ is nevertheless equivalent to a
TAM Datalog program.

For each $k$-ary relation $R\in \bfS_{out}^{P_1}$, let 
$\phi_R(x_1, \ldots, x_k)$ be the MSO query over schema $\bfS_{in}^{P_1}$ defined by $(P_1,R)$.
Similarly, for each $k$-ary relation $S\in \bfS_{out}^{P_2}$, let 
$\phi_S(x_1, \ldots, x_k)$ be the MSO query over schema $\bfS_{in}^{P_2}$ defined by $(P_2,S)$.
We can substitute, in $\phi_S$, all occurrences of relation symbols $R\in\bfS_{out}^{P_1}$
by their defining formula $\phi_R$. In this way, 
we obtain, for each $S\in\bfS_{out}^{P_2}$, 
an MSO query $\phi'_S$ over the schema $\bfS_{in}^{P_1}$.
Note that $\phi'_S$ is precisely the MSO query defined by 
$(P_{3},S)$.

It follows by Theorem~\ref{thm:mso-to-tam} that, for each
$S\in\bfS_{out}^{P_2}$, the MSO query $\phi'_S$ is 
definable by a TAM Datalog program. As a last step, we
merge the TAM Datalog programs in question to 
obtain a single TAM Datalog program that is equivalent to $P_{3}$.
\end{proof}

\thmnormalform*

\begin{proof}
First, we will show how to ensure that each rule contains
at most one occurrence of a relation from $\bfS_{in}$. 
Consider any rule whose body has two or more conjuncts 
involving relations from $\bfS_{in}$. Since the
program is tree-shaped, the incidence graph of the rule
body is acyclic. It follows that the rule in question can be
written (by re-ordering the atoms in the body as needed) as
follows:
\[ R_0(\textbf{x}_0) \colondash R_1(\textbf{x}_1), \ldots,
 R_i(\textbf{x}_i), R_{i+1}(\textbf{x}_{i+1}), \ldots,
R_n(\textbf{x}_n) \]
where one of the relations $R_1, \ldots, R_i$ is in 
$\bfS_{in}$,  one of the relations $R_{i+1}, \ldots, R_n$
is in $\bfS_{in}$, and  the intersection
$\{\textbf{x}_1, \ldots, \textbf{x}_i\}\cap\{\textbf{x}_{i+1}, \ldots, \textbf{x}_n\}$ contains at most one variable $z$.
Indeed, if the program is connected, such a variable $z$ must exist.

Let $\textbf{u}$ be an enumeration of the variables in 
$\{\textbf{x}_{i+1}, \ldots, \textbf{x}_n\}$ without duplicates, and starting with $z$, or otherwise starting with any variable occurring
in an input-relation atom in $R_{i+1}(\textbf{x}_{i+1}), \ldots,
R_n(\textbf{x}_n)$.

We can replace the above rule by the following two rules:
\begin{align*}
R_0(\textbf{x}_0) &\colondash R_1(\textbf{x}_1), \ldots,
 R_i(\textbf{x}_i), R'(\textbf{u}) \\
 R'(\textbf{u}) &\colondash R_{i+1}(\textbf{x}_{i+1}), \ldots,
R_n(\textbf{x}_n)
\end{align*}
where $R$ is a fresh auxiliary relation of suitable arity, whose
articulation position is the first position.
Observe that both new rules have strictly fewer occurrences of 
relations from $\bfS_{in}$ than the original rule, and
that this construction preserves connectedness. 
If we repeat this process, we will end up with at most a linear
number of rules, each of size no greater than the size of the
original rule. Furthermore, this can clearly be performed in 
polynomial time.

Next, we explain how to ensure that each rule body contains
at least one (hence, exactly one) 
relation from $\bfS_{in}$. Here we use
the fact that every tuple that
is derived into a defined relation must consist of values
originating from facts of the input instance. Specifically, given a
rule
\[ R_0(\textbf{x}_0) \colondash R_1(\textbf{x}_1), \ldots,
R_n(\textbf{x}_n) \]
not containing any relation from $\bfS_{in}$. 
Let $x$ be any variable occurring in the articulation position of 
one of the atoms in the rule body, and replace the rule by all 
possible rules that extend its body with an additional atom
$R'(\textbf{u},x,\textbf{v})$ with $R'\in\bfS_{in}$ and $\textbf{u},\textbf{v}$ distinct, fresh variables.
Again, 
connectedness is preserved.
\end{proof}

\section{Proofs for Section~\ref{sec:adjoints}}

\thmcomposition*

\begin{proof}
It suffices to define $\Omega_{P_2\cdot P_1}(J) = \{(J'',\kappa\cdot\iota)\mid
(J',\iota)\in \Omega_{P_2}(J), (J'',\kappa)\in \Omega_{P_1}(J')\}$, cf.~the following commuting diagram:

  \[ \begin{tikzcd}[baseline=(I.base)]
P_2(P_1(I)) \arrow{r}{} & J  \\%
P_1(I) \arrow[hook]{u}{id} \arrow{r}{} & J'\arrow[swap,hook]{u}{\iota} & (J',\iota)\in \Omega_{P_2}(J) \\%
|[alias=I]|I \arrow{r}{} \arrow[hook]{u}{id}    & J''\arrow[swap,hook]{u}{\kappa} &
(J'',\kappa)\in \Omega_{P_1}(J')
\end{tikzcd} \qedhere
\]
\end{proof}

\thmtamadjoint*

\begin{proof}
We first consider the special case of connected programs.
We may assume without loss of generality that $P$ is simple. 
This means that every rule is of the following form:
\begin{align}
R_0(\textbf{x}_0) &\colondash E(\textbf{y}), R_1(\textbf{x}_1), \ldots, R_m(\textbf{x}_m) \\
R(\textbf{x}) &\colondash E(\textbf{y}), R_1(\textbf{x}_1), \ldots, R_m(\textbf{x}_m)
\end{align}
where $E$ is an input relation, each $R_i$ is an auxiliary relation, and $R$ is an output relation.

To simplify the exposition below, we introduce some further
notation. For each atom $R_i(\textbf{x}_i)$ as in the above
rule types, we will denote by $p_i\in\{1, \ldots, n\}$ (with $n=arity(E)$) the 
 unique number such that $y_{p_i}$ is equal to the articulated variable
in $R_i(\textbf{x}_i)$. It indeed follows from the definition of TAM Datalog and the assumed connectedness and simplicity of $P$ that such an index exists and is unique.

We construct an $\bfS_{in}$-instance $J'$.
The instance $J'$ consists of all facts $E((b_1,X_1), \ldots, (b_n,X_n))$ where
\begin{enumerate}
    \item Each $b_i$ is an element of $adom(J)\cup\{\bot\}$ and $X_i$ is a set of $\bfS_{aux}$-facts 
    over $adom(J)\cup\{\bot\}$ (not necessarily facts of $J$) in which $b_i$ occurs in articulation position,
    \item For each rule of the form (1) above and for each map $g:\{\textbf{y},\textbf{x}_1, \ldots, \textbf{x}_m\}\to adom(J)\cup\{\bot\}$, if for each $1\leq i\leq m$, $R_i(g(\textbf{x}_i))\in X_{p_i}$ 
    then $R_0(g(\textbf{x}_0))\in X_{p_0}$. 
    \item For each rule of the form (2) above and for each map $g:\{\textbf{y},\textbf{x}_1, \ldots, \textbf{x}_m\}\to adom(J)\cup\{\bot\}$, if for each $1\leq i\leq m$, $R_i(g(\textbf{x}_i))\in X_{p_i}$ 
    then $R(g(\textbf{x}))$ is a fact of $J$. 
\end{enumerate}
Note that the total number of possible facts $E((b_1,X_1), \ldots, (b_n,X_n))$ in $J'$ is double exponential on the combined size of $P$ and $J$ and only exponential on $J$ if the arity of $P$ is bounded. 

Finally, $\Omega_P(J)=\{(J',\iota)\}$, where $\iota$ is the 
natural projection from $J'$ to $J$,
mapping all elements of the form 
$(a,X)$ to $a$ (and undefined on elements
of the form 
$(\bot,X)$).
\medskip

\begin{claim*}
For all $\bfS_{in}$-instances $I$, $P(I) \to J$ iff $I \to J'$. Moreover, the witnessing homomorphisms can be constructed so that the diagram in the statement of the theorem commutes.
\end{claim*}
\begin{claimproof}
{[$\Rightarrow$]} Let $h:P(I)\to J$. Recall that we denote by $\chase_P(I)$ the $\bfS_{in}\cup\bfS_{out}\cup\bfS_{aux}$-instance
that is the chase of $I$ (and of which $I$ and $P(I)$ are the
$\bfS_{in}$-reduct and 
$\bfS_{out}$-reduct, respectively).
We extend $h$ to the entire active domain of $\chase_P(I)$ 
by sending every element $a$ that is not in $adom(P(I))$ to $\bot$. 
With a slight abuse of notation, in what follows, we denote by $h$ the extended map from
$adom(\chase_P(I))$ to $adom(J)\cup\{\bot\}$.
For each $a\in adom(\chase_P(I))$, let 
$F_{a}$ be the set of all $\bfS_{aux}$-facts of $\chase_P(I)$ in which $a$ occurs in articulation position. We define $h'(a)=(h(a),h(F_{a}))$. 
We claim that $h'$ is a 
homomorphism from $I$ to $J'$. 
To prove this, let $E(a_1, \ldots, a_n)$ be any fact of $I$. 
We must show that the fact $E((h(a_1),h(F_{a_1})), \ldots, (h(a_n),h(F_{a_n})))$
belongs to $J'$. That is, we must show that the above conditions 1--3 are satisfied. 

Clearly, the first requirement is 
satisfied, namely, $h(X_{a_i})$ consists of facts in which $h(a_i)$
occurs in articulation position. 

To see that the second requirement holds, 
consider a rule of form (1) and any map $g:\{\textbf{y},\textbf{x}_1, \ldots, \textbf{x}_m\}\to adom(J)$, such that,
for each $1\leq i\leq m$, $R_i(g(\textbf{x}_i))\in X_{p_i}$.
By construction, this means that each fact $R_i(g(\textbf{x}_i))$ is the $h$-image of a fact $R_i(\textbf{b}_i)$ in $\chase_P(I)$. Now consider the map $\tilde{g}:\{\textbf{y},\textbf{x}_1, \ldots, \textbf{x}_m\}\to adom(I)$ defined 
by $\tilde{g}(\textbf{y})=(a_1\dots,a_n)$ and $\tilde{g}(\textbf{x}_i)=\textbf{b}_i$ for $i\leq i\leq m$. Note that $\tilde{g}$ is a well-defined function. Indeed, 
for every $i\leq i\leq m$ and every $z\in R(\bx_i)$, 
if $z$ occurs more than once in the rule, then $z$ must necessarily be the variable that appears in the articulation position $p_i$ of $R(\bx_i)$. It follows that $z$ occurs in $\by$ at position $p_i$, and, hence, necessarily, $R_i(g(\textbf{x}_i))$ belongs to the $h$-image of $X_{a_{p_i}}$, and, hence, $\tilde{g}(z)=a_{p_i}$.

Since $\chase_P(I)$ is closed under the rules of $P$, we may
conclude that $R_0(\tilde{g}(\bx_0))$ belongs to $\chase_P(I)$.

Let $z$ be the variable occurring in articulation position in 
$R_0(\textbf{x}_0)$. Recall that $z$ occurs in $\textbf{y}$ at
position $p_0$, and $\tilde{g}(z) = a_{p_0}$.
Then $R_0(\tilde{g}(\bx_0)) \in F_{a_{p_0}}$, and hence, 
$h(F_{a_{p_0}})$ contains $R_0(h\circ \tilde{g}(\bx_0))$. 
Note that by definition, $h\circ \tilde{g} = g$.
In particular, $R_0(h\circ \tilde{g}(\bx_0))=R_0(g(\bx_0))$. 
Therefore we have that
$R_0(g(\bx_0))\in h(F_{a_{p_0}})$, and we are done.

To see that the third requirement holds, 
consider a rule of form (2) and any map $g:\{\textbf{y},\textbf{x}_1, \ldots, \textbf{x}_m\}\to adom(J)$, such that,
for each $1 \leq i\leq m$, $R_i(g(\textbf{x}_i))\in X_{p_i}$.
By exactly the same reasoning as before,
an $h$-preimage of the rule head $R(g(\textbf{x}))$ belongs to $\chase_P(I)$. Hence, it belongs to $P(I)$,
therefore, $R(g(\textbf{x}))$ is a fact of $J$.

It is also clear from the construction that
$h \circ id = \iota \circ h'$, where $id$ is 
the identity function on $adom(I)\cap adom(P(I))$. That is, the 
diagram commutes.

{[$\Leftarrow$]} Conversely, let $h:I\to J'$. 
Note that $adom(\chase_P(I))=adom(I)$, and hence we $h(a)$ is well-defined
for all $a\in adom(\chase_P(I))$.
Let $h':adom(I)\to adom(J)$ be the map such that $h'(a)=b$ whenever $h(a)=(b,X)$.

\begin{subclaim}
For all $a\in adom(\chase_P(I))$, if $h(a)=(b,X)$, then the $h'$-image of every $\bfS_{aux}$-fact of $\chase_P(I)$ in which $a$ occurs in articulation position belongs to $X$.
\end{subclaim}

\begin{subclaim}
 $h'$ is a homomorphism from 
$P(I)$ to $J$. 
\end{subclaim}

Subclaim 1 can be proved by induction on the derivation length of the 
fact in question. 

To prove subclaim 2, let 
$R(\textbf{a})$ be an $\bfS_{out}$-fact belonging to $P(I)$.
Its derivation must use a rule of the form (2) above,
using an assignment $g$ (where $g(\textbf{x})=\textbf{a})$.
By Subclaim 1, we have that that $R_i(h'(g(\textbf{x}_i)))$ 
belongs to $h(g(y_{p_i}))_2$, for $y_{p_i}$  the
 articulated variable in $\textbf{x}_i$.
 Furthermore, $E(g(\textbf{y}))$ holds in $I$, and hence
 $E(h(g(\textbf{y})))$ holds in $J'$.
 By construction of 
 $J'$, this means that the $R(h'(g(\textbf{x})))$,
 that is, $R(h'(\textbf{a}))$, belongs
 to $J$. This concludes the proof for the case of \emph{connected} TAM Datalog programs.

It is also clear from the construction that
$\iota \circ h = h' \circ id$, where $id$ is 
the identity function on $adom(I)\cap adom(P(I))$. That is, the 
diagram commutes.
\end{claimproof}

 Finally, we show how to handle non-connected TAM Datalog programs. 
 Let $P$ be a non-connected TAM Datalog program. Let $P'$ be obtained
 from $P$ by adding a fresh binary input-relation $S$, and using 
 this relation to make every every rule connected in some arbitrary way
 (more precisely, whenever the incidence graph of a rule body 
 has multiple connected component, we add $S$-atoms to the body connecting
 these components while preserving tree-shapedness and almost-monadicity.
 For every input instance $I$, we denote by $\widehat{I}$ the 
 $\bfS_{in}\cup\{S\}$-instance extending $I$ with all facts
 of the form $S(a,b)$ for $a,b\in adom(I)$. 
 Furthermore, given an instance $J'$ over the schema $\bfS_{in}\cup\{S\}$, 
 by an ``$S$-component'' of $J'$ we will mean the $\bfS_{in}$-retract of a fully $S$-connected sub-instance of $J'$. Clearly, if $J$ is an $\bfS_{in}$-instance and $J'$ is a $\bfS_{in}\cup\{S\}$-instance, then
 $\widehat{J}\to J'$ iff $J\to J''$ for some $S$-component $J''$ of $J'$.
 Now we simply define $\Omega_P(J)$ to be the set of all $S$-components of
 instance in $\Omega_{P'}(J)$.
Then we have:
 $P(I)\to J$ iff $P'(\widehat{I})\to J$ iff
 $\widehat{I}\to \Omega_{P'}(J)$ iff
 $I\to J'$ for some $J'\in \Omega_P(J)$.

 As a side remark, we mention that there
 is another way present the final argument
 where we lift the connected
 case to the general case: we can 
 view the function that sends $I$ to $\widehat{I}$ as a functor that itself
 has a generalized right-adjoint (sending $I$ to its $S$-connected components). Thus, 
 we can argue by composition, using 
 Theorem~\ref{thm:composition}. 
\end{proof}

\proptreeshapednoadjoint*

\begin{proof}
Assume towards a contradiction that $P$ has a generalized right-adjoint $\Omega_P$. 
Let $J$ be
the two-element $\{R\}$-instance consisting of the facts $R(0,1)$ and $R(1,0)$.

For $n\geq 1$, let 
 $C_n$ be the $\{E,F\}$-instance consisting of the facts
 $E(v_0, v_1)$, \dots, $E(v_{n-1}, v_n)$, $E(v_n,v_0)$, that is, 
 the directed $E$-cycle of length $n$.
Trivially, $P(C_n)\to J$ for all $n\geq 1$. Therefore, for 
each $n\geq 1$, we have 
$C_n\to J'$ for some $J'\in\Omega_P(J)$. For every $J'\in\Omega_P(J)$ and for each element $b$ of $J'$, let us define $n_{J',b}$ to be an arbitrarily chosen 
value such that $(C_n,v_0)\to (J',b)$, or undefined,
if no such value exists. It follows from our earlier observation
that $n_{J',b}$ is defined for at least one pair $(J',b)$ with
$J'\in\Omega_P(J)$.
Let $m$ be a common multiple of all defined $n_{J',b}$’s.

For every pair of positive integers $e \leq f$, let $I_{e,f}$ be the $\{E, F\}$-instance depicted as follows:
\[
u_0 \xrightarrow{E} v_0 \xrightarrow{F} u_1 \xrightarrow{E} v_1 \xrightarrow{F} u_2
\xrightarrow{\text{\begin{tabular}{l}sequence of\\[-1mm]$e$ $E$-edges\end{tabular}}} z
\xrightarrow{\text{\begin{tabular}{l}sequence of \\[-1mm] $f$ $F$-edges\end{tabular}}} u_0
\]

\begin{myclaim}
For all $1\leq e\leq f$, the following are equivalent:
\begin{enumerate}
    \item $I_{e,f}\to J'$ for some $J'\in\Omega_P(J)$
    \item $e\neq f$.
\end{enumerate} 
\end{myclaim}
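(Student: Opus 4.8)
The plan is to reduce the claim to a statement about bipartiteness, using the adjoint property together with the specific shape of $J$. By the defining property of a generalized right-adjoint (Definition~\ref{def:adjoints}), applied to the finite instance $I_{e,f}$, we have that $I_{e,f}\to J'$ for some $(J',\iota)\in\Omega_P(J)$ if and only if $P(I_{e,f})\to J$. Thus it suffices to prove that $P(I_{e,f})\to J$ holds precisely when $e\neq f$. The crucial feature of $J=\{R(0,1),R(1,0)\}$ is that, for every $\{R\}$-instance $H$, a homomorphism $H\to J$ exists iff the undirected graph underlying $H$ is bipartite: since both directed edges between $0$ and $1$ are present, a homomorphism into $J$ is exactly a proper $2$-colouring of the elements of $H$. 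So the whole claim reduces to showing that $R^{P(I_{e,f})}$ is bipartite iff $e\neq f$.

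Next I would enumerate the $R$-facts of $P(I_{e,f})$. By the semantics of $P$ recalled in Example~\ref{ex:non-monadic}, $R(a,b)\in P(I_{e,f})$ iff $I_{e,f}$ contains a directed path from $a$ to $b$ formed by $k$ consecutive $E$-edges followed by $k$ consecutive $F$-edges, for some $k\geq 1$. Since $I_{e,f}$ is a single directed cycle in which $E$- and $F$-edges occur in maximal blocks, any such path must switch from $E$ to $F$ at a vertex with an incoming $E$-edge and an outgoing $F$-edge. There are exactly three such ``turning'' vertices: $v_0$ and $v_1$, each surrounded by $E$- and $F$-blocks of length $1$, contributing only $R(u_0,u_1)$ and $R(u_1,u_2)$; and $z$, preceded by the block of $e$ $E$-edges starting at $u_2$ and followed by the block of $f$ $F$-edges ending at $u_0$, contributing, for each $k$ with $1\leq k\leq\min(e,f)$, the fact joining the vertex $e-k$ steps after $u_2$ to the vertex $k$ steps after $z$.

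The decisive point is whether the longest path through $z$ closes the cycle. If $e=f$, then the choice $k=e=f$ runs $e$ $E$-edges from $u_2$ up to $z$ and then $f$ $F$-edges from $z$ all the way back around to $u_0$, producing the fact $R(u_2,u_0)$; together with $R(u_0,u_1)$ and $R(u_1,u_2)$ this is a triangle on $\{u_0,u_1,u_2\}$, an odd cycle, so $R^{P(I_{e,f})}$ is not bipartite and $P(I_{e,f})\not\to J$. If instead $e<f$, then $\min(e,f)=e$, and the $e$ facts through $z$ form a matching on otherwise-fresh vertices, sharing only $u_2$ (as the endpoint of the $k=e$ fact) with $R(u_0,u_1),R(u_1,u_2)$. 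Hence $R^{P(I_{e,f})}$ is a disjoint union of a simple path $u_0-u_1-u_2-w$ (for a single further vertex $w$) and $e-1$ isolated edges --- a forest, which is bipartite, so $P(I_{e,f})\to J$. This yields both implications of the claim.

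The step I expect to be the main obstacle is the exhaustive and correct enumeration of the $R$-facts, i.e.\ justifying that $v_0,v_1,z$ are the only turning vertices and that no derivation traverses the cycle more than once, so that the facts listed above are exactly the edges of $R^{P(I_{e,f})}$. I would make this rigorous by indexing all vertices by their position $0,1,2,\dots$ along the cycle; then every $R$-edge connects two positions differing by an even amount, the path through $z$ for parameter $k$ links positions symmetric about $z$, and only at $k=\min(e,f)=e=f$ can such a path reach position $0=u_0$. With positions fixed, the ``triangle'' for $e=f$ and the ``forest'' for $e<f$ both become immediate, and one also sees directly that no two of the listed facts coincide or create an additional cycle.
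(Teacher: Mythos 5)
Your proof is correct and follows essentially the same route as the paper's: both reduce the claim via the generalized right-adjoint property to the question of whether $P(I_{e,f})\to J$, and both then observe that for $e=f$ the facts $R(u_0,u_1)$, $R(u_1,u_2)$, $R(u_2,u_0)$ form an odd $R$-cycle (so no homomorphism to $J$ exists), while for $e<f$ the $R$-facts form a disjoint union of paths, hence map to $J$. Your explicit enumeration of the turning vertices and of the facts derived through $z$ simply fills in details that the paper leaves to the reader.
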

\begin{claimproof}
If $e = f$ then $P(I_{e,f})$ contains an
$R$-cycle of odd length, viz.~$u_0 \xrightarrow{R} u_1\xrightarrow{R} u_2\xrightarrow{R} u_0$, and therefore $P(I_{e,f})\not\to J$. Hence, $I_{e,f}\not\to J'$ for all $J'\in \Omega_P(J)$.
On the other hand, if
$e < f$, then $P(I_{e,f})$ is a disjoint union of $R$-paths, and, clearly, $P(I_{e,f})\to J$. Therefore, $I_{e,f}\to J'$ for some
$J'\in\Omega_P(J)$.
\end{claimproof}

Now, let $e$ be larger than the universe of all instances in $\Omega_P(J)$ and let $f = e + m$. By Claim 1, there is a homomorphism
 $h:I_{e,f}\to J'$ for some $J'\in\Omega_P(J)$. 
 We will show that $h$ can be extended to a homomorphism $h':
I_{f,f}\to J'$, which contradicts Claim 1. 

Let \[u_2 = x_0 \xrightarrow{E} x_1 \cdots \xrightarrow{E} x_e = z\] be the sub-instance of $I_{e,f}$ consisting
of the $E$-edges joining $u_2$ and $z$.
Similarly, let
\[u_2 = x_0 \xrightarrow{E} x_1 \cdots \xrightarrow{E} x_e \xrightarrow{E} x_{e+1} \ldots \xrightarrow{E} x_f = z\]
be the sub-instance of $I_{f,f}$ consisting
of the $E$-edges joining $u_2$ and $z$. Recall that $f=e+m$.
Since $e$ is larger
than the domain size of $J'$, it must be the case that
$h(x_i)=h(x_j)=b$ for some $i<j\leq e$, for some element $b$ of
$J'$. This means that $b$ lies on a directed $E$-cycle in $J'$, and hence, in particular, it lies on a directed $E$-cycle of length $m$, say, $b=b_0 \xrightarrow{E} b_1 \cdots \xrightarrow{E} b_m = b$.
The mapping $h':I_{f,f}\to J'$ can be constructed simply by extending $h$ and mapping $x_{e+i}$ to $b_i$ for $1\leq i\leq m$. 
\end{proof}

\propcacyclicnoadjoint*

\begin{proof}
Let $P$ be the Boolean Datalog program in question.
Note that $\Unfoldings(P,Ans)$ consists of a pointed structure
that is c-acyclic but not acyclic.
It does not admit a generalized right-adjoint: let $J$ be the empty
$\bfS_{out}^P$-instance, and suppose for the 
sake of contradiction that there is a finite
set $\{J_1, \ldots, J_n\}$ such that,
for all $\bfS_{in}^P$-instances $I$, 
$P(I)\to J$ iff $I\to J_i$ for some $i\leq n$. Let $I_c$ be the
instance consisting of a single reflexive $Ans$-edge of the form $Ans(a,a)$.
Clearly, $P(I_c)\not\to J$, and therefore,
$I_c\not\to J_i$. That is, $J_1, \ldots, J_n$ do not contain a reflexive $Ans$-edge. Next, let $I_n$ be the instance that is an (irreflexive) $Ans$-clique of size $n$, where $n$ is an arbitrary 
number greater than the size of each $J_i$. Then, $I_n\not\to J_i$ (because it there was such a homomorphism, $J_i$ would necessarily contain a reflexive $Ans$-edge), but, trivially, $P(I_n)\to J$.
\end{proof}

The next lemma lists the main differentiating properties of strongly linear \edatalog programs that we will make use of.

\begin{lemma}
\label{lem:linear-properties}
Let $P=( \bfS_{out}, \bfS_{aux}, \Sigma)$ be a strongly linear
\edatalog program and let $I,I'$ be $\bfS_{in}$-instances.
\begin{enumerate}[(a)]
\item $P(I\cup I')\leftrightarrow_{adom(I\cup I')} P(I)\cup P(I')$.
\item For every solution $J$ of $I$, and for every function $f$ with $dom(f)=adom(I)$,
$f[J]$ is a solution for
$f[I]$, where $f[K]$ denotes the instance obtained
from $K$ by replacing every value $a\in dom(f)$ by $f(a)$. 
\end{enumerate}
\end{lemma}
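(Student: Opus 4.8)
The plan is to establish the two parts separately, since part~(b) is a direct structural argument while part~(a) reduces to showing that a union of solutions is again a solution, which is where the single-atom shape of the rules does the real work.

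For part~(b), I would write $\bar f$ for the extension of $f$ to all of $adom(J)$ that acts as the identity outside $dom(f)=adom(I)$, so that $f[K]=\bar f[K]$ for every sub-instance $K$ of $J$; in particular $f[I]\subseteq f[J]$ using $I\subseteq J$. It then remains to verify that $f[J]$ satisfies every rule $\rho$ of $P$, and here strong linearity enters decisively: the body of $\rho$ is a single atom $R(\textbf{x})$ with \emph{pairwise distinct} variables, so a body match in $f[J]$ is nothing but an $R$-fact of $f[J]$. Any such fact is the $\bar f$-image of some $R$-fact of $J$, which is therefore a body match in $J$; since $J$ is a solution, the head of $\rho$ is witnessed in $J$ by some assignment to the existential variables, and pushing that witness forward along $\bar f$ yields a witness for the head in $f[J]$ (the exported variables are handled automatically because $\bar f$ is applied uniformly). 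This is exactly the point at which the absence of repeated variables is needed: a repeated variable would permit a body match in $f[J]$ that does not lift to a match in $J$, and collapsing could then violate the rule.

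For part~(a), I would prove the two homomorphisms witnessing $\leftrightarrow_{adom(I\cup I')}$ by different means. The direction $P(I)\cup P(I')\to_{adom(I\cup I')}P(I\cup I')$ follows from monotonicity (Lemma~\ref{lem:edatalog-monotonicity}) applied to the inclusions $I\hookrightarrow I\cup I'$ and $I'\hookrightarrow I\cup I'$: these give homomorphisms $P(I)\to_{adom(I)}P(I\cup I')$ and $P(I')\to_{adom(I')}P(I\cup I')$, which, after choosing representatives whose nulls are disjoint outside $adom(I)\cap adom(I')$, agree on the shared elements (where both are the identity) and hence glue to a single homomorphism out of $P(I)\cup P(I')$. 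For the converse direction, I would pick universal solutions $J_I$ and $J_{I'}$ for $I$ and $I'$ (again with nulls renamed apart), so that $P(I)\cup P(I')$ is literally the $\bfS_{out}$-reduct of $J_I\cup J_{I'}$. The key claim is that $J_I\cup J_{I'}$ is a solution for $I\cup I'$: it contains $I\cup I'$, and whenever the single-atom body of a rule is matched in $J_I\cup J_{I'}$, the matching fact lies \emph{entirely} in one of $J_I,J_{I'}$, whose solution status already supplies the required head witness. Invoking universality of the universal solution for $I\cup I'$ then yields $P(I\cup I')\to_{adom(I\cup I')}P(I)\cup P(I')$ after restricting to $\bfS_{out}$.

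I expect the main obstacle to be this last claim, namely that the program derives no genuinely new output on $I\cup I'$. It is precisely here that strong linearity is indispensable: with a single-atom body every rule firing is triggered by a single fact, so no firing can exploit a join \emph{across} $J_I$ and $J_{I'}$ through a shared element, and the union of the two solutions therefore remains a solution. The remaining effort is bookkeeping — fixing representatives of the classes $P(I),P(I')$ so that their nulls overlap only on $adom(I)\cap adom(I')$ — which is what makes both the gluing in the easy direction and the reduct identity $(J_I\cup J_{I'})|_{\bfS_{out}}=P(I)\cup P(I')$ hold on the nose rather than merely up to $\leftrightarrow$.
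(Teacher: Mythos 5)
Your proof is correct and follows essentially the same route as the paper's: part (b) via lifting the single-atom body match back along $f$, and part (a) via the observation that a union of solutions is again a solution because a single-atom body cannot join across the two parts, combined with universality. The only cosmetic difference is that for the easy direction of (a) you invoke monotonicity and glue, where the paper directly uses that the universal solutions $J$ and $J'$ each map into the universal solution for $I\cup I'$; these amount to the same thing.
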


\begin{proof}
\begin{enumerate}[(a)]
\item Recall that $P(I)$ can be defined as the $\bfS_{out}$-reduct of an
(arbitrarily chosen) universal solution for $I$ with respect to $P$. 
Let $J, J', J''$ be universal solutions for $I$, $I'$, and $I\cup I'$,
respectively. We may assume without loss of generality that 
$adom(J)\cap adom(J')\subseteq adom(I\cup I')$. 
It is easy to see that $J\cup J'$ is a solution for $I\cup I'$.
(Indeed, whenever the body of a strongly linear rule is satisfied in $J\cup J'$, then it is
satisfied in $J$ or in $J'$, and hence, the rule head is also satisfied
in the same instance, therefore also in $J\cup J'$.) Therefore, 
by definition of universal solutions, $J''\to_{adom(I\cup I')} J\cup J'$.
Conversely, since $J''$ is a solution for both $I$ and $I'$, we have,
by the definition of universal solutions, that 
$J\to_{adom(I\cup I')} J''$ and $J'\to_{adom(I\cup I')} J''$. It follows
that $J\cup J'\to_{adom(I\cup I')} J''$. In conclusion,
$J\cup J' \leftrightarrow_{adom(I\cup I')} J''$. Therefore, the same 
relationship holds between their $\bfS_{out}$-reducts.

\item
Since $I\subseteq J$, clearly, also $h[I]\subseteq h[J]$.
Furthermore, $h[J]$ is closed under the rules of the program:
if the body a strongly-linear constraint is satisfied by some tuple $\textbf{a}$
in $h[J]$, then some tuple $\textbf{b}\in h^{-1}[\textbf{a}]$  satisfies the same
rule body in $J$. Therefore, the head of the rule is satisfied in $J$
for $\textbf{b}$, and hence in $I$ for $\textbf{a}$.
\qedhere
\end{enumerate}
\end{proof}

\thmlinearadjoints*

\begin{proof}
Let $P=(\bfS_{in}, \bfS_{out}, \bfS_{aux}, \Sigma)$, and let $J$
be any $\bfS_{out}$-instance. 
Let $D=adom(J)\cup\{\bot\}$ where $\bot$ is a fresh value.
We define $J'$ to be the $\bfS_{in}$-instance
 consisting of all facts $R(\textbf{d})$ over domain $D$ for which
it holds that $P(\{R(\textbf{d})\})\to_{adom(J)} J$.

It follows from~\cite[Corollary 5.10]{Benedikt2021:inference} that $J'$ can be computed
in ExpTime, and in PTime if $P$ is fixed. For 
completeness, we sketch a direct argument here:
it is easy to see that $J'$ is precisely the $\bfS_{in}$-reduct of the maximal $\bfS$-instance $K$ over the domain $D$ satisfying the following two conditions:
\begin{enumerate}
  \item The $\bfS_{out}$-reduct of $K$ is contained in $J$.
    \item All rules of $P$ are satisfied in $K$.
\end{enumerate}
We note that $K$ can be computed greedily in ExpTime by first setting it to the maximal instance satisfying (1) above and  then iteratively removing any 
$\bfS_{in}\cup\bfS_{aux}$-fact $S(\textbf{d})$ that violates any rule
\[ \exists \textbf{z} \big(R_1(\textbf{x}_1), \ldots, R_n(\textbf{x}_n)\big) \colondash S(\textbf{y})
\]
of $P$.
Further, note that if the arity of the $\bfS_{in}$-relations is bounded then $J'$ has polynomial size and that, in addition, $J'$ can be computed in polynomial time if $P$ is fixed.

It follows from the definition of $J'$ that $P(J')\to_{adom(J)} J$
(cf.~Lemma~\ref{lem:linear-properties}(a)).
Let $\iota:adom(J')\hookrightarrow adom(J)$ 
be the identity function on $adom(J)$.
We claim that $(J',\iota)$ serves as a right-adjoint.

In one direction, let $I$ be any $\bfS_{in}$-instance. 
If there is a homomorphism $h:I\to J'$, then
we obtain the following commuting diagram,
where $h'$ is given by Lemma~\ref{lem:edatalog-monotonicity}:
  \[ \begin{tikzcd}[column sep=large,row sep=large]
P(I) \arrow{r}{h'} & P(J') \arrow[pos=.9,swap,shorten >=20pt]{r}{adom(J)} & ~~~~~J~~~~~ \\%
I \arrow{r}{h} \arrow[hook]{u}{id}    & J'\arrow[swap,hook]{u}{id} \arrow[hook,bend right=20]{ur}{\iota} &
\end{tikzcd}
\]

Conversely, let 
$h:P(I)\to J$. Let $h'$ extend $h$ by mapping
all elements of $adom(I)\setminus adom(P(I))$
to $\bot$. We claim that $h':I\to J'$.
Consider any fact $S(\textbf{b})$ of $I$.
We must show that $S(h'(\textbf{b}))\in J'$, or, in 
other words, that
$P(\{S(h'(\textbf{b}))\})\to_{adom(J)} J$. 
Since $\{S(h'(\textbf{b}))\}\subseteq h'[I]$, 
every solution for the latter is a solution for
the former. By Lemma \ref{lem:linear-properties}(b), $h'[P(I)]$ is a solution
for $h'[I]$. Since $h$ and $h'$ agree on $adom(P(I))$,
$h'[P(I)]=h[P(I)]$. Putting everything
together, we have that 
$h[P(I)]$ is a solution for $\{S(h'(\textbf{b}))\}$.
By the definition of universal solutions,
there is a homomorphism 
$g:P(\{S(h'(\textbf{b}))\})\to_{\{h'(\textbf{b})\}} h[P(I)]\subseteq J$.
This concludes the proof that $h':I\to J'$.
Furthermore, it is clear from the construction that the following diagram commutes:
  \[ \begin{tikzcd}[baseline=(I.base)]
P(I) \arrow{r}{h} & J \\%
|[alias=I]| I \arrow{r}{h'} \arrow[hook]{u}{id}    & J'\arrow[hook,swap]{u}{\iota} &
\end{tikzcd} \qedhere
\]

\end{proof}

\section{Proofs for Section~\ref{sec:dualities}}

\thmcacyclicdualititestgd*

\begin{proof}
Let $D$ be the finite set such that $(\{(A',\textbf{a})\mid (A,\textbf{a})\in F\},D)$ is a homomorphism duality, 
given by Theorem~\ref{thm:cacyclic-duals}.
Let 
$D' = \{(P(B'),\textbf{b}')\mid (B,\textbf{b})\in D, 
(B',\iota)\in \Omega_{P_\Sigma}(B), 
\iota(\textbf{b}')=\textbf{b}\}$.
Note that $D'$ consists of pointed instances satisfying $\Sigma$. 
We will show that $(F,D')$ is a homomorphism duality w.r.t.~$\Sigma$.

Let $(C,\textbf{c})$ be a pointed instance with $C\models \Sigma$.
The following chain of equivalences holds:
\begin{align}
    (C,\textbf{c}) &\in F\upclosure \tag{1} \\
    &\Updownarrow \text{Lemma~\ref{lem:edatalog-monotonicity} and $P_\Sigma(C) \leftrightarrow_{adom(C)} C$} \nonumber\\
    (C,\textbf{c})&\in \{(A',\textbf{a})\mid (A,\textbf{a})\in F\}\upclosure \tag{2} \\
    &\Updownarrow \text{duality assumption} \nonumber\\
    (C,\textbf{c})&\not\in D\downclosure \tag{3} \\
    &\Updownarrow \text{to be proved below} \nonumber\\
    (C,\textbf{c})&\not\in D'\downclosure \tag{4}
\end{align}
It remains to prove the equivalence between (3) and (4). 

From (3) to (4): By contraposition: suppose 
that $(C,\textbf{c})\to (P_\Sigma(B'),\textbf{b}')$ for some $(B,\textbf{b})\in D, 
(B',\iota)\in \Omega_{P_\Sigma}(B)$, and $\iota(\textbf{b}')=\textbf{b}$.
Trivially, we have $id:(B',\textbf{b}')\to (B',\textbf{b}')$. 
It follows by the generalized adjoint 
property that $(P_\Sigma(B'),\textbf{b}')\to (B,\iota(\textbf{b}'))$.
Therefore, by transitivity, and since $\iota(\textbf{b'})=\textbf{b}$, 
we have $(C,\textbf{c})\to (B,\textbf{b})$ and therefore $(C,\textbf{c})\in D\downclosure$.

From (4) to (3): Again, by contraposition: assume
$(C,\textbf{c})\in D\upclosure$.
Since $P_\Sigma(C)\leftrightarrow_{adom(C)} C$, it follows that $(P_\Sigma(C),\textbf{c})\to 
(B,\textbf{b})$ for some $(B,\textbf{b})\in D$. 
It follows by the adjoint property that 
$(C,\textbf{c})\to (B',\textbf{b}')$ for some $(B,\textbf{b})\in D$, $(B',\iota)\in \Omega_{P_\Sigma}(B)$, and $\textbf{b}'\in \iota^{-1}(\textbf{b})$.
Then also
$(C,\textbf{c})\to (P_\Sigma(B'),\textbf{b}')$.
This means that 
$(C,\textbf{c})\in D'\downclosure$.
\end{proof}

By the \emph{c-girth} of a pointed instance $(A,\textbf{a})$ we will
mean the length of the smallest cycle in the incidence graph of $A$
that does not pass through any element in $\textbf{a}$
(or $\infty$ if no such cycle exists). Observe that a pointed instance
is c-acyclic if and only if its c-girth is $\infty$.

\begin{lemma}[Sparse Incomparability Lemma with Designated Elements] 
\label{lem:sil}
For every pointed instance $(I,\textbf{a})$ and $m>0$,
there is a pointed instance $(I',\textbf{a})$ 
of c-girth at least $m$,
such that $(I',\textbf{a})\to (I,\textbf{a})$ and such 
that, for all pointed instances $(J,\textbf{b})$ of size
at most $m$, $(I,\textbf{a})\to (J,\textbf{b})$
iff $(I',\textbf{a})\to (J,\textbf{b})$.
\end{lemma}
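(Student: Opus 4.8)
The plan is to reduce this pointed statement to the classical (unpointed) Sparse Incomparability Lemma, which produces, for any relational structure $A$ and threshold $\ell$, a structure $B$ with $B\to A$, girth greater than $\ell$, and $B\to C$ iff $A\to C$ for every $C$ with at most $\ell$ elements. First note that one direction of the claimed equivalence is free: since we will build $(I',\textbf{a})$ with $(I',\textbf{a})\to(I,\textbf{a})$, composition immediately gives $(I,\textbf{a})\to(J,\textbf{b})\Rightarrow(I',\textbf{a})\to(J,\textbf{b})$. So the only substantive task is, for targets $(J,\textbf{b})$ with at most $m$ elements, to show $(I',\textbf{a})\to(J,\textbf{b})\Rightarrow(I,\textbf{a})\to(J,\textbf{b})$, together with the guarantee of c-girth at least $m$.

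The construction I would use encodes the distinguished elements as unary marks. Let $\hat I$ be $I$ expanded over the signature $\bfS\cup\{P_1,\dots,P_k\}$ by fresh unary predicates with $P_i^{\hat I}=\{a_i\}$ (assume the $a_i$ are distinct; repeated entries are handled by merging the corresponding marks). I would apply the classical lemma to $\hat I$ with $\ell=m$, obtaining $\hat B$ with $\hat B\to\hat I$, girth greater than $m$, and the same homomorphism type as $\hat I$ to all targets of at most $m$ elements. Then I form $(I',\textbf{a})$ by identifying, for each $i$, all $P_i$-marked vertices of $\hat B$ into a single vertex $a_i$, and forgetting the marks; write $q\colon\hat B\to I'$ for this quotient homomorphism and set the distinguished tuple to $\textbf{a}=(a_1,\dots,a_k)$.

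I would then verify the three properties. For $(I',\textbf{a})\to(I,\textbf{a})$: the projection $\pi\colon\hat B\to\hat I$ sends every $P_i$-vertex to $a_i$, hence is constant on each merged class and descends to a homomorphism $\bar\pi\colon I'\to I$ with $\bar\pi(a_i)=a_i$. For the c-girth: any cycle of $I'$ avoiding $\textbf{a}$ uses only unmerged vertices and facts not incident to a distinguished element, hence is already a cycle of $\hat B$, of length exceeding $m$; every new cycle created by the identification passes through some $a_i$ and so does not count, giving c-girth at least $m$. For the homomorphism type, suppose $\phi\colon(I',\textbf{a})\to(J,\textbf{b})$ with $|adom(J)|\le m$, and mark $J$ by $P_i^{\hat J}=\{b_i\}$. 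Then $\phi\circ q\colon\hat B\to\hat J$ is mark-preserving, since a $P_i$-vertex $v$ of $\hat B$ satisfies $q(v)=a_i$ and hence maps to $\phi(a_i)=b_i\in P_i^{\hat J}$. As $|\hat J|\le m$, the classical lemma yields $\hat I\to\hat J$, and any such homomorphism sends $a_i$ into $P_i^{\hat J}=\{b_i\}$, i.e.\ it is a pointed homomorphism $(I,\textbf{a})\to(J,\textbf{b})$, as required.

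The main obstacle, and the step I would scrutinize most, is the identification move and its interaction with both the c-girth and the homomorphism type. On the girth side I must confirm that collapsing the marked fibres can only create cycles through distinguished elements (so that they are invisible to c-girth) and can never shorten a cycle that avoids $\textbf{a}$; on the homomorphism side the delicate point is that marking the target $J$ and pulling a pointed homomorphism back along $q$ yields a genuinely mark-preserving homomorphism out of $\hat B$, to which the unpointed guarantee applies verbatim. A secondary point to get right is invoking the classical lemma for structures carrying the extra unary predicates (it holds for arbitrary relational signatures) and dealing with the degenerate case of coinciding distinguished elements.
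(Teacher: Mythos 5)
Your proposal is correct and follows essentially the same route as the paper's proof: mark the distinguished elements with fresh unary predicates, apply the classical Sparse Incomparability Lemma to the expanded structure, then collapse each marked fibre back onto the corresponding distinguished element and observe that any newly created cycle passes through a distinguished element. Your write-up is, if anything, slightly more explicit than the paper's about the quotient map and the case of repeated distinguished elements.
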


\begin{proof}
Let $(I,\textbf{a})$ be given, with $\textbf{a}=a_1\ldots a_k$,
and where $I$ is an instance over schema $\bfS$. Let
$\widehat{I}$ be the instance over schema $\widehat{\bfS}=\bfS\cup\{Q_1, \ldots, Q_k\}$
that extends $I$ with the unary facts $Q_i(a_i)$.
By the standard version of the sparse incomparability lemma,
there is an $\widehat{\bfS}$-instance $I''$ of girth at least $m$
such that $I''\to \widehat{I}$ and such that, for all $\widehat{\bfS}$-instances
$J$ of size at most $m$, $I''\to J$ iff $\widehat{I}\to J$.
Now, let $I'$ be the $\bfS$-instance obtained from $I''$ by (i) replacing every
element satisfying a unary predicate $Q_i$ by $a_i$, and (ii) dropping
the unary predicates $Q_i$. This operation may introduce new cycles
but it is not hard to see that any such newly introduced short cycle must
pass through one of the designated elements. Therefore, 
$(I',\textbf{a})$ has c-girth at least $m$. Furthermore, 
for all $\bfS$-instances $(J,\textbf{b})$ of size at most $m$,
we have that $(I,\textbf{a})\to (J,\textbf{b})$ iff
$\widehat{I}\to \widehat{J}$ iff $I''\to \widehat{J}$ iff
$(I',\textbf{a})\to (J,\textbf{b})$.
\end{proof}

\thmmonadicduality*

\begin{proof}
The 2 to 1 direction follows immediately from the 
previous theorem. 
From 1 to 2: suppose $F$ has a finite duality w.r.t.~$\Sigma$, 
consisting of $(D_1,\textbf{d}_1), \ldots, (D_n,\textbf{d}_n)$ where $D_i\models \Sigma$,
and let $(A,\textbf{a})\in F$.
Then $(A,\textbf{a})\not\to (D_i,\textbf{d}_i)$. Let $m=s\cdot t$, where $s$ is the number of facts in $A$, 
and $t$ is the maximum number of 
conjuncts in the body of a TGD in $\Sigma$.
By Lemma~\ref{lem:sil}, there is a pointed instance
$(A',\textbf{a})$ of c-girth at least $m$, such that  $(A',\textbf{a})\to (A,\textbf{s})$ and $(A',\textbf{a})\not\to (D_i,\textbf{d}_i)$ for all $i\leq n$.
Since $(A',\textbf{a})\to (A,\textbf{a})$ and $A\models\Sigma$, we have $(P(A'),\textbf{a})\to (P(A),\textbf{a})\to (A,\textbf{a})$ (by Lemma~\ref{lem:edatalog-monotonicity} and Lemma~\ref{lem:capturing}(2)).

We may assume without loss of generality that 
$A'$ contains all unary facts belonging to $P(A')$. 
This is because (i) adding these unary facts does
not change the c-girth of the instance, and
(ii) when extending $A'$ with facts from $P(A')$,
the condition that $(A',\textbf{a})\to (A,\textbf{s})$
is preserved, because $P(A',\textbf{a})\to (A,\textbf{s})$,
(iii) the condition that 
$(A',\textbf{a})\not\to (D_i,\textbf{d}_i)$ is clearly
also preserved when extending $A'$ with additional facts.

We already observed that $(P(A'),\textbf{a})\to  (A,\textbf{a})$. Furthermore, $(P_\Sigma(A'),\textbf{a})\not\to (D_i,\textbf{d}_i)$ (because, otherwise, since $(A',\textbf{a})\subseteq (P_\Sigma(A'),\textbf{a})$,
we would have $(A',\textbf{a})\to (D_i,\textbf{d}_i)$). Since $P_\Sigma(A')\models\Sigma$ (by Lemma~\ref{lem:capturing}(1)) and $(P_\Sigma(A'),\textbf{a})\not\to (D_i,\textbf{d}_i)$, by the duality assumption, some pointed instance in
$F$ maps homomorphically to $(P_\Sigma(A'),\textbf{a})$. In fact, the pointed 
instance in question must be $(A,\textbf{a})$ (otherwise we would obtain
a  contradiction with the fact that the members of $F$ are pairwise 
homomorphically incomparable).  Let
$h:(A,\textbf{a})\to (P_\Sigma(A'),\textbf{a})$. 

Let $(B,\textbf{a})$ be the sub-instance of $(P_\Sigma(A'),\textbf{a})$ that is the 
image of $(A,\textbf{a})$ under $h$.
Since all unary facts in $P_\Sigma(A')$ already
belong to $A'$, and $\Sigma$ is monadic, every
fact in $P_\Sigma(A')$ either belongs to $A'$ or else
can be derived from facts in $A'$ by a single rule application. It follows that there is a sub-instance
$B'$ of $A$ of size at most $|B|\cdot t$, such 
that $B\subseteq P_\Sigma(B')$. Since $|B|\leq s$, 
it follows that $B'$ is c-acyclic. 
Furthermore, 
$(A,\textbf{a})\to (P_\Sigma(B'),\textbf{a})$, and $(P_\Sigma(B'),\textbf{a})\subseteq (P_\Sigma(A'),\textbf{a})\to (A,\textbf{a})$, hence also $(P_\Sigma(B'),\textbf{a})\to (A,\textbf{a})$. Therefore, $(A,\textbf{a})$ is homomorphically equivalent to $(P_\Sigma(B'),\textbf{a})$.
\end{proof}

\thmaboxduality*

\begin{proof}
Let $D$ be a finite set of pointed instances such that
$(F,D)$ is a homomorphism duality, as given by Theorem~\ref{thm:cacyclic-duals}. Let $D'=\{(B,\textbf{b}')\mid (B,\textbf{b})\in D, (B',\iota)\in \Omega_{P_\Sigma}(B), \iota(\textbf{b}')=\textbf{b}\}$.
Note that $D'$ consists of instances that do not necessarily satisfy $\Sigma$. By the same arguments as in the proof of Theorem~\ref{thm:c-acyclic-dualities-tgd} we can show that $(F,D')$ is a 
homomorphism duality in the category $ABox_\Sigma[\bfS]$.
\end{proof}

\section{Proofs for Section~\ref{sec:applications}}

\begin{lemma}
\label{lem:char-dualities}
Let $\bfS$ be any schema and $\Sigma$ any FO theory over $\bfS$.
Let $q$ be any UCQ over $\bfS$, and let $E^+,E^-$ be finite
sets of pointed instances $(I,\textbf{a})$ with $I\models\Sigma$. Then the following are equivalent:
\begin{enumerate}
\item The collection of labeled examples 
$(E^+,E^-)$ uniquely characterizes $q$ w.r.t.~$\Sigma$
\item $q$ fits $(E^+,E^-)$ and  $(E^+,E^-)$ is a finite homomorphism duality w.r.t.~$\Sigma$.
\end{enumerate}
\end{lemma}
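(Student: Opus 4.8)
The plan is to recast everything in terms of homomorphisms via the Chandra--Merlin theorem. Writing $F_q$ for the (finite) set of canonical pointed instances of the CQ-disjuncts of a UCQ $q$, we have $\textbf{a}\in q(A)$ iff $(A,\textbf{a})\in F_q\upclosure$; in particular the acceptance region $F_q\upclosure$ is closed under homomorphisms out of its members. Under this translation, ``$q$ fits $(E^+,E^-)$'' says $E^+\subseteq F_q\upclosure$ and $E^-\cap F_q\upclosure=\emptyset$, and ``$(E^+,E^-)$ is a finite homomorphism duality w.r.t.~$\Sigma$'' says that for every pointed instance $(C,\textbf{c})$ with $C\models\Sigma$, exactly one of $(C,\textbf{c})\in(E^+)\upclosure$ and $(C,\textbf{c})\in(E^-)\downclosure$ holds. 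I will use these reformulations throughout.

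For the direction from (2) to (1), assume $q$ fits $(E^+,E^-)$ and the pair is a $\Sigma$-duality; I must show every UCQ $q'$ fitting $(E^+,E^-)$ is equivalent to $q$ w.r.t.~$\Sigma$. Fix any $(C,\textbf{c})$ with $C\models\Sigma$. By the duality it lies in exactly one of the two closures. If $(C,\textbf{c})\in(E^+)\upclosure$, then some $(A,\textbf{a})\in E^+$ maps to it; since $q$ and $q'$ both accept $(A,\textbf{a})$ and acceptance is upward closed, both accept $(C,\textbf{c})$. If instead $(C,\textbf{c})\in(E^-)\downclosure$, then $(C,\textbf{c})$ maps to some $(B,\textbf{b})\in E^-$; were either query to accept $(C,\textbf{c})$ it would, by composition, accept $(B,\textbf{b})$, contradicting fitting, so both reject $(C,\textbf{c})$. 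In either case $q$ and $q'$ agree on $(C,\textbf{c})$, hence agree on all $\Sigma$-instances.

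For the direction from (1) to (2), $q$ fits by definition, so only the duality needs proof, which I would argue by contraposition. First note that $(E^+)\upclosure$ and $(E^-)\downclosure$ are always disjoint when $q$ fits: a common element would give $(A,\textbf{a})\to(C,\textbf{c})\to(B,\textbf{b})$ with $(A,\textbf{a})\in E^+$ and $(B,\textbf{b})\in E^-$, forcing $q$ to accept $(B,\textbf{b})$. Hence a failure of duality must be a \emph{gap}: some $(C,\textbf{c})$ with $C\models\Sigma$ lying in neither closure. From such a gap I build a second fitting UCQ differing from $q$ on $C$, contradicting unique characterization. If $q$ rejects $(C,\textbf{c})$, take $q\vee q_{(C,\textbf{c})}$, adding the canonical CQ of $(C,\textbf{c})$: it still accepts $E^+$, and since $(C,\textbf{c})$ maps to no member of $E^-$ it still rejects $E^-$, yet now it accepts $(C,\textbf{c})$. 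If $q$ accepts $(C,\textbf{c})$, take $q_{\min}:=\bigvee_{(A,\textbf{a})\in E^+}q_{(A,\textbf{a})}$: it fits, rejecting $E^-$ exactly by the disjointness just established, and it rejects $(C,\textbf{c})$ because $(C,\textbf{c})\notin(E^+)\upclosure$. Either way two fitting UCQs disagree on the $\Sigma$-instance $C$, the desired contradiction.

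I expect the main obstacle to be the (1)$\Rightarrow$(2) direction, and specifically the gap construction: identifying that the only obstruction to duality is a gap (which relies on the disjointness observation) and then producing, in each of the two cases, an explicit competing UCQ and checking it genuinely fits. A minor point to dispatch is the degenerate case $E^+=\emptyset$, where $q_{\min}$ is the empty (always-false) UCQ; this should be admissible under the paper's conventions and still fits and rejects the gap.
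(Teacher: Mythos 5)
Your proof is correct and follows essentially the same route as the paper's: the (2)$\Rightarrow$(1) direction is the same duality-plus-monotonicity argument, and the (1)$\Rightarrow$(2) direction likewise reduces a failure of duality to a gap instance (overlap being excluded by fitting) and derives a contradiction by exhibiting two fitting but inequivalent UCQs built from canonical queries. The only cosmetic difference is that the paper uses the single pair $q_1=\bigvee_{(A,\textbf{a})\in E^+} q_{(A,\textbf{a})}$ and $q_2=q_1\vee q_{(C,\textbf{c})}$ in all cases, which avoids your case split on whether $q$ accepts the gap instance, and it shares the same implicit convention about the empty UCQ when $E^+=\emptyset$.
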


\begin{proof}
From 1 to 2, if $(E^+,E^-)$ uniquely characterizes $q$ w.r.t.~$\Sigma$, then, by
definition, $q$ fits $(E^+,E^-)$. Furthermore, it follows that
no pointed instance in $E^+$ admits a homomorphism to a pointed 
instance in $E^-$ (otherwise, it would follow by monotonicity of
UCQs that $q$ does not fit the negative examples).
Next, assume for the sake of
a contradiction that $(E^+,E^-)$ is not a homomorphism duality with respect to $K$.
Then there is a pointed instances $(I,\textbf{a})$ with $I\models\Sigma$ that 
neither belongs to $E^+\upclosure$, not to $E^-\downclosure$.
Let $q_1$ be the union of the canonical CQs of $E^+$ and
let $q_2$ be the union of the canonical CQs of $E^+\cup\{(I,\textbf{a})\}$.
Then $q_1$ and $q_2$ are not equivalent and both fit $(E^+,E^-)$, 
a contradiction.

From 2 to 1, let $q'$ be any UCQ that fits $(E^+,E^-)$. We must show
that $q'$ is equivalent to $q$ w.r.t.~$\Sigma$. Consider
any pointed instance $(I,\textbf{a})$ with $I\models\Sigma$. If
$\textbf{a}\in q(I)$ then $(I,\textbf{a})\in E^+\upclosure$,
therefore $\textbf{a}\in q'(I)$. If, on the other hand,
$\textbf{a}\not\in q(I)$, then $(I,\textbf{a})\not\in E^+\upclosure$, hence $(I,\textbf{a})\in E^-\downclosure$,
hence $\textbf{a}\not\in q'(I)$.
\end{proof}

\thmuniqchar*

\begin{proof}
Let $E^+$ be the set of all pointed instances $(P_\Sigma(I),\textbf{a})$, for $(I,\textbf{a})$ a (c-acyclic) canonical instances of CQs in $q$.
By Theorem~\ref{thm:c-acyclic-dualities-tgd} there is a finite set $E^-$ such that 
$(E^+,E^-)$ is a homomorphism duality w.r.t.~$\Sigma$. It follows by Lemma~\ref{lem:char-dualities} that 
$(E^+,E^-)$ uniquely characterizes $q$ w.r.t.~$\Sigma$.
\end{proof}

\thmuniqcharabox*

\begin{proof}
Let $E^+$ be the set of all (c-acyclic) canonical instances of CQs in $q$. By Theorem~\ref{thm:abox-duality}, there is a finite set $E^-$ such that 
$(E^+,E^-)$ is a finite duality in the category $ABox_\Sigma[\bfS]$.  We claim that 
$(E^+,E^-)$, viewed as a collection of labeled ABox-examples, uniquely characterizes $q$ w.r.t.~$\Sigma$.
The proof is similar as the one for Lemma~\ref{lem:char-dualities}:

It is clear from the construction that $q$ fits $E^+$. 
Take any $(I,\textbf{a})\in E^-$. Since
$(E^+,E^-)$ is a finite duality, 
there is no $(J,\textbf{b})\in E^+$ such that 
$(J,\textbf{b})\to_\Sigma (I,\textbf{a})$.
Equivalently, there is no 
$(J,\textbf{b})\in E^+$ such that
$(P_\Sigma(J),\textbf{b})\to (P_\Sigma(I),\textbf{a})$,
and hence there is no 
$(J,\textbf{b})\in E^+$ such that
$(J,\textbf{b})\to (P_\Sigma(I),\textbf{a})$
Since $E^+$ consists of the canonical CQs of $q$,
this means that $\textbf{a}\not\in q(P_\Sigma(I))$. 
In other words, $q$ fits the negative examples $E^-$. 
Finally, let $q'$ be any UCQ that fits $(E^+,E^-)$,
and let $(I,\textbf{a})$ be any instance satisfying $\Sigma$. 
If $\textbf{a}\in q(I)$, then it follows that 
$(J,\textbf{b})\to_\Sigma (I,\textbf{a})$ 
for $(J,\textbf{b})\in E^+$ the canonical query of the CQ
in question, from which it follows (since $q'$ fits $E^+$) that $\textbf{a}\in q'(P(I))$ and hence (since
$P(I)\leftrightarrow_{adom(I)} I$), $\textbf{a}\in q'(I)$.
If, on the other hand, 
$\textbf{a}\not\in q(I)$, then it follows from the duality
that $(I,\textbf{a})\to_\Sigma (J,\textbf{b})$ for some
$(J,\textbf{b})\in E^-$. From this, it then follows
(since $q'$ fits $E^-$) that $\textbf{a}\not\in q'(P(I))$,
and hence (since
$P(I)\leftrightarrow_{adom(I)} I$), 
$\textbf{a}\not\in q'(I)$.
\end{proof}

\section{Expressive completeness of TAM Datalog (Proof of Theorem~\ref{thm:mso-to-tam})}
\label{app:msototam}

\newcommand{\vertex}{\bullet}
\newcommand{\edge}{\blacktriangledown}

Fix a schema $\bfS$ and let $\bfX=\{X_1,\dots,X_n\}$ (which we can consider to be schema consisting of unary predicates). Following \cite{DalmauKO} we consider the set of formal ``tree-terms'' defined inductively from the following operators. 

\begin{itemize}
\item for every $S\subseteq \bfX$, $\vertex_S$ is a tree-term.
\item for every $R\in\bfS$,  for all tree-terms $t_1,\dots,t_k$ with $k=arity(R)$, and for each $i\in [k]$, $\edge_i^R(t_1,\dots,t_k)$ is a tree-term.
\end{itemize}

We define for each tree-term $t$
an associated pointed tree $(T(t),r(t))$ inductively as follows.

\begin{itemize}
\item If $t=\vertex_S$ then $T(t)$ is the tree containing only one node $v$ (hence $r(t)=v$) and facts $X_i(v)$ for every $X_i\in S$.
\item If $t=\edge_i^R(t_1,\dots,t_k)$ then $T(t)$ is the tree obtained by taking the disjoint union of $T(t_1),\dots,T(t_k)$ and adding fact $f=R(r(t_1),\dots,r(t_k))$. Furthermore, $r(t) = r(t_i)$.
\end{itemize}

\begin{lemma}\label{lem:Vterm-trees}
For every finite connected acyclic pointed $(\bfS\cup\bfX)$-instance $(I,a)$, there is 
a tree-term $t$ such that $(T(t),r(t))$ is
isomorphic to $(I,a)$.
\end{lemma}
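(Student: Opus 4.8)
The plan is to prove the statement by structural induction on the incidence tree of $(I,a)$, exploiting that connectedness together with acyclicity means the incidence graph is literally a tree. Concretely, I would induct on the number of $\bfS$-facts of $I$. The case split is governed by whether or not the distinguished element $a$ occurs in an $\bfS$-fact.

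For the base case, suppose $a$ occurs in no $\bfS$-fact. Then, since $(I,a)$ is connected, its active domain must be exactly $\{a\}$: any other element would have to be reachable from $a$ through a path in the incidence graph, but the only facts incident to $a$ are the unary $\bfX$-facts $X_i(a)$, which are leaves and lead nowhere. Hence $I$ consists precisely of such unary facts, and setting $S=\{X_i\in\bfX\mid X_i(a)\in I\}$ and $t=\vertex_S$ gives $(T(t),r(t))\cong (I,a)$ immediately from the definition.

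For the inductive step, suppose $a$ occurs in at least one $\bfS$-fact, and fix such a fact $f=R(c_1,\dots,c_k)$ with $k=arity(R)$. By acyclicity, $a$ cannot occur twice in $f$, so there is a unique index $i$ with $c_i=a$, the elements $c_1,\dots,c_k$ are pairwise distinct, and $f$ has degree exactly $k$ in the incidence graph. Deleting the fact-node $f$ therefore splits the incidence tree into exactly $k$ connected components $C_1,\dots,C_k$ with $c_j\in C_j$. I would let $I_j$ be the set of facts whose fact-node lies in $C_j$, pointed at $c_j$. Each $(I_j,c_j)$ is again connected and acyclic (a subtree of a tree) and has strictly fewer $\bfS$-facts than $I$, so the induction hypothesis applies, yielding tree-terms $t_1,\dots,t_k$ with $(T(t_j),r(t_j))\cong (I_j,c_j)$. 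I would then set $t=\edge_i^R(t_1,\dots,t_k)$ and verify $(T(t),r(t))\cong (I,a)$: by definition $\edge_i^R$ forms the disjoint union of the $T(t_j)$, adds the single fact $R(r(t_1),\dots,r(t_k))$, and keeps $r(t_i)$ as the root, which is precisely the recovery of $I=\{f\}\cup I_1\cup\dots\cup I_k$ from its pieces glued along $f$ and pointed at $a=c_i$.

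The anticipated main obstacle is not any single hard computation but making this decomposition airtight, and in particular checking that the pieces $I_1,\dots,I_k$ are pairwise \emph{element-disjoint}. This is exactly where acyclicity is indispensable: because the incidence graph is a tree, deleting the node $f$ places every remaining element-node in exactly one component, so $adom(I_j)$ are pairwise disjoint and meet only through the shared fact $f$ whose arguments $c_j$ are the respective roots. Consequently the disjoint-union semantics of $\edge_i^R$ matches the true overlap structure of $I$; without acyclicity two components could share an element and the gluing performed by $\edge_i^R$ would fail to reproduce $I$.
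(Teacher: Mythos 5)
Your proof is correct and follows essentially the same route as the paper's: induction on the number of $\bfS$-facts, with the base case $\vertex_S$ for a single element carrying only $\bfX$-facts, and the inductive step removing an $\bfS$-fact incident to $a$ and recursing on the resulting connected components via $\edge_i^R$. Your explicit verification that acyclicity forces the arguments of the removed fact to be pairwise distinct and the components to be element-disjoint is a point the paper leaves as ``easy to see,'' but it is the same argument.
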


\begin{proof}
The proof is by induction on the size of instance $I$, 
as counted by the number of $\bfS$-facts.
The base case of the induction is where $I$ does not
contain any $\bfS$-facts. In this case, it follows
from connectedness that $I$ must be a
single-element structure containing only some
$\bfX$-facts. In this case, the statement clearly holds:
it suffices to take $t$ to be the term $\vertex_S$ where
$S$ is the set of all $X_i\in \bfX$ appearing in $I$. 

If $I$ contains $n$ $\bfS$-facts, with $n>0$, then, by connectedness,
$a$ must appear in at least one $\bfS$-fact, that is,
$I$ contains a fact of the form
$R(a_1, \ldots, a_n)$ where, say, $a_i=a$. 
Let $I'$ be the sub-instance of $I$ where the fact $R(a_1, \ldots, a_n)$ is removed. For each 
$j\leq n$, let $I_j$ be the connected component of $I'$
containing $a_j$. By induction, there is a term 
$t_j$ such that $(T(t_i),r(t_i))$ is isomorphic to $(I_j,a_j)$.
Let $t=\edge^R_i(t_1, \ldots, t_n)$.
Then it is easy to see that $(T(t),r(t))$ is isomorphic
to $(I,a)$.
\end{proof}

An automaton, for present purposes, is a tuple
$(\bfS,\bfX,Q,F,\delta)$
consisting of:
\begin{itemize}
\item schemas $\bfS$, $\bfX$.
\item A finite set $Q$ of states, with a distinguished subset $F\subseteq Q$
\item For every operator $o$ of the form $\vertex_S$ or $\edge_i^R$, of arity, say, $r$ (where we view $\vertex_S$
as a zero-ary operation),
a transition relation $\delta_o\subseteq Q^r\times Q$ 
\end{itemize}

Acceptation is as one would expect. A tree-term $t$ is accepted if we can associate a state $q_{t'}$ to each one of its subterms $t'$ such that $q_t\in F$ and 
the mapping $t'\mapsto q_{t'}$ respects the transition relation (meaning, that if $t'=o(t'_1,\dots,t'_r)$ then $(q_{t'_1},\dots,q_{t'_r},q_{t'})\in\delta_o$.

Given a MSO formula $\phi(x_1,\dots,x_n)$ with schema $\bfS$ we shall consider the following associated formula $\phi'$ defined to be the 
MSO-sentence with schema $\bfS\cup\bfX$ defined as
\begin{equation}\label{eq:addx}\exists x_1,\dots,x_n (\phi(x_1,\dots,x_n)\land\bigwedge_{i=1\ldots n} X_i(x_i))\end{equation}

\begin{lemma}
If $\phi$ is monotone then $\phi'$ is monotone as well.
\end{lemma}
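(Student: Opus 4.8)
The plan is to unfold the definition of monotonicity and verify it directly by transporting the existential witnesses along the homomorphism. Recall that $\phi'$ is a sentence, so monotonicity of $\phi'$ amounts to the statement that whenever $h \colon I \to J$ is a homomorphism of $(\bfS \cup \bfX)$-instances with $I \models \phi'$, then also $J \models \phi'$.

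First I would fix such a homomorphism $h \colon I \to J$ together with the assumption $I \models \phi'$. Unfolding the definition of $\phi'$ in~\eqref{eq:addx}, there are elements $a_1, \ldots, a_n \in adom(I)$ with $I \models \phi(a_1, \ldots, a_n)$ and with $X_i(a_i)$ a fact of $I$ for each $i$. The natural candidate witnesses for $\phi'$ in $J$ are the images $h(a_1), \ldots, h(a_n)$.

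The key step is to confirm that these images indeed witness $\phi'$ in $J$. Since $\phi$ only mentions relations from $\bfS$, its truth depends solely on the $\bfS$-reduct of an instance, and $h$ restricts to a homomorphism between the $\bfS$-reducts of $I$ and $J$; applying the assumed monotonicity of $\phi$ then gives $J \models \phi(h(a_1), \ldots, h(a_n))$. Moreover, as $h$ is a homomorphism of the full $(\bfS \cup \bfX)$-instances and $X_i(a_i)$ is a fact of $I$, the fact $X_i(h(a_i))$ holds in $J$ for each $i$. Combining these, the tuple $h(a_1), \ldots, h(a_n)$ satisfies $\phi(x_1, \ldots, x_n) \land \bigwedge_{i=1\ldots n} X_i(x_i)$ in $J$, and hence $J \models \phi'$.

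There is no serious obstacle here: the only point deserving (entirely routine) care is the interplay between regarding $\phi$ as a query over $\bfS$ and evaluating it on $(\bfS \cup \bfX)$-instances, which is resolved by passing to $\bfS$-reducts and observing that a homomorphism of the richer structures restricts to a homomorphism of the reducts.
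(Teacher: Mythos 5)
Your proof is correct and follows essentially the same argument as the paper: transport the witnesses $a_1,\ldots,a_n$ along the homomorphism, observe that $\phi$ depends only on the $\bfS$-reduct so its monotonicity applies, and note that the homomorphism preserves the $X_i$-facts. No further comment is needed.
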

\begin{proof}
Assume that $h:A\rightarrow B$, where $A$ and $B$ are $\bfS\cup\bfX$-instances and assume that $A$ satisfies $\phi'$. Let $x_i\mapsto a_i$ be the instantiation witnessing it. Since the predicates of $\bfX$ do not appear in $\phi$ it follows that the $\bfS$-reduct of $A$ satisfies $\phi(a_1,\dots,a_n)$. Since $\phi$ is monotone it follows that the $\bfS$-reduct of $B$ satisfies $\phi(h(a_1),\dots,h(a_n))$. 
Since $h(a_i)\in B^{X_i}$ for each $1\leq i\leq n$, it follows that $B$ satisfies $\phi'$. 
\end{proof}

\begin{theorem}
\label{th:automaton}
 Let $\phi'$ be a MSO-sentence with schema $\bfS\cup\bfX$. Then there is a finite automaton that accepts the set of all tree-terms $t$
such that $T(t)$ satisfies $\phi'$.
\end{theorem}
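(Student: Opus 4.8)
The plan is to realize the automaton by the classical composition method for monadic second-order logic (the Feferman--Vaught/Shelah theorem for disjoint sums), using quantifier-rank-$q$ \textsc{mso} types as the automaton's states, where $q$ is the quantifier rank of $\phi'$. For a pointed $(\bfS\cup\bfX)$-instance $(A,a)$ with a single distinguished element, let $\mathrm{tp}_q(A,a)$ denote its \emph{rank-$q$ type}, i.e.\ the set of all \textsc{mso} formulas $\psi(x)$ of quantifier rank at most $q$ with $A\models\psi(a)$. For a fixed schema and fixed $q$ there are only finitely many such types, so I would take $Q$ to be the finite set of all rank-$q$ $1$-types, and take $F$ to be the set of types that force $\phi'$. (This is well-defined: a rank-$q$ $1$-type determines the truth of every rank-$q$ sentence, in particular of $\phi'$, by simply ignoring the distinguished element.)

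The transitions are set up so that the (unique) run assigns to each subterm $t'$ exactly the type $\mathrm{tp}_q(T(t'),r(t'))$. For the nullary operators this is immediate: $\delta_{\vertex_S}$ is the singleton consisting of the type of the one-node instance whose root carries precisely the labels in $S$. The interesting case is $\edge_i^R$, where I would use the identity
\[
  T(\edge_i^R(t_1,\dots,t_k)) \;=\; \bigl(T(t_1)\sqcup\cdots\sqcup T(t_k)\bigr) + R(r(t_1),\dots,r(t_k)),
\]
with new root $r(t_i)$. Accordingly I would define $\delta_{\edge_i^R}$ as the composition of three type-transforming operations: (i) the Feferman--Vaught composition for disjoint union, which computes the rank-$q$ type of the $k$-pointed instance $\bigsqcup_j (T(t_j),r(t_j))$ from the individual input $1$-types $\tau_j=\mathrm{tp}_q(T(t_j),r(t_j))$; (ii) the effect of adding the single fact $R$ on the distinguished roots, which is a quantifier-free interpretation (the new $R$-relation is definable from the old one together with the equations $x_j=c_j$ to the marked roots, and substituting a quantifier-free formula for an atom does not raise quantifier rank); and (iii) forgetting every distinguished point except the $i$-th, which sends a $k$-pointed rank-$q$ type to the corresponding $1$-type and likewise does not raise the rank. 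All three steps are rank-$q$-preserving and are deterministic functions of the input types, so $\delta_{\edge_i^R}$ is well-defined.

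Granting this, a straightforward induction on the structure of $t$ shows that any run assigns $\mathrm{tp}_q(T(t),r(t))$ to $t$, whence $t$ is accepted iff this type lies in $F$ iff $T(t)\models\phi'$, which is the claim. (Note that no monotonicity of $\phi'$ is needed here.)

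The main obstacle is step (i): establishing that the rank-$q$ \textsc{mso} type of a disjoint union of pointed instances is a function of the rank-$q$ types of its summands. This is exactly the Feferman--Vaught/Shelah composition theorem for disjoint sums, which I would cite as a known result; alternatively it can be proved directly by an Ehrenfeucht--Fra\"iss\'e argument, in which Duplicator plays the $q$-round \textsc{mso} game on the union component-wise, using the games witnessing the equality of the component types. The remaining bookkeeping---verifying that adding the root fact and projecting away marked points are rank-preserving and effective---is routine. I would also remark that the theorem can be viewed as a special case of Courcelle's theorem, since each $T(t)$ has treewidth bounded in terms of the maximum arity of $\bfS$ and the term $t$ is essentially a tree decomposition of $T(t)$; the composition proof above is simply the direct, self-contained version tailored to the tree-term automaton model defined here.
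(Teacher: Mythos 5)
Your proof is correct, but it takes a genuinely different route from the paper's. The paper argues syntactically: it normalizes $\phi'$ so that all quantification is second-order, builds an automaton directly for each atomic formula ($R(X_1,\dots,X_n)$, $X_1\subseteq X_2$, $Singleton(X)$), and then closes under the logical connectives via the standard automata operations --- union, complementation by the subset construction, and projection for existential second-order quantifiers. You instead argue semantically via the composition method: states are rank-$q$ \textsc{mso} $1$-types, and the transition for $\edge_i^R$ is assembled from the Feferman--Vaught theorem for disjoint sums, a quantifier-free interpretation accounting for the added root fact, and projection of marked points. Both are classical proofs of the Thatcher--Wright/Doner-style translation and both are sound here; your decomposition of $T(\edge_i^R(t_1,\dots,t_k))$ as a disjoint union of the pointed $T(t_j)$ plus one fact matches the paper's definition exactly, and your observation that the resulting automaton is deterministic (a special case of the paper's transition relations) is fine. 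What each buys: the paper's inductive construction is compositional in the \emph{formula} and reuses off-the-shelf closure properties, which is why the paper can afford to ``omit the correctness argument''; your type-based construction is compositional in the \emph{term}, yields determinism for free, and makes the correctness invariant (the run computes $\mathrm{tp}_q(T(t'),r(t'))$) completely explicit, at the cost of having to invoke or reprove the composition theorem. Your closing remark that no monotonicity of $\phi'$ is needed is also consistent with the paper, which likewise uses monotonicity only elsewhere (in the proof of Theorem~\ref{thm:mso-to-tam}, not of Theorem~\ref{th:automaton}).
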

\begin{proof}
The proof is entirely standard. For the sake of completeness, we spell out the construction, but
we will omit the correctness argument.
As is customary in the literature on automata theory and MSO, 
we will simplify things by assuming a syntactic normal form 
for MSO-formulas, in which all quantification is second-order.
More precisely, we consider formulas built up
from atomic formulas of the form 
\begin{itemize}
\item $R(X_1, \ldots, X_n)$, treated as a shorthand 
for $\exists x_1, \ldots, x_n(R(x_1, \ldots, x_n)
\land X_1(x_1)\land \cdots\land X_n(x_n))$, 
\item $X_1 \subseteq X_2$, treated as shorthand for 
   $\forall y(X_1(y)\to X_2(y))$, and
\item $Singleton(X)$, treated as shorthand for $\exists x(X(x)\land \forall y(X(y)\to y=x))$
\end{itemize}
using disjunction, negation, and existential second-order quantification. It is easy to construction an automaton
for each of the above atomic formulas. The connectives
are handled by the following standard closure operations
on automata:

\begin{itemize}
\item The union of two automata $(\bfS,\bfX,Q^i,F^i,\delta^i)$ $i=1,2$ (assume that $Q^1$ and $Q^2$ are disjoint) is the automaton $(\bfS,\bfX,Q^1\cup Q^2,F^1\cup F^2,\delta)$ where $\delta_o=\delta^1_o\cup\delta^2_o$.
\item The complement of an automaton $(\bfS,\bfX,Q,F,\delta)$ is 
the (deterministic) automaton
$(\bfS,\bfX,2^Q,F',\delta')$, where
$F'=\{Q'\subseteq Q\mid F\cap Q'=\emptyset\}$ and where
$(Q_1, \ldots, Q_r, Q_{r+1})\in\delta'_o$ iff
$Q_{r+1}=\{q\in Q\mid (q_1, \ldots, q_r,q)\in \delta_o \text{ for some } q_1\in Q_1, \ldots, q_r\in Q_r\}$.
\item The projection of $(\bfS,\bfX,Q,F,\delta)$ to $\bfX'\subseteq\bfX$ is defined to be $(\bfS,\bfX',Q,F,\delta')$, where $\delta'$ is obtained by modifying
$\delta$ in the following way. For every $S'\subseteq \bfX'$, $\delta'_{\vertex_{S'}}=\bigcup_{S\cap \bfX'=S'} \delta_{\vertex_S}$.
\end{itemize}
\end{proof}

\begin{theorem}
\label{th:datalog}
Let $A=(\bfS,\bfX,Q,F,\delta)$ be an automaton. There is a connected Boolean monadic tree-shaped Datalog program $P$ with $\bfS^P_{in}=\bfS\cup\bfX$ such that for all $(\bfS\cup\bfX)$-instances $I$,
the following are equivalent:
\begin{enumerate}
\item $Ans()\in P(I)$.
\item There is some tree-term $t$ accepted by $A$ such that $T(t)\rightarrow I$.
\end{enumerate}

\end{theorem}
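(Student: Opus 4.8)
The plan is to simulate the bottom-up tree automaton $A$ by a Datalog program whose states become unary IDB predicates. For every state $q\in Q$ I would introduce a unary auxiliary predicate $P_q$, with intended meaning: $P_q(a)\in P(I)$ iff there is a tree-term $t$ to whose root $A$ assigns the state $q$, together with a homomorphism $h\colon T(t)\to I$ satisfying $h(r(t))=a$. Once this meaning is secured, the output is read off from the accepting states by adding, for each $q\in F$, the rule $Ans()\colondash P_q(x)$. Since $adom(I)$ is finite and there are finitely many predicates $P_q$, the Datalog fixpoint terminates even though infinitely many tree-terms are being accounted for.

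First I would translate the transition relation into rules. For the nullary operators, whenever $((),q)\in\delta_{\vertex_S}$ I add the rule $P_q(x)\colondash\bigwedge_{X_i\in S}X_i(x)$; indeed a homomorphism from the one-node tree $T(\vertex_S)$ landing on $a$ exists exactly when $a$ carries all labels in $S$. For the recursive operators, whenever $(q_1,\dots,q_k,q)\in\delta_{\edge_i^R}$ (with $k=arity(R)$) I add the rule
\[ P_q(y_i)\colondash R(y_1,\dots,y_k),\,P_{q_1}(y_1),\dots,P_{q_k}(y_k), \]
in which $y_1,\dots,y_k$ are pairwise distinct. This mirrors the definition of $T(\edge_i^R(t_1,\dots,t_k))$ as the disjoint union of the $T(t_j)$ together with the single linking fact $R(r(t_1),\dots,r(t_k))$, the root being $r(t_i)$. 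By inspection every IDB predicate ($P_q$, and the nullary $Ans$) is monadic; the incidence graph of each rule body is a star centred at the $R$-atom (the $y_j$ being distinct), hence acyclic; and each rule is connected. Thus $P$ is a connected, Boolean, monadic, tree-shaped Datalog program, as required.

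Next I would establish the claimed meaning of $P_q$ by two inductions. For soundness I would induct on the length of the chase derivation of $P_q(a)$ to build a witnessing tree-term: the base step uses a $\vertex_S$-rule, and the inductive step glues the subtree witnesses supplied by the atoms $P_{q_j}(a_j)$ into $\edge_i^R(t_1,\dots,t_k)$, where disjointness of the subtrees makes the union of the subtree homomorphisms well defined, and the premise $R(a_1,\dots,a_k)\in I$ guarantees the linking fact is preserved. For completeness I would induct on the structure of $t$, running the same correspondence backwards: restricting $h$ to each subtree yields homomorphisms witnessing $P_{q_j}(h(r(t_j)))$, after which the matching $\edge_i^R$-rule fires. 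Combining both directions with the acceptance rules yields $Ans()\in P(I)$ iff some accepting tree-term maps homomorphically into $I$.

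The main obstacle I anticipate is the degenerate base term $\vertex_\emptyset$: the naive rule $P_q(x)\colondash$ is unsafe, yet correctness demands that $P_q(a)$ hold for every $a\in adom(I)$ whenever $((),q)\in\delta_{\vertex_\emptyset}$, since an unlabelled one-node tree maps to any active-domain element. I would repair this exactly as in the proof of Theorem~\ref{thm:monadic-reduction}, replacing the empty body by all safe bodies of the form $T(u_1,\dots,u_{j-1},x,u_{j+1},\dots)$ ranging over relations $T\in\bfS\cup\bfX$ and positions $j$, with fresh distinct variables elsewhere; these rules hold of precisely the active-domain elements, are single-atom (hence connected and tree-shaped), and keep the program monadic. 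The remaining points needing care—that the $R$-atom uses pairwise distinct variables so the body stays acyclic, and that this matches the disjointness of subtrees in the term semantics—are routine.
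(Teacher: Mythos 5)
Your proposal is correct and follows essentially the same route as the paper: one unary auxiliary predicate per state, one rule per transition, an $Ans()$ rule per accepting state, and the two inductions (on derivation length and on term structure) for the key claim. You are in fact slightly more careful than the paper's write-up, which leaves the safety issue for $\vertex_\emptyset$ implicit; your repair is the same device the paper uses elsewhere (e.g.\ in the proof of Theorem~\ref{thm:monadic-reduction}).
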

\begin{proof}
For every state $q$, $\bfS^P_{aux}$ has a unary symbol $E_q$. Let us describe the rules in $P$:
\begin{itemize}
\item For every $o=\vertex_S$ and every $q\in\delta_o$, $\Sigma^P$ contains the rule with head $E_{q}(x)$ and whose body contains $X_i(x)$ for every $X_i\in S$.
\item For every $o=\edge_i^R$ and every $(q_1,\dots,q_k,q)\in\delta_o$, $\Sigma^P$ contains the rule
\[E_{q}(x_i) \colondash R(x_1,\dots,x_k), E_{q_1}(x_1), \ldots, E_{q_k}(x_k)\]
\item For every $q\in F$, we introduce the rule
\[ Ans() \colondash E_q(x) \]
\end{itemize}

Let $I$ be any $(\bfS\cup\bfX)$-instance. The correctness of the construction follows from the following claim:
\begin{claim*}
The following are equivalent for each $a\in adom(I)$ and $q\in Q$:
\begin{enumerate}
    \item $E_q(a)\in P(I)$
    \item There exists some tree-term $t$ such that (i) $(T(t),r(t))\rightarrow (I,a)$ and (ii) there is a run of $A$ on input $t$ that finishes at state $q$
\end{enumerate} 
\end{claim*}
We omit the proof as it is fairly standard. The $(1)\rightarrow(2)$ direction is proved by induction on the derivation length and the $(2)\rightarrow (1)$ direction is by structural induction on $t$. 
\end{proof}

\thmmsototam*

\begin{proof}
From 1 to 2 is immediate.
From 2 to 3 follows immediately from Lemma~\ref{lem:tree-unfoldings} and 
Lemma~\ref{lem:unfoldings}.
In the remainder, we prove $(3)\rightarrow (1)$. 

Assume that $\phi(x_1, \ldots, x_n)$ is is an MSO formula over $\bfS_{in}$ that satisfies (3). Let $R$ be a fresh
binary relation symbol not in $\bfS_{in}$. 
In particular, $R$ that
does not occur in $\phi$. Let $\bfS = \bfS_{in}\cup\{R\}$.
For the purpose of the next steps of the proof, we will
view $\phi$ as a formula over $\bfS$.
Let $\phi'$ be the  MSO sentence over $\bfS\cup\bfX$ as defined as in \ref{eq:addx} (where $\bfX=\{X_1, \ldots, X_n\}$). Let $A$ be the automaton corresponding to $\phi'$ as in Theorem \ref{th:automaton}, let $P$ as the Boolean connected tree-shaped monadic Datalog program as in Theorem \ref{th:datalog} and let $P'$ by the almost-monadic program corresponding to $P$ as in Theorem~\ref{thm:monadic-reduction}. Inspection of the proof of Theorem~\ref{thm:monadic-reduction} shows that tree-shapedness is preserved, and hence $P'$ is a connected
TAM Datalog program. 

\begin{myclaim}\label{claim1}
$P'$ is equivalent to $\phi$ over connected $\bfS$-instances.
\end{myclaim}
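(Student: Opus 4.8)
The plan is to unwind the three translations that produced $P'$ and reduce the claim to the assumed tree-determinedness of $\phi$. Recall that $P'$ has a single $n$-ary output relation, which I will denote $R^\star$, and that $\bfS^{P'}_{in}=\bfS$; ``$P'$ is equivalent to $\phi$ over connected $\bfS$-instances'' thus means that for every connected $\bfS$-instance $I$ and tuple $\textbf{a}=a_1\dots a_n$ over $adom(I)$ one has $R^\star(\textbf{a})\in P'(I)$ iff $I\models\phi(\textbf{a})$. Setting $I^+ = I\cup\{X_1(a_1),\dots,X_n(a_n)\}$, I would first chain the available equivalences: Theorem~\ref{thm:monadic-reduction} gives $R^\star(\textbf{a})\in P'(I)$ iff $Ans()\in P(I^+)$; Theorem~\ref{th:datalog} gives the latter iff some tree-term $t$ accepted by $A$ satisfies $T(t)\to I^+$; and Theorem~\ref{th:automaton} identifies ``accepted by $A$'' with $T(t)\models\phi'$. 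So the claim reduces to showing that $I\models\phi(\textbf{a})$ iff there is a tree-term $t$ with $T(t)\models\phi'$ and $T(t)\to I^+$.

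For the ``only if'' of that reduced statement (equivalently, $R^\star(\textbf{a})\in P'(I)\Rightarrow I\models\phi(\textbf{a})$), I would take such a $t$, extract witnesses $b_1,\dots,b_n$ with $T(t)\models\phi(b_1,\dots,b_n)$ and $X_i(b_i)\in T(t)$, together with a homomorphism $h:T(t)\to I^+$. Since the only $X_i$-fact of $I^+$ is $X_i(a_i)$, we must have $h(b_i)=a_i$. As $\phi$ mentions neither $R$ nor the $X_i$, the $\bfS_{in}$-reduct $T(t)|_{\bfS_{in}}$ is acyclic, still satisfies $\phi(b_1,\dots,b_n)$, and maps into $(I,\textbf{a})$; tree-determinedness then delivers $I\models\phi(\textbf{a})$. (This direction also follows directly from monotonicity of $\phi'$.)

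The ``if'' direction is where I expect the real work, and where both the connectedness of $I$ and the fresh relation $R$ are used. From $I\models\phi(\textbf{a})$ and tree-determinedness I would obtain an acyclic pointed $\bfS_{in}$-instance $(J,\textbf{b})$ with $J\models\phi(\textbf{b})$ and $g:(J,\textbf{b})\to(I,\textbf{a})$, and set $J^+ = J\cup\{X_i(b_i)\mid i\}$, so that $J^+\models\phi'$ and $g$ extends to $J^+\to I^+$. The obstruction is that $J^+$ may be disconnected, whereas every $T(t)$ is connected. To fix this I would exploit that $I$ is connected: for each pair of $J^+$-components to be joined, I graft onto $J^+$ a fresh copy (with new internal elements, keeping the two chosen endpoints) of an incidence walk of $I$ between the $g$-images of those endpoints. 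Such walks exist by connectedness of $I$ and may traverse $R$-facts, which is harmless because $\phi'$ does not mention $R$; using $m-1$ such bridges for $m$ components, with fresh internal elements, keeps the instance acyclic. The resulting $(\bfS\cup\bfX)$-instance $T$ is connected and acyclic, still satisfies $\phi'$ (its witnesses $\textbf{b}$ are untouched and $\phi'$ is monotone), and admits a homomorphism $T\to I^+$ extending $g$. By Lemma~\ref{lem:Vterm-trees}, $T\cong T(t)$ for some tree-term $t$, which is therefore accepted by $A$ and satisfies $T(t)\to I^+$; this yields $R^\star(\textbf{a})\in P'(I)$.

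In short, the main obstacle is reconciling the possibly-disconnected acyclic witness produced by tree-determinedness with the inherently connected witnesses recognized by the tree-automaton, and the crux is to re-route the missing links through genuine incidence walks of the connected instance $I$ (availing of the fresh relation $R$), so that monotonicity of $\phi'$ certifies that the enlarged, connected witness still satisfies $\phi'$.
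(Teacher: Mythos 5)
Your proposal is correct and follows the same skeleton as the paper's proof: chain the equivalences of Theorem~\ref{thm:monadic-reduction}, Theorem~\ref{th:datalog} and Theorem~\ref{th:automaton}, dispatch one direction by monotonicity of $\phi'$ (your detour through the ``if'' half of tree-determinedness is an equivalent alternative), and in the other direction convert the acyclic witness supplied by tree-determinedness into a tree-term via Lemma~\ref{lem:Vterm-trees}. The one place where you genuinely go beyond the paper is the bridging step: the paper's proof applies Lemma~\ref{lem:Vterm-trees} directly to $\widehat{J}$, even though that lemma requires a \emph{connected} acyclic instance and the definition of tree-determinedness explicitly allows $J$ to be disconnected, so the paper is silently eliding exactly the issue you isolate. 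Your repair --- grafting fresh copies of incidence walks of the connected instance $I$ between components of $J^+$, and invoking monotonicity of $\phi'$ to see that the enlarged connected acyclic instance still satisfies $\phi'$ --- is sound and works for an arbitrary connected $\bfS$-instance, which is what Claim~\ref{claim1} as stated demands. (In the paper's actual use of the claim, $I$ carries all possible $R$-facts over $adom(I)$, so one could connect the components more cheaply by inserting single $R$-facts between them; but your version covers the claim as literally stated.) The only detail worth tightening in your bridge construction is that when copying a fact of the walk you should keep exactly one occurrence of each of the two walk-elements and fill every other argument position with a fresh element, so that no repeated variable inside a copied fact reintroduces a cycle in the incidence graph.
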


\begin{claimproof}
Assume that $P'$ on a connected $\bfS$-instance $I$ produces $Ans(a_1,\dots,a_n)$. Let $\widehat{I}$ be the connected 
$(\bfS\cup\bfX)$-instance extending $I$  with 
$Q_1(a_1), \dots, Q_n(a_n)$.
Then, it follows that $P(\widehat{I})=true$. Then,
there is some tree-term $t$ accepted by $A$ such that $T(t)\rightarrow \widehat{I}$. It follows that $T(t)$ satisfies $\phi'$. Consequently, 
we have that $\widehat{I}$ satisfies $\phi'$. It follows that $I$ satisfies $\phi(a_1,\dots,a_n)$. Note that for this direction we do not use the full condition of tree-determinacy, only monotonicity.

Conversely, assume that $I$ satisfies $\phi(a_1,\dots,a_n)$. Then by (3) 
\[
    (J,b_1,\dots,b_n)\rightarrow (I,a_1,\dots,a_n)
\]
for some $J$ and $b_1,\dots,b_n$ such that $J$ satisfies $\phi(b_1,\dots,b_n)$. 
Let $\widehat{J}$ be the 
$(\bfS\cup\bfX)$-instance extending $J$  with 
$Q_1(b_1), \dots, Q_n(b_n)$.
Let $t$ be a tree-term such that $T(t)$ is isomorphic to $\widehat{J}$, as given
by Lemma~\ref{lem:Vterm-trees}.
It follows that $A$ accepts $t$. Consequently $P(\widehat{I})=true$. 
It follows that $Ans(a_1,\dots,a_n)$ belongs to $P'(I)$.
\end{claimproof}

Finally, let $P''$ be the TAM Datalog program obtained
from $P'$ by dropping all occurrences of the relation
$R$ (which does not occur in $\phi$) from the body 
of every rule of $P'$. The operation of dropping all
occurrences of $R$ might make some rules unsafe. That is, one or more variable $x$ occurring in the head of a rule might not occur
in the body anymore. This can, however, be easily fixed by
extending the rule body with an $\bfS_{in}$-atom 
containing $x$ and with fresh variables in all other positions of the atom (there are multiple ways to do this, and we add all safe rules that can be 
obtained in this way).
Then it follows from Claim~\ref{claim1} 
that $P''$ is equivalent to $\phi$: 
take any $\bfS_{in}$-instance $I$ and let $I'$ be 
$\bfS$-instance extending $I$ with
all possible $R$-facts over $adom(I)$. Then 
$I\models\phi(a_1, \ldots, a_n)$ iff
$I'\models\phi(a_1, \ldots, a_n)$ iff 
$Ans(a_1, \ldots, a_n)\in P'(I')$ iff
$Ans(a_1, \ldots, a_n)\in P''(I)$.
\end{proof}

\section{Pultr functors as a special case of \texorpdfstring{\edatalog}{existsDatalog}}
\label{app:pultr}

In this appendix, we show that Pultr functors can be
cast as a special case of \edatalog programs. For 
ease of exposition, we follow~\cite{Foniok2015:functors}
in considering only digraph functors. However,
the same argument below extends also to arbitrary
relational structures, as in~\cite{DalmauKO}.

Recall
that a $k$-ary Pultr functor (for digraphs) is specified by a pair $F=(\phi_V,\phi_E)$ where $\phi_V$ is a conjunctive query of arity $k$ and $\phi_E$ is a conjunctive query of arity $2k$.
Both CQs are assumed to be over a signature $\bfS=\{V,E\}$,
where $V$ is unary and $E$ is binary. 
Given a digraph $G=(V,E)$, $F(G)$ is the digraph
whose vertices are all $k$-tuples $\textbf{a}\in adom(V)^k$
satisfying $\phi_V$, and whose edges are all pairs $(\textbf{a},\textbf{b})$ satisfying $\phi_E$. 

\begin{example}
The arc-graph functor is defined to be $F=(\phi_V,\phi_E)$, where $\phi_V(x,y) = E(x,y)$ and $\phi_E(x,y,y,z)=E(x,y)\wedge E(y,z)$. Then for every graph $G$, $F(G)$ is the graph whose node-set are the edges of $G$ and that has an edge joining $(u,v)$ and $(u',v')$ whenever $v=u'$.  The existence of right-adjoint of arc-graph functor has been used in \cite{KrokhinOWZ20} to improve the state-of-the-art in approximate graph coloring.
\end{example}

It follows from the next proposition that every Pultr functor can be simulated by a \edatalog program.

\begin{proposition}
For every Pultr functor $F$ $(\phi_V,\phi_E)$
there is a weakly acyclic $\edatalog$ program $P$
such that for all digraphs $I$, $P(I) \leftrightarrow_{adom(I)} F(I)$.
\end{proposition}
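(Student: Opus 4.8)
The plan is to build, directly from the two CQs $\phi_V$ and $\phi_E$, a weakly acyclic $\edatalog$ program that manufactures one fresh element for each vertex of $F(I)$ (i.e.\ for each satisfying assignment of $\phi_V$) and then wires these elements together exactly as $\phi_E$ prescribes. Write $\phi_V(\mathbf{x}) = \exists\mathbf{u}\,\psi_V(\mathbf{x},\mathbf{u})$ and $\phi_E(z_1,\dots,z_{2k}) = \exists\mathbf{v}\,\psi_E(z_1,\dots,z_{2k},\mathbf{v})$, where $z_1,\dots,z_{2k}$ is the (possibly repeated) head-variable tuple of $\phi_E$ and $\psi_V,\psi_E$ are conjunctions of $\bfS_{in}$-atoms. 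I take $\bfS_{in}=\{V,E\}$, $\bfS_{out}=\{V_{out},E_{out}\}$, and a single auxiliary relation $T$ of arity $k+1$, whose intended meaning is that $T(w,a_1,\dots,a_k)$ asserts that the fresh element $w$ represents the vertex $(a_1,\dots,a_k)$ of $F(I)$.

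First I would write three groups of rules. The only existential rule creates vertices,
\[ \exists w\, T(w, x_1,\dots,x_k) \colondash \psi_V(x_1,\dots,x_k,\mathbf{u}); \]
one rule exposes them, $V_{out}(w) \colondash T(w, x_1,\dots,x_k)$; and the edge rule is
\[ E_{out}(w,w') \colondash T(w, z_1,\dots,z_k),\ T(w', z_{k+1},\dots,z_{2k}),\ \psi_E(z_1,\dots,z_{2k},\mathbf{v}), \]
where the repetitions among $z_1,\dots,z_{2k}$ induced by the head of $\phi_E$ are inherited verbatim (so a variable shared between the two $T$-atoms forces the corresponding coordinates to agree, exactly as the ``$v=u'$'' condition in the arc-graph functor). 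Next I would check weak acyclicity, which is immediate: the only $\bfS_{aux}$-relation is $T$, the only rule with $T$ in its head is the vertex-creation rule, and that rule's body contains no $\bfS_{aux}$-atom; hence the dependency graph on $\bfS_{aux}$-positions has no edges at all, let alone a cycle through a special edge.

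The heart of the argument is the correctness claim $P(I)\leftrightarrow_{adom(I)}F(I)$, which I would establish by exhibiting homomorphisms in both directions. For $P(I)\to F(I)$, every element of $adom(P(I))$ is a fresh $w$ introduced together with a unique packed tuple $\mathbf{a}$ satisfying $\phi_V$; mapping $w\mapsto\mathbf{a}$ sends $V_{out}$ into the vertex set of $F(I)$, and every derived $E_{out}(w,w')$ comes from a firing that witnesses $\phi_E(\mathbf{a},\mathbf{b})$, so it maps to an edge of $F(I)$. For $F(I)\to P(I)$, each vertex $\mathbf{a}$ of $F(I)$ satisfies $\phi_V$, hence admits a $\psi_V$-witness, hence the vertex rule fires and yields some $w_{\mathbf{a}}$ with $T(w_{\mathbf{a}},\mathbf{a})$; choosing one such witness and mapping $\mathbf{a}\mapsto w_{\mathbf{a}}$, any edge $(\mathbf{a},\mathbf{b})$ of $F(I)$ satisfies $\phi_E(\mathbf{a},\mathbf{b})$, so by the semantics of CQs the edge rule fires on $T(w_{\mathbf{a}},\mathbf{a})$, $T(w_{\mathbf{b}},\mathbf{b})$ and the corresponding $\psi_E$-witness, producing $E_{out}(w_{\mathbf{a}},w_{\mathbf{b}})$. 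Finally, since the $\bfS_{out}$-reduct $P(I)$ lives entirely on the fresh nulls $w$ and the vertices of $F(I)$ are taken to be $k$-tuples disjoint from $adom(I)$, no element of $adom(I)$ occurs in either instance, so $\leftrightarrow_{adom(I)}$ reduces here to plain $\leftrightarrow$.

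I expect the main obstacle to be conceptual rather than computational: reconciling the set-theoretic vertices of $F(I)$, which are $k$-tuples and not domain elements, with the element-based semantics of $\edatalog$. The device that resolves this is precisely the auxiliary relation $T$, which ``packs'' each satisfying tuple into a single existentially introduced witness. The only point needing care is that the vertex-creation rule may spawn several copies of the same tuple-vertex (one per $\psi_V$-witness $\mathbf{u}$), but this redundancy is invisible up to homomorphic equivalence, as the two homomorphisms above make explicit.
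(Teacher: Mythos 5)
Your proposal is correct and follows essentially the same route as the paper: both introduce a fresh existential witness for each satisfying tuple of $\phi_V$, record the tuple via auxiliary relations (you use one $(k+1)$-ary relation $T$ where the paper uses $k$ binary relations $R_1,\dots,R_k$, a trivially equivalent encoding), and derive output edges by joining two such witnesses against $\phi_E$. Your write-up is in fact somewhat more complete than the paper's, which omits the explicit weak-acyclicity check and the two witnessing homomorphisms that you spell out.
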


\begin{proof}
  Let $F=(\phi_V,\phi_E)$ where $\phi_V$ be $k$-ary and $\phi_E$ $2k$-ary.
  Take $P=(\bfS_{in},\bfS_{out}, \bfS_{aux}, \Sigma)$ where $\bfS_{in}=\{V_{in},E_{in}\}$,
  $\bfS_{out}=\{V_{out},E_{out}\}$, 
  $\bfS_{aux} = \{R_1, \ldots, R_k\}$ and
  $\Sigma$ consists of:
  \[\begin{array}{lll}
    \exists y \bigwedge_{i=1\ldots k} R_i(y,x_i) &
       \colondash& \phi_V(x_1, \ldots, x_n) \\
    E(u,v) &\colondash& 
       \phi_E(x_1, \ldots, x_k, y_1, \ldots, y_k),
       \bigwedge_{i=1\ldots k} (R_i(u,x_i)\land
        R_i(v,y_i))
    \end{array}\]
 
    It is easy to see that, for all digraphs $I$, $P(I)\leftrightarrow F(I)$. In fact,
    since
    neither $P(I)$ nor $F(I)$ contains any elements
    from $adom(I)$, we have that
    $P(I) \leftrightarrow_{adom(I)} F(I)$.
\end{proof}

\end{document}